\newcommand{\ueq}[1][]{%
  \if\relax\detokenize{#1}\relax
    \sbox0{$\underbrace{=}_{}$}%
    \mathrel{\mathmakebox[\wd0]{=}}
  \else
    \mathrel{\underbrace{=}_{\mathclap{#1}}}
  \fi}
\newcommand {\ctn}{\cite}
\renewcommand{\d}{\ensuremath{\delta}}
\newcommand{\e}{\ensuremath{\epsilon}}
\newcommand{\boldm}{\mathbf m}
\newcommand{\be}{\pmb\e}
\newtheorem{theorem}{Theorem}
\newcommand{\topline}{\hrule height 1pt width \textwidth \vspace*{2pt}}
\newcommand{\botline}{\vspace*{2pt}\hrule height 1pt width \textwidth \vspace*{4pt}}
\newtheorem{algo}{Algorithm} 
\numberwithin{equation}{section}
\numberwithin{algo}{section}
\numberwithin{table}{section}
\numberwithin{figure}{section}
\newtheorem{lemma}{Lemma}[section]
\begin{document}

\title{\textbf{A Brief Review of Optimal Scaling of the Main MCMC Approaches and Optimal Scaling of Additive 
TMCMC Under Non-Regular Cases}}
\author{ Kushal Kr. Dey$^{\dag}$ , Sourabh Bhattacharya$^{\ddag, +}$ }
\date{}
\maketitle
\begin{center}
$^{\dag}$  University of Chicago \\
$^{\ddag}$ Indian Statistical Institute\\
$+$ Corresponding author:  \href{mailto: bhsourabh@gmail.com}{bhsourabh@gmail.com}
\end{center}

\begin{abstract}

Transformation based Markov Chain Monte Carlo (TMCMC) was proposed by 
\ctn{Dutta13} as an efficient alternative to the Metropolis-Hastings  algorithm, 
especially in high dimensions.
The main advantage of this algorithm is that it simultaneously updates all components of a high dimensional 
parameter using appropriate move types defined by  deterministic transformation of a single random variable. This results in reduction in time complexity at each step of the chain and enhances the acceptance rate. \\

In this paper, we first provide a brief review of the optimal scaling theory for various existing MCMC approaches, 
comparing and contrasting them with the corresponding TMCMC approaches.The optimal scaling of 
the simplest form of TMCMC, namely \emph{additive TMCMC}, has been  studied extensively for
 the Gaussian proposal density in \ctn{Dey13}. Here, we discuss diffusion-based optimal scaling
 behavior of additive TMCMC for non-Gaussian proposal densities -- in particular, uniform, Student's $t$ 
 and Cauchy proposals. Although we could not formally prove our diffusion result for the Cauchy proposal, simulation based results lead us to {\it conjecture} that at least the recipe for obtaining general optimal scaling and optimal acceptance rate holds for the Cauchy case as well. We also consider diffusion based optimal scaling of TMCMC when the target density is discontinuous. Such non-regular situations have been studied in the case of Random Walk 
 Metropolis Hastings (RWMH) algorithm by \ctn{NealRoberts11} using expected squared jumping 
 distance (ESJD), but the diffusion theory based scaling has not been considered.  \\

%

We compare our diffusion based optimally scaled TMCMC approach with the ESJD based optimally scaled RWM with 
simulation studies involving several target distributions and proposal distributions including the challenging Cauchy 
proposal case, showing that additive TMCMC outperforms RWMH in almost all cases considered.  \\


{\bf Keywords:} {\it Additive Transformation; Diffusion; It\^{o} Formula; Optimal Scaling; Non-regular; 
Transformation based Markov Chain Monte Carlo.}
\end{abstract}

\section{Introduction}

Markov Chain Monte Carlo (MCMC) techniques have revolutionized the statistical literature over the past two decades. 
It is extensively used today in Bayesian computation, systems biology, statistical physics, among many other fields. 
The simplest and the most popular MCMC technique in high dimensions is the Random Walk Metropolis Hastings (RWMH) algorithm. In this algorithm, at each iteration of the chain, a move is suggested based on a proposal density centered at the current position of the chain.

%
%
In the RWMH algorithm, the most popular choice of proposal density is the Gaussian distribution. However the variance or the scaling factor of this Gaussian proposal density is of utmost importance. If the variance is small, the magnitude of jumps of the chain would be smaller and the chain converges slowly. If the variance is large, we end up rejecting too 
many proposed moves. Considering a diffusion based approach, \ctn{Roberts97a} proposed optimal scaling 
(variance) of the Gaussian proposal for target distributions with $iid$ components. Later, optimal scalings were derived for more general classes of target densities (see \ctn{Bedard2007}, \ctn{Pillai2011}, \ctn{BedRose}, \ctn{Bedard2009}). The optimal acceptance rate, corresponding to the optimal scaling, for most set-ups considered, is 0.234. 

In most high-dimensional and realistic scenarios, the RWM algorithm, as well as other Metropolis Hastings (M-H) algorithms exhibit relatively poor acceptance rates when all the variables are jointly updated at a time. Sequential updating  can maintain high acceptance rates, but can be computationally burdensome in the extreme. Moreover, such algorithms usually have poor mixing properties due to high posterior correlations between the parameters.
In order to counter these problems effectively, \ctn{Dutta13} introduced the general Transformation based Markov Chain Monte Carlo (TMCMC) algorithm. 
In a nutshell, TMCMC constructs appropriate ``move types", within which simple deterministic transformations of a single random variable is used to simultaneously update all the parameters. 

This strategy has been shown to dramatically improve the acceptance rate and reduce computational burden. 
Properties like aperiodicity, Harris recurrence, irreducibility and geometric ergodicity of the additive TMCMC algorithm have already been studied in great detail; see \ctn{Dutta13}, \ctn{Dey13b}. All these studies show TMCMC to be a competent alternative to RWM, specially when the dimensionality is very high. 

We briefly describe TMCMC in the next section. 

\section{TMCMC and Optimal Scaling Theory}
\label{sec:definition}

Consider simulation from a $d$ dimensional distribution
and assume that we are currently at a point 
$x= (x_{1}, \ldots, x_{d})$.
Let us define the $d$-dimensional random vector $b=(b_{1}, \ldots, b_{d})$, such that, for $i=1,\ldots,d$, 
\begin{equation}
b_{i} =\left\{\begin{array}{ccc} +1 & \mbox{with probability} & p_i; \\
  0 & \mbox{with probability} & 1-p_i-q_i;\\
 -1 & \mbox{with probability} & q_i,
 \end{array}\right.
 \label{eq:b_tmcmc}
\end{equation}
where, for each $i$, $0<p_i,q_i<1$ such that $p_i+q_i\leq 1$.
Let $\epsilon\sim \varrho(\epsilon)=\tilde\varrho(\epsilon)I_{\mathbb S}(\epsilon)$, 
where $\tilde\varrho (\cdot)$ is any arbitrary density
supported on some suitable space $\mathbb S$; here $I_{\mathbb S}(\cdot)$ denotes the indicator function of $\mathbb S$.

TMCMC uses moves of the following type:
\begin{equation}
(x_{1}, \ldots, x_{d}) \rightarrow (T^{b_1}(x_{1},\epsilon), \ldots, T^{b_d}(x_{d},\epsilon)), 
\label{eq:tmcmc_move}
\end{equation}
where $T^{+1}(x_i,\epsilon)$, the forward transformation to coordinate $x_i$, and $T^{-1}(x_i,\epsilon)$, the backward
transformation to $x_i$, are bijective for fixed $\epsilon$ and injective  
for fixed $x_i$, satisfying 
\begin{equation}
T^{+1}(T^{-1}(x_i,\epsilon),\epsilon)=T^{-1}(T^{+1}(x_i,\epsilon),\epsilon)=x_i.
\label{eq:transformation1}
\end{equation}
The transformation  
\begin{equation}
T^{0}(x_i,\epsilon)\equiv x_i,~\forall\epsilon\in\mathbb S, 
\label{eq:transformation2}
\end{equation}
indicates no change to 
the coordinate $x_i$ while updating the vector $x=(x_{1}, \ldots, x_{d})$ to $x^*=\mathcal T_{b}(x,\epsilon)$,
where $\mathcal T_{b}(x,\epsilon)$ denotes the updated vector $(T^{b_1}(x_{1},\epsilon), \ldots, T^{b_d}(x_{d},\epsilon))$.
Assuming for simplicity of illustration that $p_i=q_i$ for $i=1,\ldots,d$, 
move (\ref{eq:tmcmc_move}) is to be accepted with probability
\begin{equation}
\alpha=\min\left\{1,\frac{\pi(x^*)}{\pi(x)}J^b(x,\epsilon)\right\},
\label{eq:acc_tmcmc_general}
\end{equation}
where $J^b(x,\epsilon)=\left|\frac{\partial(\mathcal T^b(x,\epsilon),\epsilon)}{\partial(x,\epsilon)}\right|$
is the Jacobian of the transformation associated with $\mathcal T^b$.
For general $(p_1,\ldots,p_d)$ and $(q_1,\ldots,q_d)$, the acceptance ratio depends upon 
these probabilities; see \ctn{Dutta11}.

For a wide range of target densities, \ctn{Dey13} derived the optimal scaling of the TMCMC algorithm with additive transformation.

\begin{equation}
T^{+1}(x_i,\epsilon) = x_i + \epsilon  \hspace{1 cm} T^{-1}(x_i,\epsilon) = x_i - \epsilon  \hspace{0.5 cm}  i=1,2,\cdots, d 
\label{eq:addtransformation}
\end{equation}

The optimal acceptance rate for the optimally scaled additive TMCMC algorithm was found to be 0.439, in contrast with 0.234, the optimal acceptance rate of the RWM algorithm.  Also the diffusion speed for TMCMC was found to be more robust to the choice of scaling, compared to RWM algorithm. Indeed,
even if the choice of the scale is suboptimal, the diffusion speed of TMCMC is not much affected, while, on the other hand,
that of RWM is significantly adversely affected by sub-optimal scalings. Since in complex, realistic problems, 
determination
of the exact optimal scaling can prove to be a difficult exercise, this robustness property of TMCMC is a strong advantage. \\

In all the above considerations, it was inherently assumed that the proposal distribution was Gaussian. 
A common criticism of the Gaussian proposal is that it is light-tailed and hence exploration of the state space 
would be slow. Starting from an initial point $x_{0}$, the chain would usually move to points
close to $x_{0}$, and in the rare cases when it makes a jump of large magnitude to some point $y$, distant 
from $x_{0}$, the acceptance rate $\min \left\{1,\frac{\pi(y)}{\pi(x_{0})} \right\}$ would usually turn out to be very small, 
and hence the probability of accepting such a jump would be very low. This is one of the prime reasons why the 
RWM or the TMCMC chain with the Gaussian proposal have slow convergence rate and also high autocorrelation time.

One way to resolve the aforementioned problem is to consider 
the uniform or heavy tailed proposal distributions like the Cauchy distribution instead of the light tailed Gaussian proposal. 
However, with the Cauchy proposal distribution, the moments are not defined and hence the Taylor's series expansions necessary for proving diffusion based optimal scaling
results are no longer valid.
This is the case even if the usual regularity conditions (see, for example, Theorem 4.1 of \ctn{Dey13}
in the TMCMC context and \ctn{Roberts97a} in the context of RWM)
are satisfied.

Additionally, if some of the regularity conditions are violated, for example, 
if the support of the target density is bounded (discontinuous target density on $\mathbb R^d$, where
$\mathbb R$ is the real line and $d$ is the dimensionality of the target distribution), the problem of optimal 
scaling poses further challenges. 

To avoid these technical difficulties associated with the traditional diffusion based approach, 
\ctn{NealRoberts11} obtained optimal scaling for RWM corresponding to several non-Gaussian proposal densities 
by maximizing the expected squared jumping distance (ESJD), defined by 
\begin{equation}
ESJD = E \left [ \sum_{i=1}^{d} ( X_{1i}-X_{0i})^{2} \right ]. 
\label{eq:esjd}
\end{equation}
In the Gaussian proposal case \ctn{NealRoberts11} show that their ESJD based approach
coincides with the diffusion based approach.

In this article, we extend the diffusion based approach to optimal scaling of additive TMCMC in 
situations where (a) all the regularity conditions of Theorem 4.1 of \ctn{Dey13} are satisfied
but the proposal distribution is non-Gaussian, and (b) the non-regular
cases consisting of target densities with bounded support, the proposal distribution being non-Gaussian. 
Before we formalize our approach, we first provide a brief review of optimal scaling theory
for various approaches of MCMC, including TMCMC, to acquaint the readers with the basic concepts. 
Thus, our contribution in this article is two-fold: reviewing and discussing the optimal scaling literature for varieties of
MH and TMCMC based methods, and developing a novel diffusion based approach to 
optimal scaling in non-regular cases for additive TMCMC.

\section{An overview of optimal scaling theory for various existing MCMC approaches}
\label{sec:overview}

\subsection{Optimal scaling for the RWM approach}
\label{subsec:MH}
Assume that $\pi:\mathbb R^d\mapsto\mathbb R_+$ is the target density, and $x_t=(x_{t,1},\ldots,x_{t,d})$ is the 
MCMC realization at the $t$-th iteration, and that at the next iteration, the value 
$y_{t+1}=(y_{t,1},\ldots,y_{t,d})$ is proposed from some
density $q(x_t,\cdot)$, where, for any $x,y$, $q(x,y)$ is the conditional density of $y$ given $x$.
The Metropolis Hastings (MH) approach either accepts $x_{t+1}=y_{t+1}$ with probability
\begin{equation}
\alpha(x_t,y_{t+1})=\min\left\{1,\frac{\pi(y_{t+1})q(y_{t+1},x_t)}{\pi(x_t)q(x_t,y_{t+1})}\right\},
\label{eq:MH_acceptance_prob}
\end{equation}
or remains at the current value with $x_{t+1}=x_t$. Note that if $q(x_t,y_{t+1})=q(y_{t+1},x_t)$, that is, if $q$ is symmetric, then the ratio $q(y_{t+1},x_t)/q(x_t,y_{t+1})$ cancels in the acceptance ratio, thus simplifying the proceedings. The random walk proposal of the form $q(x,y)\equiv q(|y-x|)$, where $q(\cdot)$ is symmetric about zero, 
is an example of such a symmetric proposal, and has become the default proposal mechanism for MCMC simulation, and is known as the RWM algorithm. Thus, in RWM, $y_{t+1}$ is of the form $y_{t+1}=x_t+\epsilon_{t+1}$, where
$\left\{\epsilon_t:t=1,2,\ldots\right\}$ are $iid$ with some symmetric distribution. The most popular choice of
such symmetric distribution is $N_d(0,\sigma^2I_d)$, the $d$-variate normal distribution with mean zero and
covariance matrix $\sigma^2I_d$, where $\sigma^2>0$ and $I_d$ is the $d$-dimensional identity matrix. The convergence
properties of the resulting RWM crucially depend on the chosen value of $\sigma^2$; too small values
leads to large acceptance rates but very little movement of the chain, and too large values lead to small acceptance
rates and only occasional movement of the chain, both of which slow down convergence, and hence, must be avoided. 
This so-called ``Goldilocks principle" is not a modern day observation; indeed, this has been recognized even by 
\ctn{Metropolis53}, who assumed the $U(-a, a)$ distribution of the $\epsilon_t$'s with $a>0$, and noted that too small 
or too large values of $a$ must be avoided. 

\subsubsection{The $iid$ target density set-up}
\label{subsubsec:iid_target}
Modern day research has of course attempted to make precise statements regarding the optimal value of $\sigma^2$, 
when $d$ is large enough. This study was initiated by \ctn{Roberts97} who considered a simple $iid$ product target density
of the form $\pi(x)=\prod_{i=1}^df(x_i)$ and a normal random proposal with $\sigma^2$ of the form $\frac{\ell^2}{d}$. 
In this situation, letting $U^d_t=X_{[dt],1}$ (where $[\cdot]$ denotes the integer part) be the sped up first
component of the $d$-dimensional Markov chain, which proposes $d$ jumps in every time unit, it can be shown
that under appropriate sufficient conditions, $U^d_t$ eventually becomes a continuous time diffusion process 
as $d\rightarrow\infty$, which has stationary distribution $f$ and speed measure 
$g(\ell)=2\ell^2\Phi\left(-\sqrt{\mathbb I}\ell/2\right)$, where 
$\mathbb I=E_f\left(\frac{f'(X)}{f(X)}\right)^2=\int_{-\infty}^{\infty}\left(\frac{f'(x)}{f(x)}\right)^2f(x)dx$.
The speed measure is related to the autocorrelation of the underlying Markov chain; in fact, high speed 
is equivalent to low autocorrelation (see \ctn{Roberts01}). Thus, it makes sense to maximize the speed measure
with respect to $\ell$. As such, the optimal value of $\ell$ is given by $\ell_{opt}=2.381/\sqrt{\mathbb I}$
and the optimal acceptance rate is given by $2\Phi\left(-\sqrt{\mathbb I}\ell_{opt}/2\right)\approx 0.234$.
This optimal acceptance rate need not be strictly enforced, however, as \ctn{Roberts01} demonstrate, using 
a measure of efficiency which is the reciprocal of integrated autocorrelation time, 
that the RWM proposal may be tuned to achieve an acceptance rate between $0.15$ to $0.5$, which would make 
the algorithm around 80\% efficient.

\subsubsection{The set-up where target density is the product of independent but non-identical densities}
\label{subsusbec:non_iid_target}

Although the aforementioned optimal scaling theory is built on the assumption of the simple (and unrealistic)
assumption of the product of $iid$ densities as the target, this has been extended to more realistic set-ups, such as
product of independent but non-identical densities with special forms. \ctn{Roberts01} considered the form $\pi(x)=\prod_{i=1}^dC_if(C_ix_i)$, where $C_1,\ldots,C_d$ are $iid$ realizations
from some distribution. In this case, the optimal scaling result for the $iid$ set-up continues to hold, albeit
the diffusion speed is reduced due to division by an ``inhomogeneity factor" given by 
$c=E\left(C^2_1\right)/E\left[\left(C_1\right)\right]^2$, which is greater than or equal to one. This factor
is responsible for slowing down the algorithm as the variability among $C_1,\ldots,C_d$ increases.

\ctn{Bedard2007}, \ctn{Bedard2008}, \ctn{BedRose} considered a similar framework, but different powers of $d$
for the co-ordinate wise target densities. Their main result is that if the individual components are dominated
by the sum of all the components, then the optimal acceptance rate remains $0.234$, but on the other hand,
if any component is comparable to the sum, then the optimal acceptance is reduced.

\subsubsection{The dependent set-up}
\label{subsubsec:dependent_target}
Although the aforementioned optimal scaling theories assume the target to be at most inhomogeneous
product of $d$ densities, as shown in \ctn{Rosenthal11} (see also \ctn{Roberts01}), the theory of \ctn{Roberts01} for
independent but non-identical target density can be adapted to the case of $d$-variate normal
target distributions. Indeed, following \ctn{Rosenthal11}, let us assume that the target is $N\left(0,\Sigma\right)$, 
where $\Sigma$ 
is a $d$-dimensional covariance
matrix, and the proposal is of the form $y_{t+1}=x_t+\epsilon_{t+1}$, where 
$\epsilon_t\stackrel{iid}{\sim}N\left(0,\tilde\Sigma\right)$,
where $\tilde\Sigma$ is the appropriate covariance matrix to be determined by the optimal scaling theory. It can be seen
that the problem can be equivalently formulated as considering the target to be $N\left(0,\Sigma\tilde\Sigma^{-1}\right)$
and the normal random walk covariance to be the $d$-dimensional identity matrix. Then, in the 
form $\pi(x)=\prod_{i=1}^dC_if(C_ix_i)$, $C_i=\sqrt{\lambda_i}$, where $\lambda_1,\ldots,\lambda_d$ are the eigenvalues
of $\Sigma\tilde\Sigma^{-1}$. As $d\rightarrow\infty$, this corresponds to the case where $C_1,\ldots,C_d$ are random
with $E(C_1)=\frac{1}{d}\sum_{i=1}^d\sqrt{\lambda_i}$ and $E(C^2_1)=\frac{1}{d}\sum_{i=1}^d\lambda_i$. In this case,
the inhomogeneity factor is approximately given by 
$c=d\left(\sum_{i=1}^d\lambda_i\right)/\left(\sum_{i=1}^d\sqrt{\lambda_i}\right)^2$.
It is thus clear that the diffusion speed is maximized when the above eigenvalues are all equal, which implies
that one must set $\tilde\Sigma\propto\Sigma$. Applying the optimal scaling theory for the $iid$ case one then obtains the
value of the proportionality constant to be $\left(2.38\right)^2/d$.

\ctn{Pillai2011} consider a more realistic and general dependent set-up where the joint target density 
is absolutely continuous with respect to a Gaussian measure, and even in their case, 
the optimal acceptance rate turned out to be $0.234$ for normal RWM proposals.

\subsection{Optimal scaling for Metropolis within Gibbs}
\label{subsec:mh_gibbs}

\ctn{NealRoberts} investigated optimal scaling in the Metropolis within Gibbs context, where in
any given iteration, only a fixed proportion $c_{d}$ of the $d$ coordinates are updated using RWM, leaving the remaining
co-ordinates unchanged. 
Here $c_{d}$ is a function of $d$ and it is assumed that as $d \rightarrow \infty$, 
$c_{d} \rightarrow c$, for some $0<c\leq 1$. 
To analytically represent the transitions, first let for $i=1,\ldots,d$, 
\begin{eqnarray}
\mathbb{\chi}_{i} &=& 1 \hspace{0.5 cm} \mbox{if transition takes place in the} \ \  i^{th} \ \ \mbox{coordinate} 
\nonumber \\
&=& 0 \hspace{0.5 cm} \mbox{if no transition takes place in the}\ \  i^{th} \ \ \mbox{coordinate}. 
\label{eq:chi_definition}
\end{eqnarray}
Then, 
\begin{equation}
P (\mathbb{\chi}_{i}= 1) = c_{d}; \ \ i =1,\ldots,d,
\label{eq:chi_distribution}
\end{equation}
and the transition is given by
\begin{equation}
(x_{1},\ldots,x_{d}) \rightarrow (x_{1}+\mathbb{\chi}_{1}\epsilon_1,\ldots,x_{d}+ \mathbb{\chi}_{d}\epsilon_d),
\label{eq:transition1}
\end{equation}
where, for $i=1,\ldots,d$, $\epsilon_i\stackrel{iid}{\sim}N\left(0,\frac{\ell}{d}\right)$.
Assuming the target density to be a product of $iid$ densities, \ctn{NealRoberts} obtained, in the RWM within
Gibbs set-up, the optimal acceptance rate $0.234$. It can be verified that the same optimal acceptance
rate is achieved even for the target densities that are products of independent but non-identical, and for
dependent target densities discussed above.

\ctn{Dey13} consider a similar set-up under the additive TMCMC within Gibbs premise. In their case,
the transition can be represented as
\begin{equation}
(x_{1},\ldots,x_{d}) \rightarrow (x_{1}+\mathbb{\chi}_{1}b_{1}\epsilon,\ldots,x_{d}+ \mathbb{\chi}_{d}b_{d}\epsilon),
\label{eq:transition2}
\end{equation}
where $\epsilon\equiv\frac{\ell}{\sqrt{d}}\epsilon^*$, with $\epsilon^*\sim N(0,1)I_{\{\epsilon^*>0\}}$.
\ctn{Dey13} show that in this case, the optimal acceptance rate is $0.439$ for all the aforementioned
forms of the target densities. In the simulation studies reported in \ctn{Dey13}, optimally scaled additive
TMCMC considerably outperformed optimally scaled RWM when all the variables are updated in every iteration
in terms of various measures of convergence and mixing, in particular, the Kolmogorov-Smirnov distance 
of the Markov chains from the target distributions.
Hence, one can expect far superior performance of TMCMC even if a proportion of the variables is updated in every iteration.

\subsection{Optimal scaling for the Metropolis-Adjusted Langevin Algorithm (MALA)}
\label{subsec:mala}

One way to simulate from the target density $\pi$ without resorting to the traditional MH method is
to simulate from the discretized version of some appropriate diffusion equation having stationary
distribution $f$. Such an idea owes its origin in \ctn{Grenander94} and \ctn{Philips96}. In particular,
the Langevin diffusion $dx_t=dB_t+\frac{1}{2}\nabla\log \pi(x_t)dt$, where $B_t$ is the standard Brownian motion.
\ctn{Roberts98} note that the Langevin equation is the only non-explosive diffusion which is reversible
with respect to $f$. Implementation of the Langevin equation proceed by discretization:
$x_{t+1}=x_t+\frac{\sigma^2}{2}\nabla\log \pi\left(x_t\right)+\sigma\epsilon_t$, where $\epsilon_t$
is generated from the $d$-dimensional normal with mean zero and identity covariance matrix. In the above,
$\sigma^2$ is associated with the size of discretization, which is to be appropriately chosen.

However, the discretized version does not necessarily mimic the behaviour of the original diffusion equation.
\ctn{Roberts96} note that the discretized chain may even be transient if 
$\underset{x\rightarrow -\infty}{\lim}\sigma^2\nabla\log f(x)|x|^{-1}$ and 
$\underset{x\rightarrow \infty}{\lim}\sigma^2\nabla\log f(x)|x|^{-1}$ exist and larger than 1 and smaller
than -1, respectively. A way to rectify this is to consider the discretized version as a proposal
distribution for the MH method in the usual way; this has been suggested by \ctn{Besag94}.
The MALA based MH algorithm is given as follows.
\begin{algo}\label{algo:mala_mh} \topline MALA \botline \normalfont \ttfamily
\begin{itemize}
 \item Assume that the current state is $x= (x_{1}, \ldots, x_{d})$. 
 \item Propose $y\sim N\left(x+\frac{\sigma^2}{2}\nabla\log \pi\left(x\right),\sigma^2I_d\right)$ as the proposed value.
 \item Accept $y$ with probability
\begin{equation}
\alpha=\min\left\{1,\frac{\pi(y)}{\pi(x)}\times
\frac{\exp\left\{-\frac{1}{2\sigma^2}\left(y-x-\frac{\sigma^2}{2}\nabla\log \pi\left(x\right)\right)^2\right\}}
{\exp\left\{-\frac{1}{2\sigma^2}\left(x-y-\frac{\sigma^2}{2}\nabla\log \pi\left(y\right)\right)^2\right\}}\right\}.
\label{eq:mala_mh_accept}
\end{equation}
\item Accept $x$ with the remaining probability.
\end{itemize}
\botline \rmfamily
\end{algo}
\ctn{Robert04} show that the discretized proposal can be naturally derived by considering a Laplace approximation
perspective. 

The optimal scaling of $\sigma$ has been derived by \ctn{Roberts98} by considering $\sigma^2=\ell^2/d^{1/3}$.
This scaling order originated in physics (\ctn{Kennedy91}) and turned out to be relevant for the optimal
scaling investigation. The optimal acceptance obtained by \ctn{Roberts98} in the $iid$ set-up is 0.574, 
which is much higher than than for RWM. Even for the independent but the non-identical set-up considered
by \ctn{Roberts01}, the optimal acceptance rate turned out to be 0.574. Perhaps not surprisingly, 
the acceptance rate remains the same in the general dependent set-up where the joint target density
is absolutely continuous with respect to a Gaussian measure; see \ctn{Pillai2012}. 

Thus, in all the cases considered so far, the MALA significantly outperforms in terms of acceptance rate.
However, MALA is not geometrically ergodic when $\nabla f(x)\rightarrow 0$ as $\|x\|\rightarrow\infty$ 
(\ctn{Roberts96}), although in this situation the MALA resembles the RWM, which is geometrically ergodic
under relevant sufficient conditions (see, for example, \ctn{Jarner00}). Thus, MALA need not always be superior 
to RWM in terms of performance. 

It is useful to note that a TMCMC version of the Langevin diffusion can also be considered as follows.   
Suppose that we are simulating from a $d$ dimensional space (usually $\mathbb{R}^{d}$). 
Let us define $d$ random variables $b_{1}, \ldots, b_{d}$ in the same way as (\ref{eq:b_tmcmc}). 
Then TMCMC based on the discretized Langevin proposal, which we refer to as TMCMC-adjusted
Langevin algorithm (TALA) is given as follows:
\begin{algo}\label{algo:mala_tmcmc} \topline TALA \botline \normalfont \ttfamily
\begin{itemize}
 \item Assume that the current state is $x= (x_{1}, \ldots, x_{d})$ and let $b_1$ and 
 $\epsilon_1\sim q(\cdot)I_{\{\epsilon_1>0\}}$
 be associated with the current proposed value, where $q(\cdot)$ is any arbitrary univariate density.
 \item Propose $b_2$ and $\epsilon_2\sim q(\cdot)I_{\{\epsilon_2>0\}}$. Set
 $y=x+\frac{\sigma^2}{2}\nabla\log \pi\left(x\right)+\sigma b_2\epsilon_2$ as the proposed value.
 \item Accept $y$ with probability
\begin{equation}
\alpha=\min\left\{1,\frac{P(b_1)}{P(b_2)}\times\frac{\pi(y)}{\pi(x)}\times\frac{q(\epsilon_1)}{q(\epsilon_2)}\right\},
\label{eq:mala_tmcmc_accept}
\end{equation}
where for any $b$ of the form (\ref{eq:b_tmcmc}), $P(b)$ denotes the probability of $b$.
\item Accept $x$ with the remaining probability.
\end{itemize}
\botline \rmfamily
\end{algo}
Observe that unlike the original TMCMC principle, the acceptance ratio is not free of the proposal density. In fact,
the ratio $q(\epsilon_1)/q(\epsilon_2)$ is an adjustment for the issue that for TALA we do not use the inverse of the
forward transformation to move backward using the same $\epsilon$ used in the forward direction, unlike the
original TMCMC principle. The reason for not using inversion (and the same $\epsilon$) is that bijection associated with the
transformation in this case is not assured for general target densities. However, unlike MALA, the acceptance ratio of 
TALA provided in (\ref{eq:mala_tmcmc_accept}) does not require evaluation
of the gradient, resulting in computational simplicity. Note that in practice the gradient is usually
approximated numerically, and indeed for simulation purpose a small margin of error is permissible, 
but it is desirable to evaluate the acceptance rate without any error. Thus, from this perspective, eliminating
the gradient based calculations is important, which TALA achieves. Also note that if $p_i=1/2$ for all $i$ in 
(\ref{eq:b_tmcmc}), 
then the ratio $P(b_1)/P(b_2)$ cancels in the acceptance ratio, resulting in further simplification. 

Optimal scaling for TALA is an interesting challenge which we shall handle. We anticipate
that the optimal acceptance rate of TALA will be much higher than that of MALA
because of the drastic dimension reduction achieved by updating all the variables using a single random variable.

\subsection{Optimal scaling in hybrid Monte Carlo}
\label{subsec:optimal_hmc}

The hybrid Monte Carlo (HMC) methods, introduced by \ctn{Duane87}, is a method of MCMC simulation from 
the target distribution $\pi$ that considers as proposal a discretized version of the solution
of the deterministic Hamiltonian equations from physics and uses the MH acceptance probability to accept
the proposed value. Briefly, 
one may imagine a
dynamical system where $x(t)\in\mathbb R^d$ is likened to the $d$-dimensional position vector of a body of particles
at time $t$. Also, let $v(t)=\dot{x}(t)=\frac{d{x}}{dt}$ be the speed vector of the
particles, $\dot{v}(t)=\frac{d{v}}{dt}$ be the acceleration vector, and 
$\Vec{F}$ be the force exerted on the particles. Thanks to Newton's law of motion, 
$\Vec{F}=\boldm\dot{v}(t)=(m_1\dot{v_1},\ldots,m_d\dot{v_d})(t)$, where $m\in\mathbb R^d$
is a mass vector. From the simulation perspective, the momentum vector, $p=m v$ 
may be interpreted as a set of auxiliary variables that facilitates simulation from $\pi(x)$.

The kinetic energy of the system is defined as $W(p)=p'M^{-1}p$, where $M$ is the mass matrix.
In general, $M$ is usually chosen to be a diagonal matrix. 
The potential energy field of the system is defined as $U(x)=-\log\pi(x)$, which now
connects our target density of interest to the dynamical system. The total energy (Hamiltonian function)
is given by $H(x,p)=U(x)+W(p)$, which is used to build a joint distribution over the phase-space $(x,p)$.
The joint distribution is of the form
\begin{equation}
f(x,p)\propto\exp\left\{-H(x,p)\right\}=\pi(x)\exp\left(-p'M^{-1}p/2\right),
\label{eq:phase_space_dist}
\end{equation}
so that simulating jointly from $f(x,p)$ by some appropriate MCMC mechanism and discarding the 
corresponding simulations of $p$ yields samples from $\pi$. 

The essence of HMC lies in the construction of a novel proposal strategy that hinges upon 
Newton's law of motion, derived from the law of conservation of energy. These admit representation in the 
form of the Hamiltonian equations, given by 
\begin{eqnarray}
\dot{x}(t)&=&\frac{\partial H(x,p)}{\partial p}=M^{-1}p,\nonumber\\
\dot{p}(t)&=&-\frac{\partial H(x,p)}{\partial x}=-\nabla U(x),\nonumber
\end{eqnarray}
where $\nabla U(x)=\frac{\partial U(x)}{\partial x}$.
The above equations form the crux for an efficient proposal mechanism, but for being usable, discretization
is required. Indeed, these can be approximated by the so-called leap-frog algorithm (\ctn{Hockney70}), 
given by
\begin{align}
x(t+\d t)&=x(t)+\d tM^{-1}\left\{p(t)-\frac{\d t}{2}\nabla U\left(x(t)\right)\right\}\label{eq:frog1}
\\[1ex]
p(t+\d t)&=p(t)-\frac{\d t}{2}\left\{\nabla U\left(x(t)\right)+\nabla U\left(x(t+\d t)\right)\right\}\label{eq:frog2}
\end{align}
As such, given choices of $M$, $\d t$, and $L$, the HMC is then the following algorithm:
\begin{algo}\label{algo:hmc} \topline HMC \botline \normalfont \ttfamily
\begin{itemize}
 \item Initialise $x$ and draw $p\sim N(0, M)$.
 \item Assuming the current state to be $(x,p)$, do the following:
\begin{enumerate}
 \item Generate $p_1\sim N\left(0,M\right)$;
 \item Letting $(x(0),p(0))=(x,p_1)$, run the leap-frog algorithm for 
 $L$ time steps, to yield $(x_2,p_2)=\left(x(t+L\d t),p(t+L\d t)\right)$;
\item Accept $(x_2,p_2)$ with probability 
\begin{equation}
\min\left\{1,\exp\left\{-H(x_2,p_2)+H(x,p_1)\right\}\right\},
\label{eq:hmc_accept}
\end{equation}
and accept $(x,p_1)$ with the remaining probability.
\end{enumerate}
\end{itemize}
\botline \rmfamily
\end{algo}
In the above algorithm, it is not required to store simulations of $p$.
Detailed balance can be easily seen to hold by observing that the leapfrog algorithm is volume preserving 
(``sympletic") and time reversible. The other ergodic properties also easily follow.

The non-local behaviour of the leap-frog algorithm allows the algorithm to explore the state space
more efficiently compared to RWM. However, the tuning parameters of HMC, namely, $L$, $M$ and $\d t$
must be chosen carefully. For each dynamic evolution, \ctn{Cheung09} suggest selecting $L$ from a 
discrete uniform distribution on $\{1,\ldots,L_{\max}\}$, for some pre-chosen $L_{\max}$. This strategy bypasses
the issue of getting into a somewhat rare, but undesirable resonance condition (\ctn{Mackenzie89}). 
\ctn{Cheung09} also suggest selecting $M$ to be the identity matrix if the components of $x$ are of comparable scale,
which can be ensured by appropriate normalization at the initial stage.

The most challenging issue seems to be properly tuning the step size $\d t$ of the leap-frog algorithm, which 
affects the acceptance rate and convergence of the HMC algorithm in ways similar to that of the scale parameters
of RWM and MALA, and optimal choice of this parameter is of much importance. \ctn{Cheung09} suggest choosing
$\d t$ such that the empirical acceptance rate is at least $0.1$. Using heuristic arguments 
and calculations \ctn{Neal11} obtained the optimal acceptance rate $0.65$ for HMC for $\d t=O\left(d^{-4}\right)$, 
so that $\d t$ can be tuned to achieve the acceptance rate. The results obtained by \ctn{Neal11} are further validated by \ctn{Beskos13} who establish, in the case of $iid$ product density as the target, a formal theory of optimal 
scaling for HMC, considering $\d t=\ell\times d^{-4}$.

\ctn{Dutta13} show that HMC is a special case of TMCMC, where the momentum vector plays the role
of the random variables using which the relevant forward and inverse transformations are taken; 
in the Appendix we briefly touch upon the issue.
However, since the main essence of TMCMC is to update all the variables using transformations of a scalar
random variable, it is worth updating the momentum vector $p$ using a single random variable.
In this regard, we provide the TMCMC based version of HMC in Algorithm \ref{algo:hmc_tmcmc}, 
where, for simplicity we consider additive TMCMC, noting that any
valid transformation satisfying the conditions stated in \ctn{Dutta13} may be considered.
\begin{algo}\label{algo:hmc_tmcmc} \topline TMCMC based HMC \botline \normalfont \ttfamily
\begin{itemize}
 \item Let $(x_1,p_1)$ be the current value. 
 Also, let $b_1$ with probability $P(b_1)$ and $\epsilon_1\sim q(\cdot)I_{\{\epsilon_1>0\}}$ be associated
 with the current value $p_1$. 
 \item Do the following:
\begin{enumerate}
 \item Propose $b_2$ with probability $P(b_2)$ and $\epsilon_2\sim q(\cdot)I_{\{\epsilon_2>0\}}$. Set
 $\tilde p_1=p_1+\sigma b_2\epsilon_2$ as the proposed value.
 \item Letting $(x(0),p(0))=(x_1,\tilde p_1)$, run the leap-frog algorithm for 
 $L$ time steps, to yield $(x_2,p_2)=\left(x(t+L\d t),p(t+L\d t)\right)$;
\item Accept $x_2$ with probability 
\begin{equation}
\min\left\{1,\frac{P(b_1)}{P(b_2)}\times \exp\left\{-H(x_2,p_2)+H(x_1,p_1)\right\}
\times\frac{q(\epsilon_1)}{q(\epsilon_2)}\right\},
\label{eq:hmc_tmcmc_accept}
\end{equation}
and store $\tilde p_1$ as the current value for the next iteration.
\item Else accept $x_1$ with the remaining probability and store $p_1$ as the current value for the next iteration.
\end{enumerate}
\end{itemize}
\botline \rmfamily
\end{algo}
Given fixed scalings of the additive TMCMC above, due to drastic dimension reduction of the
momentum vector $p$, one may expect higher optimal acceptance rate for
the TMCMC based HMC algorithm compared to the original HMC algorithm 
with respect to optimal scaling of $\d t$. Because of dimension reduction, the TMCMC-fed HMC method is 
also expected to have diffusion speed
that is far more robust compared to that of the original HMC procedure, as in the case of optimal
scaling of additive TMCMC relative to RWM. If optimal scaling of both $\d t$ and $\sigma$ is desired, 
then new issues open up, and merits detailed investigation. 

\subsection{Multiple-try MCMC}
\label{subsec:multiple_try}

By multiple-try MCMC we mean the
MCMC algorithm that selects the next proposal from a set of available, perhaps dependent,
proposals. For MH-adapted versions of such an idea, see, for example, \ctn{Liu00}, 
\ctn{Liang10}, \ctn{Martino13}. To briefly describe the main idea based on MH we consider 
$w(x,y)=\pi(x)q(x,y)\lambda(x,y)$, where $\pi$ is the target density, $q(x,y)$ is an arbitrary proposal 
satisfying $q(x,y)>0$ if and only if
$q(y,x)>0$ and $\lambda(x,y)$ is an arbitrary symmetric non-negative function such that $\lambda(x,y)>0$ whenever $q(x,y)>0$.
If the current state is $x^{(t)}=x$, then the basic multiple-try MH for the $(t+1)$-th iteration is given as follows:
\begin{algo}\label{algo:mtm} \topline Multiple-try MH \botline \normalfont \ttfamily
\begin{itemize}
\item Draw $k$ realizations, $y_1,\ldots,y_k$, from $q(x,\cdot)$.
\item Select $y$ from the set $\left\{y_1,\ldots,y_k\right\}$ with probability proportional to
$w(y_j,x)=\pi(x)q(x,y_j)\lambda(x,y_j)$; $j=1,\ldots,k$. 
\item Obtain the $(k-1)$ auxiliary variables $\tilde x_1,\ldots,\tilde x_{k-1}$ from $q(y,\cdot)$, and let $\tilde x_k=x$.
\item Accept $y$ with probability
\[\alpha=\min\left\{1,\frac{w(y_1,x)+\cdots +w(y_k,x)}{w(\tilde x_1,y)+\cdots +w(\tilde x_k,y)}\right\}.\]
\end{itemize}
\botline \rmfamily
\end{algo}
When $\lambda(x,y)=1/q(x,y)$, $w(x,y)=\pi(x)$, and in this case, the above algorithm boils down to
oriental bias Monte Carlo (\ctn{Frenkel02}) for molecular simulation. For various other versions of
multiple try MCMC, see, for example, \ctn{Liu00} and \ctn{Bedard12}. In fact,
\ctn{Bedard12} investigated scaling analysis of many variations of the above multiple-try MH method when the target $\pi$
is the product of $iid$ densities, $w(x,y)=\pi(x)$, and when the proposals
are generated from multivariate normal random walk proposals. As to be expected, the scaling constant, 
the diffusion speed, and the acceptance
rate are increasing with $k$, the number of trial proposals. As we primarily investigated, the same issue holds in the 
corresponding TMCMC case, and the optimal acceptance rate tends to 1 as $k\rightarrow\infty$, independently
of the scale of the random walk proposal. Thus, when $k$ is very large, it seems that one can achieve virtually any desired 
diffusion speed simply by choosing the scaling constant large enough. Indeed, since the algorithm is convergent, 
the close to one acceptance rate implies that one can achieve almost $iid$ samples from the target $\pi$ with large enough $k$,
where $k$ must increase at a rate faster than the scaling constant. But this of course comes at a very high
computational cost, and it is debatable whether such a multiple-try strategy is worth in practice. 
\ctn{Bedard12} also investigated optimal scaling with alternative choices of $w(x,y)$, but the weights proportional
to the target density yielded the best results.

\subsection{Delayed rejection MCMC}
\label{subsec:delayed_rejection}

The delayed rejection MCMC, which has been introduced by \ctn{Tierney99}, attempts, at any given iteration
of the algorithm, to successively improve the proposal by generating a sequence of trial values from
possibly different proposal distributions till ultimate acceptance of a trial value or till a given number, $k$, 
of trial values are generated. Further development of the method was provided by \ctn{Mira01} for fixed-dimensional
problems and by \ctn{Green01} for variable-dimensional problems. Applications of delayed rejection MH can be
found in \ctn{Harkness00}, \ctn{Ums04}, \ctn{Raggi05}, \ctn{Haario06}, \ctn{Trias09}, etc. and optimal scaling
of this method for random walk proposals when $k=2$ and the target is the product of $iid$ densities, 
has been undertaken by \ctn{Bedard14}. 
The two-step delayed rejection MH is given by the following algorithm when $x$ is the current state of the chain:
\begin{algo}\label{algo:delayed_rejection} \topline Delayed rejection MH \botline \normalfont \ttfamily
\begin{itemize}
\item Draw $y_1$, from proposal distribution $q_1(x;\cdot)$.
\item Accept $y_1$ with probability
\[\alpha_1(x;y_1)=\min\left\{1,\frac{\pi(y_1)q_1(y_1;x)}{\pi(x)q_1(x;y_1)}\right\}.\]
\item If $y_1$ is rejected, generate another trial value $y_2$ from possibly another proposal $q_2(x,y_1;\cdot)$.
\item Accept $y_2$ with probability 
\[\alpha_2(x,y_1;y_2)=\min\left\{1,\frac{\pi(y_2)q_1(y_2;y_1)[1-\alpha_1(y_2;y_1)]q_2(y_2,y_1;x)}
{\pi(x)q_1(x;y_1)[1-\alpha_1(x;y_1)]q_2(x,y_1;y_2)}\right\}.\]
\end{itemize}
\botline \rmfamily
\end{algo}
When the proposals are random walks, \ctn{Bedard12} suggest two different scalings: relatively large scale for the first
attempt, and a smaller scale for the second attempt if the first attempt leads to rejection. They also consider
two set-ups for the two proposal distributions; in one set-up they assume that $y_2$ is generated independently
of $y_1$ and in the other they consider generating $y_2$ conditionally on $y_1$ using a deterministic transformation
such that $y_2$ is generated from $q_2(x,\cdot)$. The optimal scaling results obtained by \ctn{Bedard14} are, however,
not encouraging. In the first set-up where $y_1$ and $y_2$ are generated independently, they obtained $0.234$ 
as the optimal acceptance rate for the first acceptance rate,
namely $\alpha_1$, while the second acceptance rate $\alpha_2$ converges to zero, showing that given the first proposal,
the second move is useless. For the second, dependent proposal set-up, the optimal acceptance rates for both the 
stages turned out to be $0.234$, showing that there is no improvement of the acceptance rate in the second attempt,
perhaps signifying inadequate learning from the first attempt. Since delayed rejection methods necessarily involves
much computational burden compared to the traditional RWM, the discouraging results of \ctn{Bedard12} seem to put a question mark
on the usefulness of such methods. As can be anticipated, for additive TMCMC adaptation of delayed rejection, the
corresponding acceptance rates in the two proposal set ups of \ctn{Bedard14} would be $0.439$, and would not amount
to any improvement over the usual additive TMCMC.

\subsection{Optimal scaling in adaptive MCMC methods}
\label{subsec:optimal_adaptive}

The adaptive MCMC methods are concerned with proposal distributions that are updated in every iteration based on
progressive learning with the iterations. Thus, the chain is not Markov but is so designed that asymptotically it 
becomes Markov and converges to the  target distribution. Thus, adaptive MCMC is about a family of 
Markov kernels  $\left \{ P_{\sigma} \right \}_{\lambda\in\Lambda}$, each having the same stationary
distribution $\pi$, where $\Lambda$ is an appropriate set of possible tuning parameters associated with
the possible Markov kernels. Letting $\lambda_t$ be associated with the Markov kernel at the $t$-th iteration
and $A$ be any relevant Borel set, 
we have $$P\left(X_{t+1}\in A|X_t=x,\lambda_t=\lambda,X_{t-1},\ldots,X_0,\lambda_{t-1},\ldots,\lambda_0\right)
=P_{\lambda}\left(x,A\right).$$
The choice of $\lambda_t$ is allowed to depend upon $X_{t-1},\ldots,X_0,\lambda_{t-1},\ldots,\lambda_0$, although
in practice, $\left\{(X_t,\lambda_t)\right\}_{t=0}^{\infty}$ is usually designed to be a Markov chain.
\ctn{Roberts07} prove convergence and ergodicity of the adaptive chain assuming the diminishing adaptation
condition
\begin{equation}
\underset{t\rightarrow\infty}{\lim}\underset{x}{\sup}\|P_{\lambda_{t+1}}(x,\cdot)-P_{\lambda_t}(x,\cdot)\|=0~~
\mbox{in probability}.
\label{eq:dim_adapt}
\end{equation}
and the bounded convergence condition
\begin{equation}
\left\{M_{\eta}\left(X_t,\lambda_t\right)\right\}_{t=0}^{\infty}~~\mbox{is bounded in probability},
\label{eq:bounded_convergence}
\end{equation}
with $M_{\eta}\left(X_t,\lambda_t\right)=\inf\left\{t\geq 1:P^t_{\lambda}(x,\cdot)-\pi(\cdot)\|\leq\eta\right\}$
being essentially the convergence time of $P_{\lambda}$ when started with the initial value $x$. As argued
in \ctn{Rosenthal11}, (\ref{eq:bounded_convergence}) is satisfied quite generally, except perhaps some pathological
examples, and thus the diminishing adaptation condition (\ref{eq:dim_adapt}) is more important and requires careful
designing of the adaptive scheme. 

A valid adaptive method that is very popular is to set $\lambda_t$ to be
the empirical average of $\lambda_0,\lambda_1,\ldots,\lambda_{t-1}$. Such a scheme has been used, for example, by
\ctn{Haario01} for adaptive optimal scaling with normal random walk, where at the $(t+1)$-th iteration 
the proposal $y$ is generated from $N\left(x_t,\frac{\ell^2_{opt}}{d}\Sigma_{t+1}\right)$, where
$\ell_{opt}=2.38$ is the optimal scale borrowed from the RWM based optimal scaling theory and 
$\Sigma_{t+1}$ is an estimate of the target covariance matrix, set as the empirical covariance matrix of $X_0,\ldots,X_t$. 
To prevent singularity of $\Sigma_{t+1}$, \ctn{Haario01} added a small positive quantity to its diagonal, for all
the iterations. Alternative ideas, such as a mixture distribution, may also be considered (see \ctn{Roberts09}).
Such optimal scaling based adaptive rules are expected to have an ultimate acceptance rate close to $0.234$.
There exist various modifications of the basic approach of \ctn{Haario01}; see, for example, \ctn{Haario05},
\ctn{Andrieu08}, \ctn{Craiu09}, \ctn{Roberts09}. 

\ctn{Dey13c} has constructed various adaptive versions of TMCMC, focussing particularly on additive TMCMC,
and aiming for the ultimate optimal acceptance rate $0.439$. Comparisons of adaptive additive TMCMC
with various RWM based adaptive algorithms in simulation studies led to the very interesting observation that
even for dimension as small as $d=10$, some of the RWM based adaptive algorithms failed to converge to the desired 
acceptance rate
$0.234$ even after $10^5$ iterations, while adaptive TMCMC reached its optimal acceptance rate $0.439$ 
much faster, for all the adaptive versions considered. For dimensions as high as $d=100$, the drop in efficiencies
of the RWM based algorithms in comparison to TMCMC became all the more pronounced. 
Among all the existing adaptive methods, the method of \ctn{Atchade05} based on stochastic approximation (\ctn{Robbins51})
performed the best, for both adaptive MH and adaptive TMCMC.

\subsection{Optimal scaling in Metropolis Coupled MCMC (MC$^3$)}
\label{subsec:mc3}
When the target distribution is multimodal, then the usual MCMC methods generally fail to adequately explore all
the modal regions. To combat this problem, Geyer proposed the following idea. 
Instead of generating a single MCMC
from the multimodal target density $\pi$, it is worth generating parallel chains with {\it tempered} 
target density $\pi^{\beta_j}$;
$j=0,1,\ldots,m$, where $0\leq\beta_n<\beta_{n-1}<\cdots<\beta_1<\beta_0=1$ are suitable {\it inverse temperatures} 
such that $\pi^{\beta_j}$ becomes
progressively smoother and tends to unimodality as $j$ increases. 
MC$^3$ proceeds by running one chain at each of the $m+1$ values of $\beta$. 
The current scenario with $m+1$ target densities can be thought
of as the product target density $\prod_{j = 0}^m \pi^{\beta_j}(\mathbf{x}_j)$, where $\mathbf{x}_j$ 
denotes the chain at a fixed inverse temperature 
$\beta_j$ with stationary density $\pi^{\beta_j}$. The MC$^3$ idea then suggest generating parallel MCMC from
the densities $\pi^{\beta_j}$ and occasionally swapping the values of the parallel chains. 
The swapping of the states help exchange information between different modal regions of the original target
and hence helps explore the target more efficiently compared to the usual MCMC algorithms.
The algorithm is given as follows.

\begin{algo}\label{algo:mc3} \topline The MC$^3$ algorithm \botline \normalfont \ttfamily
\begin{itemize}
\item Update in parallel the Markov chains for each of the tempered densities. 
using any convergent MCMC algorithm up to a certain number of iterations say $t_0$.
\item Then for each iteration $t$ ($t>t_0$),
\begin{enumerate}
\item Attempt \textit{within temperature move} by updating each $\mathbf{x}_j$ using the usual 
RWMH MCMC algorithm with stationary density $\pi^{\beta_j}$.
\item Attempt a \textit{temperature swap} by randomly choosing two different inverse temperatures, 
say $\beta_j$ and $\beta_k$, and then proposing to swap their respective state values 
with probability 
$$\alpha = \min\left\{1,\frac{\pi^{\beta_j}(\mathbf{x}_k)\pi^{\beta_k}(\mathbf{x}_j)}
{\pi^{\beta_j}(\mathbf{x}_j)\pi^{\beta_k}(\mathbf{x}_k)}\right\}.$$ 
If the swap is rejected, the values of the states remain unchanged.
\end{enumerate}
\end{itemize}
\botline \rmfamily
\end{algo}
The spacing of the inverse temperatures $\beta_j$ has important consequences of the mixing of the algorithm. For instance,
if two close values of $\beta$ are swapped, then not much information is exchanged and so mixing is not expected to improve,
while the proposal to swap too far away values of $\beta$ would usually lead to rejection of the swap proposal.
Thus, optimal scaling of the spacings between the inverse temperatures is necessary. \ctn{Atchade10} propose the spacings
to be of length $\eta=\frac{\ell}{d}$, for a $d$-dimensional target density, where $\ell$ must be chosen optimally
chosen in some sense. Under the assumption that the original target density is a product of $iid$ densities, 
\ctn{Atchade10} maximize the stationary ESJD with respect
to $\ell$ to obtain the optimal spacing. For the optimal spacing, the corresponding swap acceptance rate turns out to
be $0.234$.

\ctn{Dey17} proposed to randomize the spacings such that $\eta=\frac{\ell}{\sqrt{d}}\epsilon$,
where $\epsilon\sim q(\cdot)I_{\{\epsilon>0\}}$, where $q$ is any arbitrary density. He referred to the
corresponding randomized algorithm as randomized Metropolis Coupled Markov Chain Monte Carlo (RMC$^3$).
When $q$ is the left truncated $N(0,1)$ density, \ctn{Dey17} proved that the optimal swap acceptance rate
of RMC$^3$, obtained via maximization of stationary ESJD, is $0.439$. In keeping with the much improved
swap acceptance rate, we observed much improved mixing of RMC$^3$ in comparison with MC$^3$ in simulation studies.
We also propose to simulate the parallel Markov chains using TMCMC, rather than the traditional MCMC methods, for
much greater efficiency. The resulting methodology can be termed as 
randomized transformation-based Metropolis Coupled Markov Chain Monte Carlo (RTMC$^3$).

Recently \ctn{Khamaru16} created an appropriate randomized variable dimensional
swap based methodology for variable dimensional target distributions, where given some (perhaps, all) dimensions, the
target is multimodal. The parallel, variable-dimensional chains are simulated using Transdimensional Transformation
based Markov Chain Monte Carlo (TTMCMC) (\ctn{Das17}). The authors refer to this novel methodology as
randomized transdimensional transformation-based Metropolis Coupled Markov Chain Monte Carlo (RTTMC$^3$). Even for RTTMC$^3$, 
the optimal swap acceptance rate turned out to be $0.439$!

The rest of our paper is structured as follows.
In Section \ref{sec:tmcmc_thick_tailed} we discuss our diffusion based approach to optimal scaling 
of additive TMCMC with non-Gaussian, thick-tailed proposals, assuming that the regularity conditions of Theorem 4.1 of \ctn{Dey13}
are satisfied. Even though the proof of our result does not go through
with the Cauchy proposal (since the moments do not exist), our simulation studies indicate that at least the recipe for
obtaining optimal scaling and optimal acceptance rate remains valid even for the Cauchy proposal, which is what we
conjecture. We follow up our theoretical investigations with simulation studies and compare additive TMCMC and
RWM for Gaussian and Cauchy proposals, considering the target distributions to be a $t$ density with $5$ degrees of freedom, 
a density with exponential tails. As expected, TMCMC emerges the winner in all the cases; our simulation studies 
also demonstrate that the Gaussian proposal
is perhaps more efficient than the Cauchy proposal.
We consider another more realistic simulation study involving simulation from the posterior distribution associated with 
a mixture of Weibull distributions, and again TMCMC is seen to outperform RWM.
In Section \ref{sec:diffusion_bounded_support} we consider target densities with bounded support, so that 
they are no longer continuous on $\mathbb R$. The indicator function associated with the bounded support condition
makes direct derivation of diffusion results difficult. To avoid such difficulty we consider the logistic transformation,
mapping the bounded random variables to $\mathbb R$, and obtain our diffusion result on the transformed space. We then make
use of the It\^{o} formula to obtain the diffusion result associated with the original bounded random variables, for
Gaussian/non-Gaussian proposal distributions. We show
that the notion and interpretation of diffusion speed remains intact even in the latter diffusion equation, so that
obtaining optimal scaling by maximizing the diffusion speed remains a valid approach. 
Explicit forms and values of the optimal scales and optimal acceptance rates for various proposal distributions 
are provided and discussed in Section \ref{sec:optimal}. We compare our diffusion based optimal scaling of additive TMCMC
with the ESJD based optimal scaling of RWM (\ctn{NealRoberts11}) in Section \ref{sec:comparison}, focussing
particularly on the Cauchy proposal. We show that our approach emphatically outperforms the ESJD method for the Cauchy based
RWM agorithm. 
In Section \ref{sec:slice} we compare additive TMCMC and RWM with the popular and usually effective slice sampling method in the
case of a $d$-dimensional target density with positive support, demonstrating that additive TMCMC significantly
outperforms both the competing methods for all the values of $d$ considered. 
Finally, we summarize our contributions and provide concluding remarks in Section \ref{sec:conclusion}.

\section{Diffusion based approach for additive TMCMC with non-Gaussian, thick-tailed
proposals}
\label{sec:tmcmc_thick_tailed}

The diffusion based approach for additive TMCMC, as considered by \ctn{Dey13} remains valid
in spite of non-Gaussian proposals.
To understand why this is the case, we first provide a brief overview of additive TMCMC.

\subsection{Additive TMCMC}
\label{subsec:overview_tmcmc}

As before, assume that we are simulating from a $d$ dimensional space (usually $\mathbb{R}^{d}$), 
and that we are currently at a point $x= (x_{1}, \ldots, x_{d})$.
Further, let us define $d$ random variables $b_{1}, \ldots, b_{d}$ as in (\ref{eq:b_tmcmc}).
  The additive TMCMC uses moves of the following type: 
  \begin{equation*}
  (x_{1}, \ldots, x_{d}) \rightarrow (x_{1}+ b_{1}\epsilon, \ldots, x_{d}+b_{d}\epsilon), 
  \end{equation*}
  where $\epsilon>0$ has any arbitrary distribution with support $\mathbb R_+$, the
  positive part of the real line. In this work,
  we shall assume that $p_i=1/2$ for $i=1,\ldots,d$ and that
  $\epsilon=\frac{\ell}{\sqrt{d}}\epsilon^*$, where $\epsilon^*\sim q(\cdot)I_{\{\epsilon^*>0\}}$,
  where $q(\cdot)$ is an arbitrary density with support $\mathbb R_+$. Here 
  for any set $A$, $I_{A}$ denotes the indicator function of $A$.

  Thus, a single $\epsilon$ is simulated from a distribution supported on $\mathbb R_+$, which is then either added to, 
  or subtracted
  from each of the $d$ co-ordinates of $x$ with probability $1/2$. Assuming that the target distribution is proportional to 
  $\pi$, the new move $x^*=(x_1+b_1\epsilon,\ldots,x_d+b_d\epsilon)$ is accepted
  with probability
  \begin{equation}
  \alpha=\min\left\{1,\frac{\pi(x^*)}{\pi(x)}\right\}.
  \label{eq:acc_tmcmc}
  \end{equation}


The main difference of additive TMCMC with the RWM algorithm is that, instead of simulating and utilizing a single $\epsilon$, 
the latter proceeds by simulating 
$\epsilon_1,\ldots,\epsilon_d$ independently from
some density supported on the entire real line, 
and then adding $\epsilon_i$ to 
the co-ordinate $x_i$, to form $x^*_i$, for each $i$. 
The new move is accepted with probability having the same form as (\ref{eq:acc_tmcmc}). 
The default, optimally scaled RWM proposal corresponds to
$\epsilon_i=\frac{\ell}{\sqrt{d}}\epsilon^*_i$, where $\epsilon^*_i\stackrel{iid}{\sim}N(0,1)$, 
for appropriate (optimal) choice of $\ell$. 

As discussed in \ctn{Dutta13}, in $d$ dimensions the number of $\epsilon_i$ allowed by TMCMC
ranges from 1 to $d$, so that RWM is a special case of additive TMCMC. In what follows, however,
we confine ourselves to a single $\epsilon$ for additive TMCMC.

\subsubsection{Computational gain of TMCMC over RWM}
\label{subsubsec:computational_gain}
Although TMCMC requires simulation of $d+1$ random variables in every iteration as opposed to simulation of 
$d$ random variates required by RWM, the computational complexity of the former algorithm is much less because
simulation of Bernoulli random variables is computationally a much simpler exercise compared to simulation
of normal deviates. The issue on computational gain of TMCMC is illustrated in \ctn{Dey13}; here we further remark that
RWM took about 43 minutes for completion of $10^6$ iterations for a $100$-dimensional target distribution composed
of products of standard normal densities truncated on $(-1,1)$ (see Section \ref{sec:comparison}), 
while additive TMCMC took just about 28 minutes for the
same number of iterations and the same target distribution, the codes been written in R and implemented on a single node desktop machine. 


\subsection{Diffusion approach to additive TMCMC avoids technical difficulties associated with
non-Gaussian proposals using Lyapunov's central limit theorem conditional on $\epsilon$ and $b_1$}
\label{subsec:conditional_Taylor}

In order to prove diffusion based optimal scaling results for additive TMCMC, \ctn{Dey13} 
had to apply Lyapunov's central limit theorem on sums associated with the discrete random variables 
$\{b_i;i=2,\ldots,d\}$, {\it conditional on} $\epsilon$ (and $b_1$), and hence did not have to rely
on any Gaussian assumption. Indeed, as shown in \ctn{Dey13}, even if 
$q(\cdot)I_{\{\epsilon^*>0\}}\equiv N(0,\sigma^2)I_{\{\epsilon^*>0\}}$, so that for each $i$, 
$b_i\epsilon^*\sim N(0,\sigma^2)$, we still do not have joint normality of $(b_1\epsilon^*,\ldots,b_d\epsilon^*)$. 
In fact, $b_i\epsilon^*+b_j\epsilon^*=0$ with probability $1/2$ for $i\neq j$, showing that the 
linear combinations of $b_i\epsilon^*$ need not be normal. That is, the joint distribution of 
$(b_1\epsilon^*,\ldots,b_d\epsilon^*)$ is not normal, even though the marginal
distributions are normal and the components are pairwise uncorrelated ($E(b_i\epsilon^*\times b_j\epsilon^*)=0$ for $i\neq j$). 
This also shows that $b_i\epsilon^*$ are not independent, because independence would imply joint normality of the components. 
Note that $b_i\epsilon^*$ are dependent on the same $\epsilon^*$, hence they are not independent anyway.

\subsection{Formal diffusion result for non-Gaussian proposals for $iid$ product target densities}
\label{subsec:formal_result}

Let us consider target densities of the form
\begin{equation}
\pi_X(x) = \prod_{i=1}^{d}{f_X(x_{i})};\quad -\infty<x_i<\infty,\quad\forall~i=1,\ldots,d, 
\label{eq:iid_x}
\end{equation}

Let $X^d_t=(X_{t,1},\ldots,X_{t,d})$.
As in \ctn{Dey13} (see also the references therein), we define 
${U_{t}}^{d} = {X_{[dt],1}}$ ($[\cdot]$ denotes the integer part), 
the sped up first component of the actual additive TMCMC-induced Markov chain. 
Thus this process proposes a jump every $\frac{1}{d}$ time units. As $d \rightarrow \infty$, 
that is, as the dimension grows to $\infty$, the process essentially becomes a continuous time diffusion process.

Following \ctn{Dey13} let us assume that
\begin{align}
&E_{f_X}\left(\frac{f_X'(X)}{f_X(X)}\right)^4 <\infty,
\label{eq:assump1}\\
&E_{f_X}\left(\frac{f_X''(X)}{f_X(X)}\right)^4  <\infty,
\label{eq:assump2}\\
&E_{f_X}\left(\frac{f_X'''(X)}{f_X(X)}\right)^4  <\infty,
\label{eq:assump3}\\
&E_{f_X}\left|\frac{f_X''''(X)}{f_X(X)}\right|  <\infty.
\label{eq:assump4}
\end{align}
Following \ctn{Roberts97a} let us denote weak convergence of
processes in the Skorohod topology by ``$\Rightarrow$"; see also \ctn{Dey13}.
Then, the following theorem, which is essentially Theorem 4.1 of \ctn{Dey13}, holds: 
\begin{theorem}
\label{theorem:theorem1}
Assume that $f_X$ is positive with at least three continuous derivatives and that the fourth derivative exists almost everywhere. 
Also assume that $(\log f_X)'$ is Lipschitz continuous, 
and that (\ref{eq:assump1}) -- (\ref{eq:assump4}) hold. 
Let $X^d_0\sim\pi_X$, that is, the $d$-dimensional additive TMCMC chain is started at stationarity, and
let the transition be given by $(x_1,\ldots,x_d)\rightarrow (x_1+b_1\epsilon,\ldots,x_d+b_d\epsilon)$, where
for $i=1,\ldots,d$, $b_i=\pm 1$ with equal probability and 
$\epsilon\equiv\frac{\ell}{\sqrt{d}}\epsilon^*$, where $\epsilon^*\sim q(\cdot)I_{\{\epsilon^*>0\}}$. 
We then have
\[
\{U^d_t;~t\geq 0\}\Rightarrow \{U_t;~t\geq 0\},
\]
where $U_0\sim f_X$ and $\{U_t;~t\geq 0\}$ satisfies the Langevin stochastic differential equation (SDE)
\begin{equation}
dU_t=g(\ell)^{1/2}dB_t+\frac{1}{2}g(\ell)\left(\log f_X(U_t)\right)'dt,
\label{eq:sde1}
\end{equation}
with $B_t$ denoting standard Brownian motion at time $t$,
\begin{equation}
g(\ell)=4\ell^2\int_0^{\infty} u^2\Phi\left (- \frac{u\ell\sqrt{\mathbb{I_X}}}{2}\right)q(u)du;
\label{eq:diff_speed_iid}
\end{equation}
$\Phi(\cdot)$ being the standard normal cumulative distribution function (cdf), and
\begin{equation}
\mathbb I_X=E_{f_X}\left(\frac{f_X'(X)}{f_X(X)}\right)^2.
\label{eq:information_x}
\end{equation}

\end{theorem}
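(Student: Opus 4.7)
The plan is to establish generator convergence and then invoke a standard weak-convergence theorem (e.g.\ Corollary 4.8.7 of Ethier--Kurtz) to conclude convergence in the Skorohod topology, mirroring the proof of Theorem 4.1 of \ctn{Dey13}. The crucial observation, as emphasized in Section \ref{subsec:conditional_Taylor}, is that the only central limit theorem required is applied to a sum of iid random variables involving the $b_i$ and the $x_i$, so the proposal density $q$ enters the analysis merely through integration in $u^*$ and need not be Gaussian.

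For a smooth, compactly supported test function $V$, I would write the discrete generator of $U^d_t$ as
\[
G_d V(x_1) = d\, E\!\left[\bigl(V(x_1 + b_1\epsilon) - V(x_1)\bigr)\alpha_d\right],
\]
where $\epsilon = \ell u^*/\sqrt{d}$ with $u^* \sim q$ on $(0,\infty)$, $\alpha_d = \min\{1,\prod_{i=1}^d f_X(x_i + b_i\epsilon)/f_X(x_i)\}$, and $x_2,\ldots,x_d$ are iid from $f_X$. Taylor expanding $V$ around $x_1$ and using $b_1^2 = 1$ to decouple $b_1$ from the second-order term yields
\[
G_d V(x_1) = \sqrt{d}\,\ell\, V'(x_1)\, E[b_1 u^* \alpha_d] + \tfrac{\ell^2}{2}\, V''(x_1)\, E[u^{*2}\alpha_d] + O(d^{-1/2}),
\]
with the $O(d^{-1/2})$ remainder bounded via (\ref{eq:assump4}) and the Lipschitz assumption on $(\log f_X)'$.

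Next I would analyse $\alpha_d$. Expanding $\log\alpha_d$ to second order in $\epsilon$ yields
\[
\log\alpha_d = b_1\epsilon(\log f_X)'(x_1) + \tfrac{\ell u^*}{\sqrt{d}}\sum_{i=2}^d b_i (\log f_X)'(x_i) + \tfrac{\ell^2 u^{*2}}{2d}\sum_{i=2}^d (\log f_X)''(x_i) + R_d.
\]
Conditional on $u^*$, Lyapunov's CLT (the Lyapunov condition being verified by (\ref{eq:assump1})) sends the first sum to $N(0,\ell^2 u^{*2}\mathbb{I}_X)$; the WLLN combined with the identity $E_{f_X}[(\log f_X)''(X)] = -\mathbb{I}_X$ (obtained from integration by parts) sends the second to $-\ell^2 u^{*2}\mathbb{I}_X/2$; and the $b_1$ term is $O(d^{-1/2})$ and vanishes in the limit. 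Hence $\log\alpha_d \mid u^* \Rightarrow N(-\sigma^2/2,\sigma^2)$ with $\sigma^2 = \ell^2 u^{*2}\mathbb{I}_X$, and the lognormal identity $E[\min\{1,e^L\}] = 2\Phi(-\sigma/2)$ yields $E[\alpha_d \mid u^*] \to 2\Phi(-\ell u^*\sqrt{\mathbb{I}_X}/2)$, producing the integrand of $g(\ell)$ from (\ref{eq:diff_speed_iid}) and hence the diffusion coefficient of the SDE (\ref{eq:sde1}). For the drift I would keep the subleading $b_1$-dependence of $\alpha_d$, since a crude zeroth-order limit would make $E[b_1 u^*\alpha_d]=0$. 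Writing $\log\alpha_d = L_d + b_1\gamma_d$ with $\gamma_d = \epsilon(\log f_X)'(x_1) = O(d^{-1/2})$, and letting $\alpha_d^\pm$ denote the acceptance probabilities when $b_1=\pm 1$,
\[
\alpha_d^+ - \alpha_d^- = \min\{1,e^{L_d+\gamma_d}\} - \min\{1,e^{L_d-\gamma_d}\} = 2\gamma_d e^{L_d} I_{\{L_d<0\}} + o(\gamma_d),
\]
so that $E[b_1 u^*\alpha_d] = \tfrac12 E[u^*(\alpha_d^+ - \alpha_d^-)]$; the companion identity $E[e^L I_{\{L<0\}}]=\Phi(-\sigma/2)$ for $L\sim N(-\sigma^2/2,\sigma^2)$ then matches the integrand of $g(\ell)$, and the $\sqrt{d}$ prefactor cancels the $d^{-1/2}$ in $\gamma_d$ to produce precisely the Langevin drift $\tfrac12 g(\ell)(\log f_X)'(x_1)V'(x_1)$. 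Combining, $G_d V \to G V$ pointwise and uniformly on compacta; tightness from the stationary initialization together with boundedness of $V,V',V''$ completes the proof via Ethier--Kurtz.

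I expect the main obstacles to be two interrelated uniform-integrability issues: (i) exchanging the $d\to\infty$ limit with the expectations over $u^*$ in $E[u^{*2}\alpha_d]$ and $E[u^{*2} e^{L_d} I_{\{L_d<0\}}]$, and (ii) controlling the $o(\gamma_d)$ remainder on the small event $\{|L_d|\le|\gamma_d|\}$ after multiplication by $\sqrt{d}$. Both controls rely on (\ref{eq:assump2})--(\ref{eq:assump3}) to bound higher-order Taylor remainders in $\log\alpha_d$ and, crucially, require $E_q[u^{*2}]<\infty$ to justify dominated convergence over $u^*$. This last moment requirement is exactly what fails for the Cauchy proposal, which is why the argument does not formally extend there and leaves the Cauchy version of the result as the conjecture stated in this paper.
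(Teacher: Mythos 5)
Your proposal is correct and takes essentially the same route as the paper, which simply observes that the generator-convergence proof of Theorem 4.1 of \ctn{Dey13} (Taylor expansion of the log acceptance ratio, Lyapunov's CLT conditional on $\epsilon^*$ and $b_1$, and the lognormal identities yielding $2\Phi(-\sigma/2)$) carries over verbatim once $b_i\epsilon^*$ has finite moments. You correctly identify the one point the paper itself emphasizes: the only place the proposal density enters is through moments of $\epsilon^*$ in the dominated-convergence step, which is exactly what breaks down for the Cauchy proposal and forces the conjecture of Section \ref{subsec:conjecture}.
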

The main difference of this theorem with Theorem 4.1 of \ctn{Dey13} is that here we allow
$\epsilon^*$ in $\epsilon\equiv\frac{\ell}{\sqrt{d}}\epsilon^*$ to have arbitrary distribution
$q(\cdot)I_{\{\epsilon^*>0\}}$, supported on the positive part of the real line, whereas
\ctn{Dey13} considered $q(\cdot)$ to be $N(0,1)$. The proof of the theorem only requires
$b_i\epsilon^*$ to have finite moments, and with this assumption, 
exactly the same proof of \ctn{Dey13} goes through for non-Gaussian choices of $q(\cdot)$.

\subsection{Conjecture for proposals where the moments of $b_i\epsilon^*$ do not exist}
\label{subsec:conjecture}

As indicated above, the proof of Theorem \ref{theorem:theorem1}, analogous to the proof of Theorem 4.1
of \ctn{Dey13}, does not carry over for proposal distributions for which the moments of $b_i\epsilon^*$ do not exist, 
which happens when $b_i\epsilon^*$ is distributed as Cauchy, for instance. 
The reason is that some requisite Taylor's series expansions associated with $b_1\epsilon^*$ will not be 
valid as the higher order terms do not converge in probability to zero as $d\rightarrow\infty$.
However, all our simulation studies demonstrated that 
our additive TMCMC algorithms with the Cauchy proposal 
and the scale $\ell/\sqrt{d}$, have empirical acceptance rate extremely close to that associated with
the theoretical acceptance rate associated with (\ref{eq:diff_speed_iid}), even for $d$ as small as 10, and
results of simulations with high dimensions $d=50$ and $d=100$
lend further support to this observation (see Sections \ref{subsec:comparison_unbounded} and \ref{sec:comparison}).
We thus {\it conjecture} that 
at least the method of obtaining optimal scaling and optimal acceptance rate, as discussed in
Section \ref{sec:optimal}, 
remains valid even for the Cauchy proposal. We use the result as a 
``rule of thumb" even in situations where valid proofs are yet pending.

\subsection{Simulation experiments to compare performances of optimal TMCMC and RWM with respect to Gaussian and Cauchy
proposals}
\label{subsec:comparison_unbounded}
In this section we consider two target densities of the following forms, also considered by \ctn{NealRoberts11}:
\begin{equation}
f_X(x)=\frac{8}{3\sqrt{5}\pi}\left(1+\frac{x^2}{5}\right)^{-3};~x\in\mathbb R,
\label{eq:t_density}
\end{equation}
which is the $t$-distribution with $5$ degrees of freedom, and
\begin{equation}
f_X(x)=\left\{\begin{array}{cc} \frac{1}{4} & \mbox{if}~|x|<1;\\
\frac{1}{4}\exp\left(1-|x|\right)   & \mbox{if}~|x|\geq 1,\end{array}\right.
\label{eq:exptail_density}
\end{equation}
which is a distribution with exponential tails.

We use both Gaussian and Cauchy proposals for the competing additive TMCMC and RWM algorithms to simulate
from the above target distributions considering dimensions $d=10$, $50$ and $100$, and compare
the performances of the algorithms, with respect to both the proposal distributions, for both the target distributions. 
For the purpose of comparison we use the Kolmogorov-Smirnov (KS) 
distance between the empirical distribution function associated
with the MCMC simulations and the true, target distribution functions, both associated with the first co-ordinate
of the $d$-dimensional distributions. We also consider the autocorrelations of the underlying Markov chains.

Using equations (\ref{eq:diff_speed_iid}) and (\ref{eq:information_x}) we find that for both the target distributions
(\ref{eq:t_density}) and (\ref{eq:exptail_density}), the optimal acceptance rate of additive TMCMC is $0.439$ for
the Gaussian proposal and $0.380$ for the Cauchy proposal. 
As shown in Table \ref{table:table1}, even for target densities with bounded support, the optimal acceptance rate for
additive TMCMC with the Cauchy proposal is $0.380$; indeed, as argued in Section \ref{sec:optimal}, the optimal acceptance
rate depends only on the choice of the proposal distribution.
When the target density is (\ref{eq:t_density}), the
optimal scales for the Gaussian and Cauchy proposals are given by 
$\ell_{opt,Gaussian}=2.802$ and $\ell_{opt,Cauchy}=2.239$, respectively, and for target density (\ref{eq:exptail_density}),
these are given by  $\ell_{opt,Gaussian}=3.431$ and $\ell_{opt,Cauchy}=2.741$.
It is worth recalling that for both the target distributions and for both the proposal distributions we consider the scale 
of the form $\ell/\sqrt{d}$. 

On the other hand, although for both the target densities the Gaussian proposal based RWM has scale 
of the form $\ell/\sqrt{d}$, the ESJD-based approach of \ctn{NealRoberts11} requires the scale to be of the form
$\ell/d$ for the RWM based Cauchy proposal. It is worth noting that for target distributions with bounded supports
\ctn{NealRoberts11} consider the scale $\ell/(d\log d)$ for the RWM based Cauchy proposal, and obtained
the optimal acceptance rate $0.368$.

In the current context, using their ESJD approach, \ctn{NealRoberts11} obtained the optimal acceptance rate for either 
of the target distribution to be $0.234$, for both the proposal distributions. 
For our simulations we choose the scales appropriately in each case such that for RWM
the empirical acceptance rate obtained from the MCMC simulations is as close to $0.234$ as possible.
In all our simulations, the optimal scales of TMCMC led to empirical acceptance rates that are very close to the actual
optimal acceptance rates.

With the above set-up, we simulated $10^5$ MCMC realizations from each target distribution, with both Gaussian
and Cauchy proposals with respect to both additive TMCMC and RWM, for dimensions $d=10,50,100$. The KS distances
for each such simulation, are provided in Table \ref{table:ks_comparison}. 
As is observed from the table, in all the cases considered, TMCMC outperforms RWM significantly in terms of
the KS distance, even though in most cases the RWM based autocorrelations decrease somewhat faster than the
TMCMC based autocorrelations (figures not shown for brevity). Since the maximum diffusion speed is higher for RWM
when the Gaussian proposal is considered (see \ctn{Dey13}), and since the optimal scale for the RWM based Cauchy proposal
is chosen by maximizing ESJD, both of which are directly related to autocorrelations, it is not unexpected 
that the autocorrelations of RWM would generally decrease faster; the same phenomenon has been observed in \ctn{Dey13}.
However, neither the maximum diffusion speed nor ESJD guarantees that the KS distance would be smaller for RWM, and
as such, our results concur with those obtained in \ctn{Dey13}, that the TMCMC significantly
outperforms RWM in terms of the KS distance. Since smaller KS distance is far more desirable than smaller autocorrelations,
it is reasonable to conclude, as in our previous works related to TMCMC, that additive TMCMC is a much superior methodology
compared to RWM. The reason for the superior performance of TMCMC in terms of the KS distance can perhaps be attributed to
its much higher acceptance rate in comparison to the somewhat slow rate of decrease of the autocorrelations. 
To elaborate, while the mixing peroperties of TMCMC and RWM in terms of their respective autocorrelations
do not differ drastically, the acceptance rate of TMCMC is of course emphatically larger than that of RWM. The latter
cancels the slight advantage of RWM in terms of autocorrelations, and tilts the comparison in favor of TMCMC
in terms of the KS distance. 

In this context, let us note that for the RWM based Cauchy proposal, the scale being of the order $O\left(d^{-1}\right)$, 
even though smaller compared to the TMCMC scale of the  order $O\left(d^{-1/2}\right)$, has a slight edge over TMCMC
in terms of autocorrelaion decay. However, for target distributions with bounded supports, the RWM scale is of the
order $O\left(\left(d\log d\right)^{-1}\right)$, while that of TMCMC remains of the  order $O\left(d^{-1/2}\right)$.
The simulation experiments detailed in Section \ref{sec:comparison} demonstrate that the further incorporation 
of the $\log d$ factor
in the RWM scale washes out the autocorrelation-related advantage of RWM over TMCMC for bounded target distributions,
and in those cases, TMCMC emphatically outperforms RWM in terms of KS distance, as well as in terms of autocorrelation decay.

Finally, Table \ref{table:ks_comparison} demonstrates that the Gaussian proposal seems to have a slight 
edge over the Cauchy proposal, for both TMCMC and RWM. This is consistent with the more emphatic conclusion of 
\ctn{NealRoberts11} that the Gaussian proposal always outperforms the Cauchy proposal, at least in terms of ESJD. 
Even our atocorrelation plots revealed that for the Gaussian proposal the autocorrelations decays faster than
that of the Cauchy proposal, for both TMCMC and RWM, for both the target densities, and for $d=10,50,100$.
In this sense, our results are consistent with those of \ctn{NealRoberts11}.
\begin{table}
\centering
\caption{KS distances between MCMC-based and target distribution functions under TMCMC and RWM with 
Gaussian and Cauchy proposals.}
\begin{tabular}{|c|c|c|c||c|c|c|}
\hline
\multirow{2}{*}{\backslashbox{Proposal}{Target}} 
& \multicolumn{3}{|c||}{$f_X(x)=\frac{8}{3\sqrt{5}\pi}\left(1+\frac{x^2}{5}\right)^{-3};~x\in\mathbb R$} & 
\multicolumn{3}{|c|}{$f_X(x)=\left\{\begin{array}{cc} \frac{1}{4} & \mbox{if}~|x|<1;\\\frac{1}{4}\exp\left(1-|x|\right)   & \mbox{if}~|x|\geq 1.\end{array}\right.$}\\  
\cline{2-7}
& $d=10$ & $d=50$ & $d=100$ & $d=10$ & $d=50$ & $d=100$ \\ 
\hline
TMCMC (Gaussian) & 0.006 & 0.011 & 0.029 & 0.009 & 0.011 & 0.016\\
RWM (Gaussian) & 0.013 & 0.018 & 0.043 & 0.017 & 0.021 & 0.021\\
\hline
TMCMC (Cauchy) & 0.007 & 0.017 & 0.016 & 0.009 & 0.014 & 0.016\\ 
RWM (Cauchy) & 0.013 & 0.028 & 0.026 & 0.022 & 0.026 & 0.021\\
\hline
\end{tabular}
\label{table:ks_comparison}
\end{table}

\subsection{Simulation study for comparing TMCMC and RWM in a more realistic setting}
\label{subsec:weibull}
We now consider a simulation study in the context of the following hierarchical Bayesian model
based on a mixture of two Weibull distributions, as suggested by a referee:
$$
y_1,\ldots,y_n\stackrel{iid}{\sim}\frac{1}{2}Weibull(\alpha_1,\beta_1)+\frac{1}{2}Weibull(\alpha_2,\beta_1),
$$
where $\alpha_1,\alpha_2$ are shape parameters and $\beta_1,\beta_2$ are scale parameters. 
We assume that {\it a priori}, 
for $j=,1,2$, $\alpha_j\sim Gamma(a_j,b_j)$, where $a_j$ and $b_j$ are shape and rate parameters respectively,
so that the mean and the variance of $\alpha_j$ are $a_j/b_j$ and $a_j/b^2_j$, respectively.
Specifically, we set $a_1=a_2=b_1=b_2=0.1$. We assume for simplicity that $\beta_1=\beta_2=1$. 

The goal of this study is to evaluate the performances of additive TMCMC and RWM in generating MCMC
samples from the posterior $\pi(\alpha_1,\alpha_2|y_1,\ldots,y_n)$, for various choices of $n$.
Observe that this posterior does not satisfy the conditions necessary for the optimal scaling theories.
For instance, the target posterior is only two-dimensional, and neither are the two co-ordinates $iid$ with respect
to the posterior. But here we wish to verify the importance of the optimal scaling theory in more realistic problems;
we also wish to compare the performances of additive TMCMC and RWM in this set-up, and the performances of 
non-Gaussian and Gaussian proposals with respect to both the algorithms.

Table \ref{table:ks_comparison} demonstrates that the Gaussian proposal has an edge over the Cauchy proposal. 
Thus, in order to outperform the Gaussian proposal it is of importance to consider non-Gaussian proposals 
that are somewhat close to the Gaussian proposal. The $t$ distribution with a reasonable degree of freedom 
may thus be appropriate. Table \ref{table:table1} shows that the $t$ distribution with $5$ degrees of freedom provides an
optimal acceptance rate that is quite close to the Gaussian proposal. Note that although the table considers target
distributions with bounded supports, it has been argued in Section \ref{sec:optimal} that the optimal acceptance 
rate is independent of the target distribution or its support, and depends only on the proposal distribution. Hence, 
it is appropriate in our current situation to consider the $t$ distribution with $5$ degrees of freedom
as a suitable non-Gaussian proposal.

To set the scales of $\alpha_1$ and $\alpha_2$, we first note that, since both have the same priors
and since the likelihood gives equal weight to both, their
posteriors are likely to be similar. Hence, we use the same scaling form $\ell/\sqrt{d}$ for both $\alpha_1$ and $\alpha_2$,
with respect to both additive TMCMC and RWM. In particular, with the Gaussian proposal based additive TMCMC, we tune 
$\ell$ so that the empirical acceptance rate is close to $0.439$ and for 
the $t$ distribution with $5$ degrees of freedom, we tune $\ell$ so that additive TMCMC has an 
empirical acceptance rate is close to $0.431$. For RWM, we tune $\ell$ such that the empirical acceptance rate for
both Gaussian and $t$ proposals is close to $0.234$.

We simulate $10$ data sets from our hierarchical Bayesian of sizes $n=10$, $20$, $30$, $40$, $50$, $60$, $70$, $80$, 
$90$, $100$, each consisting of $iid$ observations. For each vlue of $n$, we then draw from the posterior distribution 
$\pi(\alpha_1,\alpha_2|y_1,\ldots,y_n)$ using Gaussian and $t$ based additive TMCMC and RWM, with the aforementioned scalings.
We discard the first $15000$ iterations as burn-in and store the next $10^5$ iterations for evaluation of the methods.
Since the true marginal distribution functions of $\alpha_1$ and $\alpha_2$ are not analytically tractable for computation 
of the KS distances, we divide the 
$10^5$ iterations after the burn-in period into two parts; one part consists of the first $50000$ realizations
(after the burn-in) and the other part contains the next $50000$ iterations. We then consider the empirical KS distance
between these two parts; smaller values would indicate better convergence. Ideally, one should consider the joint 
empirical distribution function associated with the samples drawn from the joint posterior of $(\alpha_1,\alpha_2)$,
but certainly the marginal empirical distribution functions are much easier to deal with, which is why we do not
consider the joint empirical distribution functions.

Panel (a) of Figure \ref{fig:tmcmc_rwm_t_alpha} shows the KS distances for $\alpha_1$ associated with TMCMC and RWM, for
all the $10$ data sets of sizes $n=10$, $20$, $30$, $40$, $50$, $60$, $70$, $80$, $90$ and $100$, when the proposal
distribution is $t$ with $5$ degrees of freedom.
Similarly, panel (b) of Figure \ref{fig:tmcmc_rwm_t_alpha} shows the KS distances for $\alpha_2$ associated with TMCMC 
and RWM for the $t$ based proposal. Although for $\alpha_1$ TMCMC outperforms RWM only 50\% times in terms
of KS distances, in the case of $\alpha_2$, TMCMC beats RWM 80\% times. 
With the Gaussian based proposals, as Figure \ref{fig:tmcmc_rwm_gaussian_alpha} shows, TMCMC beats RWM
in 50\% cases with respect to $\alpha_1$ but outperforms RWM in 60\% cases with respect to $\alpha_2$.
Thus, overall, TMCMC is clearly seen to have an edge over RWM even where no optimal scaling theory holds.

Figures \ref{fig:tmcmc_t_gaussian_alpha} and \ref{fig:rwm_t_gaussian_alpha} compare the performances
of the $t$ and Gaussian proposals for TMCMC and RWM respectively. Figure \ref{fig:tmcmc_t_gaussian_alpha}
shows that for both $\alpha_1$ and $\alpha_2$, TMCMC with the $t$ proposal outperforms that with the Gaussian proposal
60\% times, demonstrating that for TMCMC, the $t$ proposal with $5$ degrees of freedom may be more appropriate
than Gaussian. On the other hand, Figure \ref{fig:rwm_t_gaussian_alpha} shows that RWM based on the $t$ proposal
beats that based on the Gaussian proposal 50\% times, for both $\alpha_1$ and $\alpha_2$, suggesting that both
the proposals may be equally preferred for RWM when the optimal scaling theory does not hold.

\begin{figure}
\subfigure [TMCMC vs RWM: KS plots for $\alpha_1$.]{ \label{fig:tmcmc_rwm_t_alpha1}
\includegraphics[width=7cm]{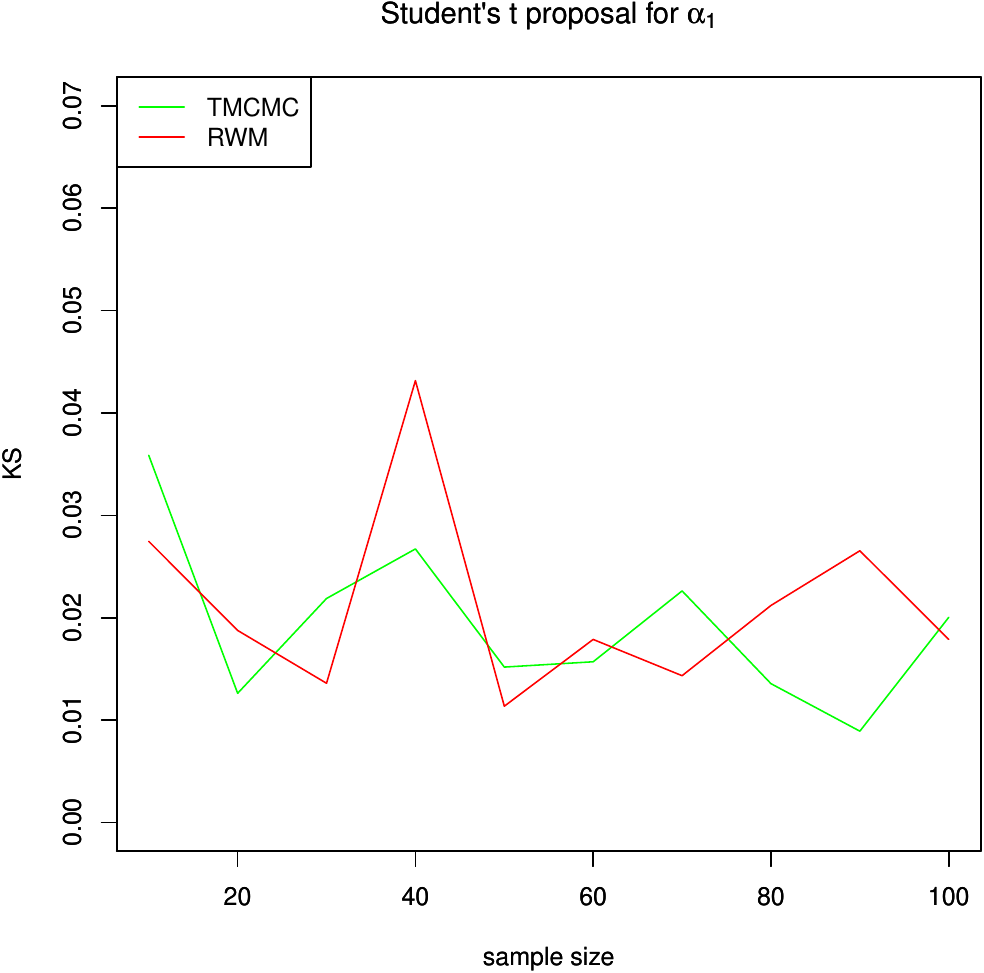}}
\hspace{2mm}
\subfigure [TMCMC vs RWM: KS plots for $\alpha_2$.]{ \label{fig:tmcmc_rwm_t_alpha2}
\includegraphics[width=7cm]{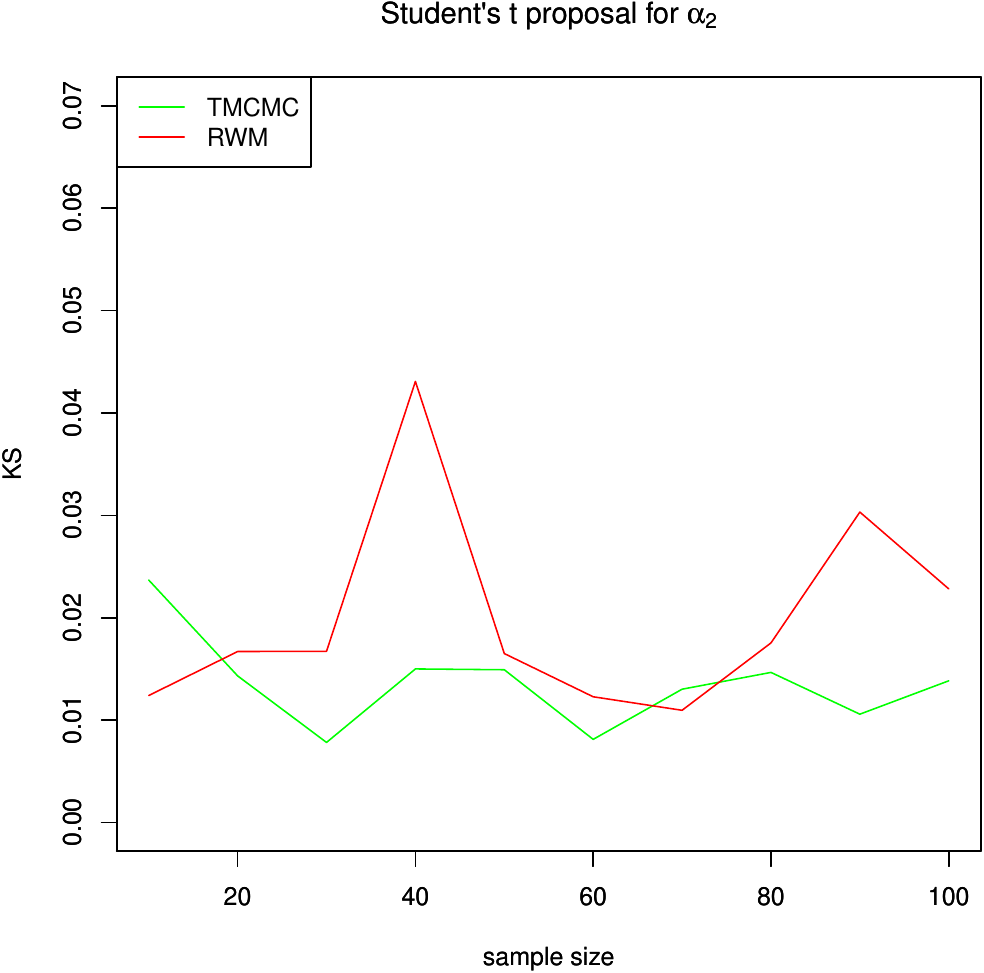}}
\caption{Plots of the KS distances of $\alpha_1$ and $\alpha_2$ 
associated with TMCMC and RWM for $10$ data sets when the proposal distribution is $t$ with $5$ degrees of freedom.
}
\label{fig:tmcmc_rwm_t_alpha}
\end{figure}

\begin{figure}
\subfigure [TMCMC vs RWM: KS plots for $\alpha_1$.]{ \label{fig:tmcmc_rwm_gaussian_alpha1}
\includegraphics[width=7cm]{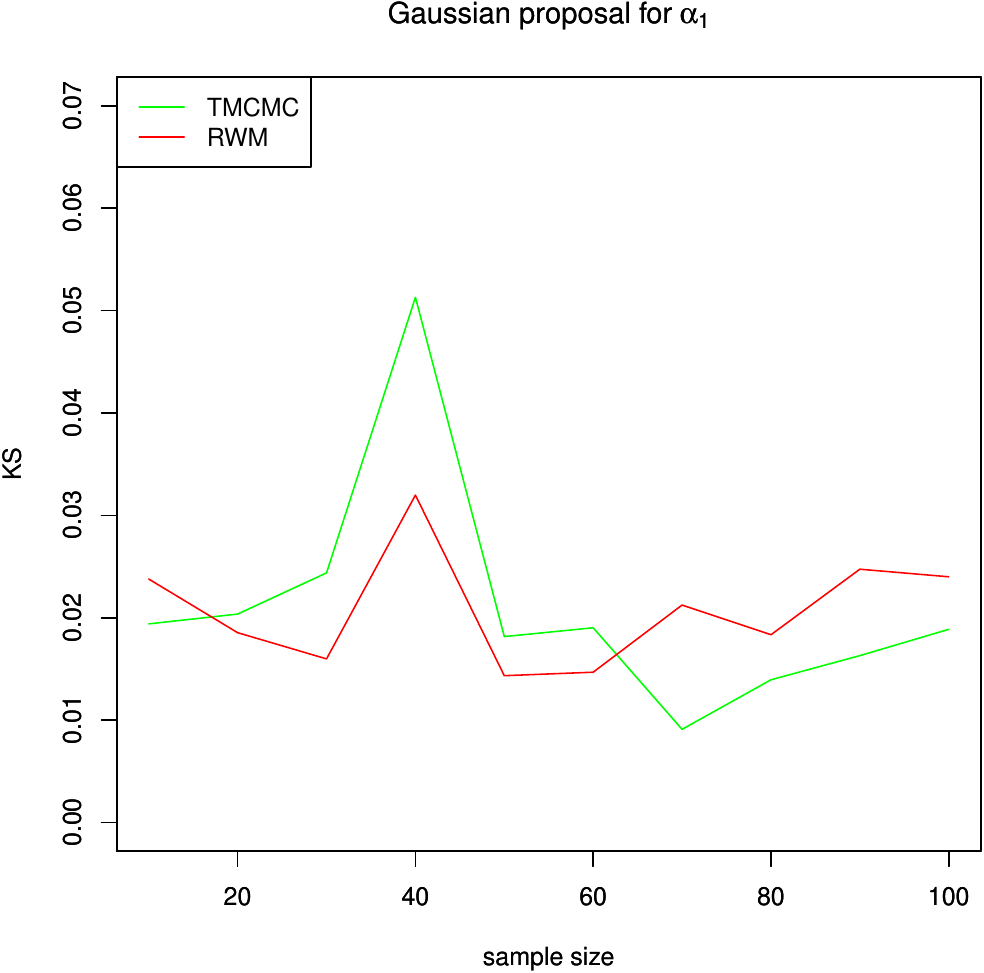}}
\hspace{2mm}
\subfigure [TMCMC vs RWM: KS plots for $\alpha_2$.]{ \label{fig:tmcmc_rwm_gaussian_alpha2}
\includegraphics[width=7cm]{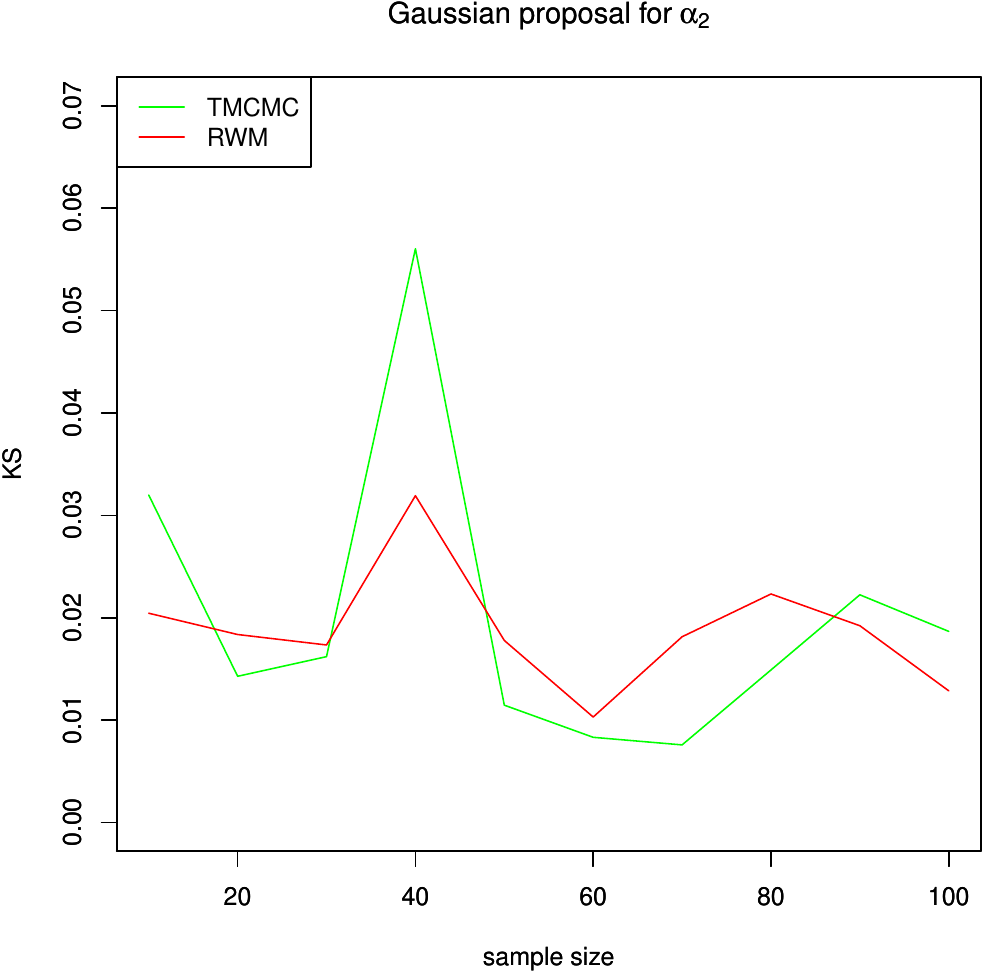}}
\caption{Plots of the KS distances of $\alpha_1$ and $\alpha_2$ 
associated with TMCMC and RWM for $10$ data sets when the proposal distribution is Gaussian.
}
\label{fig:tmcmc_rwm_gaussian_alpha}
\end{figure}

\begin{figure}
\subfigure [TMCMC for $\alpha_1$: $t$ vs Gaussian.]{ \label{fig:tmcmc_t_gaussian_alpha1}
\includegraphics[width=7cm]{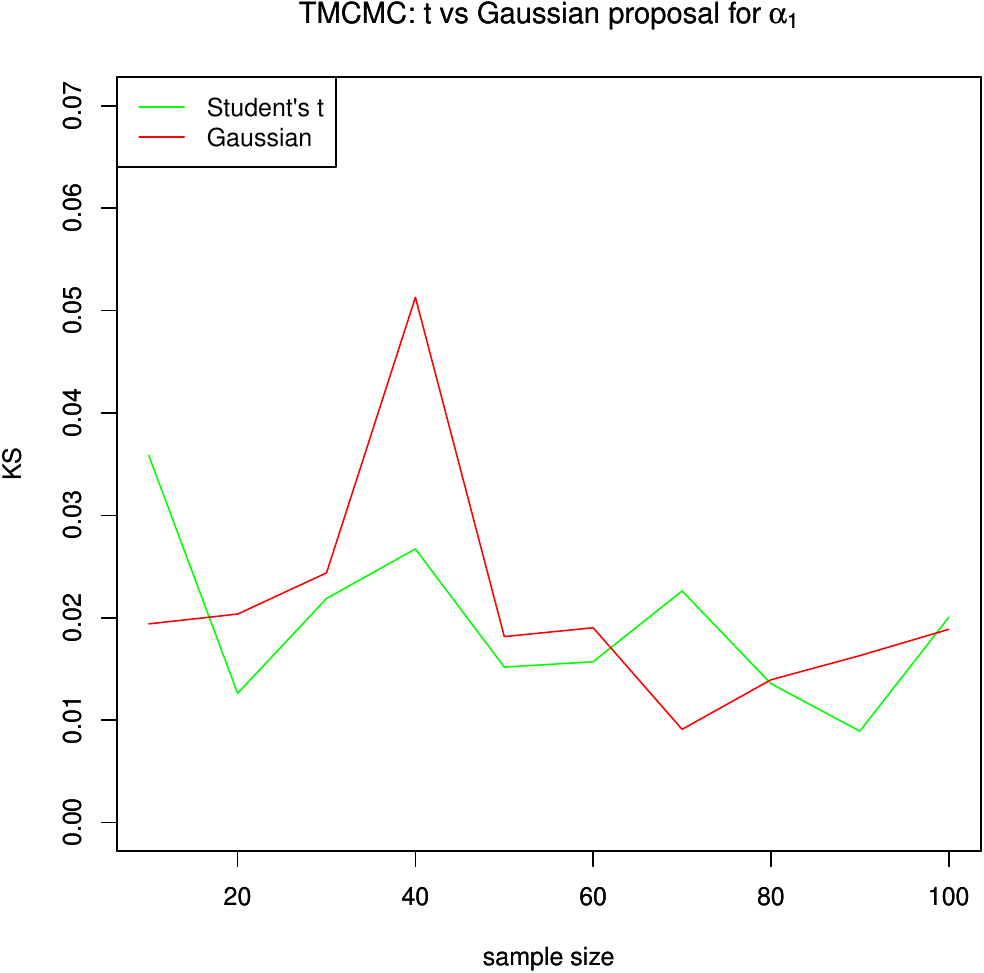}}
\hspace{2mm}
\subfigure [TMCMC for $\alpha_2$: $t$ vs Gaussian.]{ \label{fig:tmcmc_t_gaussian_alpha2}
\includegraphics[width=7cm]{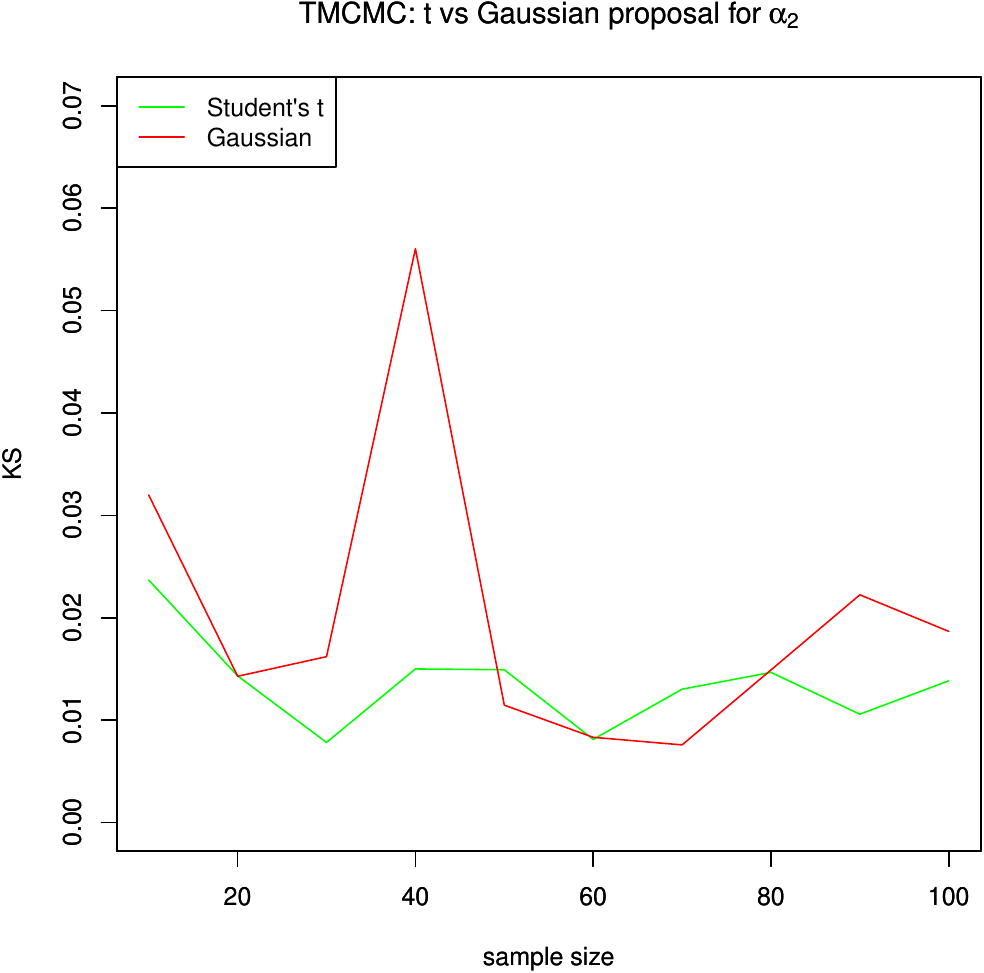}}
\caption{Plots of the KS distances comparing $t$ and Gaussian proposals 
associated with TMCMC.
}
\label{fig:tmcmc_t_gaussian_alpha}
\end{figure}

\begin{figure}
\subfigure [RWM for $\alpha_1$: $t$ vs Gaussian.]{ \label{fig:rwm_t_gaussian_alpha1}
\includegraphics[width=7cm]{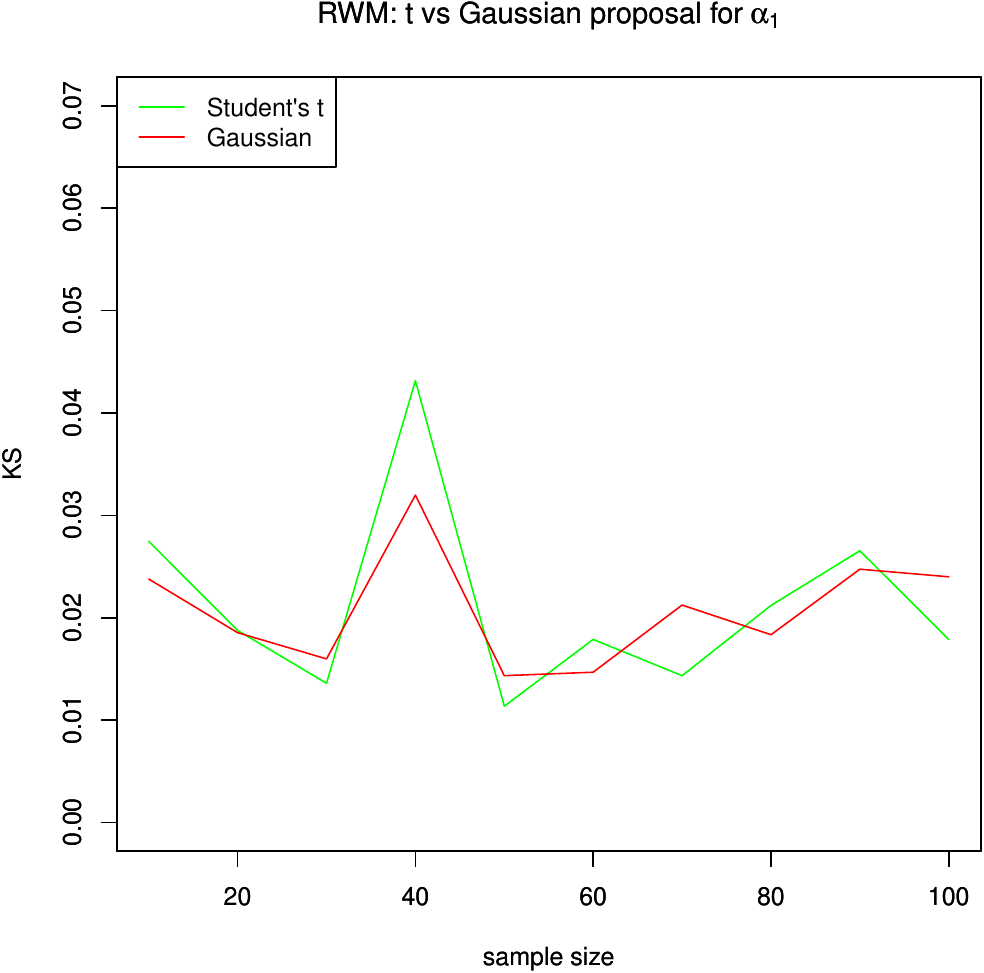}}
\hspace{2mm}
\subfigure [RWM for $\alpha_2$: $t$ vs Gaussian.]{ \label{fig:rwm_t_gaussian_alpha2}
\includegraphics[width=7cm]{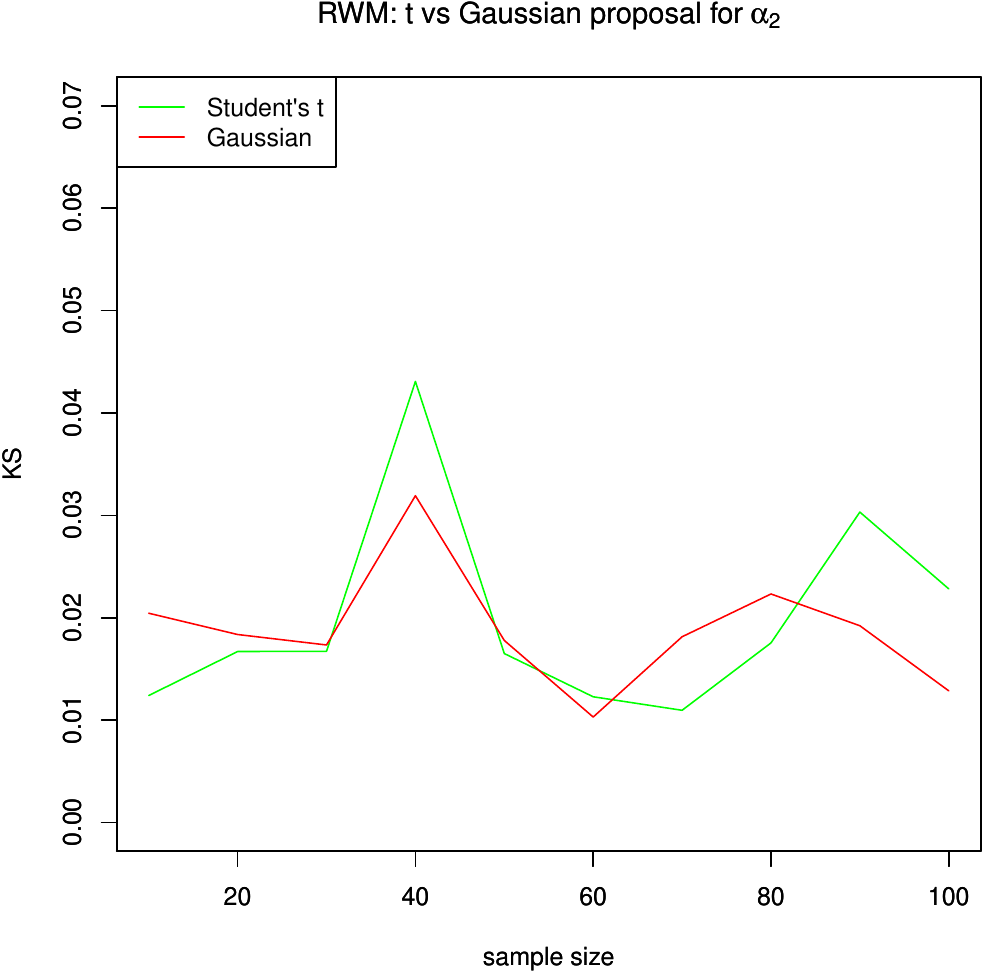}}
\caption{Plots of the KS distances comparing $t$ and Gaussian proposals 
associated with RWM.
}
\label{fig:rwm_t_gaussian_alpha}
\end{figure}

\section{Diffusion based optimal scaling for target densities with bounded supports}
\label{sec:diffusion_bounded_support}

Although the diffusion based approach of \ctn{Dey13} remains valid for additive TMCMC for any
proposal distribution such that $b_i\epsilon^*$ has finite moments, the approach needs to be slightly 
modified to accommodate target densities
with bounded supports, so that they are discontinuous in $\mathbb R^d$, say. 
Otherwise the mathematics becomes unwieldy due to the presence of the indicator functions
indicating the bounded support of the target density.
Moreover, for target densities
uniform on some bounded region, Fisher's information, which is an important ingredient in diffusion
based optimal scaling theory, is not well-defined.

In particular, let us consider target densities of the form
\begin{equation}
\pi_X(x) = \prod_{i=1}^{d}{f_X(x_{i})};\quad a<x_i<b,\quad\forall~i=1,\ldots,d, 
\label{eq:iid}
\end{equation}
for fixed real values $a<b$.

To handle such situations we provide a bijective (one-to-one and onto) transformation to each $x_i$ so that the transformed
random variables take values on the entire real line. In this paper, we will consider
the well-known logit transformation, given by
\begin{equation}
y_i=\log\left(\frac{x_i-a}{b-x_i}\right);\quad\forall~i=1,\ldots,d.
\label{eq:logit}
\end{equation}
Clearly, for each $i$, $y_i$ takes values on $\mathbb R$, and the resulting joint distribution
of $y=(y_1,\ldots,y_d)$ is given by
\begin{equation}
\pi_Y(y) = \prod_{i=1}^{d}{f_Y(y_{i})};\quad -\infty<y_i<\infty,\quad\forall~i=1,\ldots,d, 
\label{eq:iid2}
\end{equation}
where
\begin{equation}
f_Y(y_i)=(b-a)\times\frac{e^{y_i}}{\left(1+e^{y_i}\right)^2}\times f_X\left(\frac{a+be^{y_i}}{1+e^{y_i}}\right).
\label{eq:f_Y}
\end{equation}

If $f_X$ satisfies the regularity conditions on $(a,b)$, then the transformed density $f_Y$ satisfies 
the corresponding regularity conditions on the real line $\mathbb R$.
Formally, we have the following lemma:

\begin{lemma}
\label{lemma:lemma1}
Regularity conditions on $f_X$ on $(a,b)$ carry over to regularity conditions on $f_Y$ on $\mathbb R$ in the
following ways:
\begin{itemize}
\item[(a)] Assume that $f_X$ is positive with at least three continuous derivatives and that the
fourth derivative exists almost everywhere on $(a,b)$. Then the same holds for the transformed density 
$f_Y$ on $\mathbb R$.

\item[(b)] If $f_X$ satisfies the moment conditions (\ref{eq:assump1}) -- (\ref{eq:assump4}),
then the transformed density $f_Y$ satisfies the same moment conditions with $Y$ replacing $X$.

\item[(c)] If $\left(\log f_X\right)'$ is 
Lipschitz continuous on $(a,b)$, then $\left(\log f_Y\right)'$ is Lipschitz continuous on $\mathbb R$.
\end{itemize}
\end{lemma}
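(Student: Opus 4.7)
My plan rests on a single observation about the inverse logit. Setting $h(y) = (a + be^y)/(1+e^y)$ and $J(y) = h'(y) = (b-a)e^y/(1+e^y)^2$, the definition (\ref{eq:f_Y}) becomes $f_Y(y) = J(y)\, f_X(h(y))$, and both $J$ and $h$ are $C^\infty$ on $\mathbb{R}$ with every derivative $J^{(k)}$, $h^{(k)}$ ($k \geq 1$) uniformly bounded on $\mathbb{R}$ (each is a rational function of $e^y$ whose denominator is a positive power of $1+e^y$). Moreover $0 < J \leq (b-a)/4$ and each ratio $J^{(k)}/J$ is bounded on $\mathbb{R}$. These elementary facts drive all three parts.

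Part (a) follows directly from the chain and product rules applied to $f_Y = J \cdot (f_X \circ h)$: smoothness of $J$, $h$ combined with $C^3$ regularity of $f_X$ on $(a,b)$ gives $C^3$ regularity of $f_Y$ on $\mathbb{R}$. The fourth derivative $f_Y^{(4)}(y)$ exists whenever $f_X^{(4)}(h(y))$ does, and since $h$ is a $C^\infty$ diffeomorphism of $\mathbb{R}$ onto $(a,b)$ (hence locally Lipschitz with locally Lipschitz inverse), the preimage of any null set in $(a,b)$ is null in $\mathbb{R}$.

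For part (b) I would apply the Leibniz rule to $f_Y = J \cdot (f_X \circ h)$ together with Fa\`a di Bruno on $f_X \circ h$ to produce, for each $1 \leq n \leq 4$, an identity of the schematic form
\begin{equation*}
\frac{f_Y^{(n)}(y)}{f_Y(y)} \;=\; \sum_{j=0}^{n} A_{n,j}(y)\, \frac{f_X^{(j)}(h(y))}{f_X(h(y))},
\end{equation*}
in which each coefficient $A_{n,j}(y)$ is a polynomial in $J, J', \ldots, J^{(n)}$ divided at worst by $J$, and therefore bounded on $\mathbb{R}$. The elementary inequality $\left|\sum_j a_j\right|^p \leq C_p \sum_j |a_j|^p$ together with the change of variable $X = h(Y) \sim f_X$ then yields
\begin{equation*}
E_{f_Y}\!\left|\frac{f_Y^{(n)}(Y)}{f_Y(Y)}\right|^p \;\leq\; C \sum_{j=0}^{n} E_{f_X}\!\left|\frac{f_X^{(j)}(X)}{f_X(X)}\right|^p.
\end{equation*}
Invoking (\ref{eq:assump1})--(\ref{eq:assump3}) with $p=4$ for $n \in \{1,2,3\}$ and (\ref{eq:assump4}) with $p=1$ for $n=4$ (the lower-order terms at $n=4$ being controlled by Jensen's inequality) closes (b).

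For part (c) I would decompose
\begin{equation*}
(\log f_Y)'(y) \;=\; \frac{J'(y)}{J(y)} + J(y)\,(\log f_X)'(h(y)).
\end{equation*}
The first summand equals $-\tanh(y/2)$, whose derivative $-\tfrac12 \operatorname{sech}^2(y/2)$ is bounded, so it is Lipschitz on $\mathbb{R}$. For the second, $J$ is bounded and Lipschitz on $\mathbb{R}$ (its derivative is bounded), $h$ is Lipschitz with constant $\sup J \leq (b-a)/4$, and $(\log f_X)'$ is Lipschitz on the bounded interval $(a,b)$ by hypothesis, and therefore bounded on $(a,b)$ (any Lipschitz function on a bounded interval has bounded range); composing the Lipschitz $(\log f_X)'$ with the Lipschitz $h$ yields a bounded Lipschitz function on $\mathbb{R}$, and the product of two bounded Lipschitz functions is Lipschitz. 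Adding the two summands gives (c). The only nontrivial bookkeeping lies in part (b), but I will only need the structural form of the Fa\`a di Bruno expansion---namely that $f_Y^{(n)}/f_Y$ is a bounded-coefficient linear combination of $f_X^{(j)}(h(y))/f_X(h(y))$ for $0 \leq j \leq n$---so no explicit combinatorial formulas for the $A_{n,j}$ need to be written out.
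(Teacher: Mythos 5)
Your proof is correct and, for parts (a) and (b), fills in exactly the computation the paper leaves as ``trivial''/``straightforward'': differentiate $f_Y=J\cdot(f_X\circ h)$, observe that every ratio $J^{(k)}/J$ and every derivative of $h$ is bounded on $\mathbb R$, and transfer the moment bounds via the change of variable $x=h(y)$ (the paper's $z=(a+be^y)/(1+e^y)$); your explicit handling of the $p=1$ versus $p=4$ cases and the null-set preimage point for the a.e.\ fourth derivative are welcome additions. The only genuine divergence is in part (c): the paper differentiates once more, computes $\psi''(y)$ as in (\ref{eq:Lip1}), and bounds it as in (\ref{eq:Lip2}) using the facts that $(\log f_X)'$ is bounded on the bounded interval $(a,b)$ and that its Lipschitz continuity forces $(\log f_X)''$ to be bounded where it exists; you instead stay at the level of first derivatives and close the argument with the algebra of bounded Lipschitz functions (sums, products, and compositions), using the same decomposition $(\log f_Y)'=J'/J+J\cdot(\log f_X)'\circ h$ that underlies the paper's computation. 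Your route is marginally cleaner in that it never needs $(\log f_X)''$ to exist and so sidesteps the almost-everywhere differentiability caveat implicit in the paper's ``equivalent condition'' step, while the paper's route has the advantage of producing the explicit bound (\ref{eq:Lip2}). Both arguments are valid and rest on the same two structural facts: boundedness of the logistic-derived factors and boundedness of $(\log f_X)'$ on $(a,b)$.
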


\begin{proof}
Part (a) is trivial. Part (b) is also straightforward to see by taking derivatives and then making the
transformation $z=(a+be^y)/(1+e^y)$ in the integration associated with the expectation $E_{f_Y}$.

To establish part (c), we prove the equivalent condition of Lipschitz continuity of $\left(\log f_Y(y)\right)'$, that is, 
the absolute value of the second derivative of
\[\psi(y)=\log f_Y(y)=\log (b-a)+y-2\log\left(1+e^y\right)+\log f\left(\frac{a+be^y}{1+e^y}\right)\]
is bounded. 

Note that
\begin{align}
\psi''(y) &= -\frac{2e^y}{\left(1+e^y\right)^2}+\left(\log f_X(z)\right)''\left[\frac{(b-a)e^y}{\left(1+e^y\right)^2}\right]^2
+\left(\log f_X(z)\right)'(b-a)\frac{e^y(1-e^y)}{(1+e^y)^3},
\label{eq:Lip1}
\end{align}
with $z=\left(a+be^y\right)/(1+e^y)$. 
Hence, noting that $\frac{e^y\left|(1-e^y)\right|}{(1+e^y)^3}\leq \frac{e^y(1+e^y)}{(1+e^y)^3}=\frac{e^y}{(1+e^y)^2}$,
we have
\begin{align}
\left|\psi''(y)\right| 
&\leq \frac{2e^y}{\left(1+e^y\right)^2}+
\left|\left(\log f_X(z)\right)''\right|\left[\frac{(b-a)e^y}{\left(1+e^y\right)^2}\right]^2
+\left|\left(\log f_X(z)\right)'\right|(b-a)\frac{e^y}{(1+e^y)^2}\notag\\
&\leq 2 +(b-a)^2\left|\left(\log f_X(z)\right)''\right|
+(b-a)\left|\left(\log f_X(z)\right)'\right|.
\label{eq:Lip2}
\end{align}
Since $\left(\log f_X(z)\right)'$ is Lipschitz continuous on $(a,b)$, this is clearly bounded on $(a,b)$, and 
by the equivalent characterization of Lipschitz continuity, $\left(\log f_X(z)\right)''$ is bounded on $(a,b)$. 
Hence, the right hand side of (\ref{eq:Lip2}) is bounded above, proving that 
$\left(\log f_Y\right)'$ is Lipschitz continuous on $\mathbb R$.

\end{proof}

Using Lemma \ref{lemma:lemma1}, we then have the following theorem, which is analogous to Theorem \ref{theorem:theorem1},
but deals with the transformed target density $f_Y$ instead of the original target $f_X$, which is supported on $(a,b)$.
\begin{theorem}
\label{theorem:theorem2}
Assume that $f_X$ is positive with at least three continuous derivatives and that the fourth derivative exists almost everywhere
on $(a,b)$. 
Also assume that $(\log f_X)'$ is Lipschitz continuous on $(a,b)$, 
and that (\ref{eq:assump1}) -- (\ref{eq:assump4}) hold. 
Let $Y^d_t=(Y_{t,1},\ldots,Y_{t,d})$, where $Y_{t,i}=\log\left(\frac{X_{t,i}-a}{b-X_{t,i}}\right);$ $i=1,\ldots,d$.
As before, we define 
${U_{t}}^{d} = {Y_{[dt],1}}$ ($[\cdot]$ denotes the integer part), 
the sped up first component of the actual additive TMCMC-induced Markov chain, associated with the logistic 
transformation of the original random variable $X_{[dt],1}$ supported on $(a,b)$. 
Let $Y^d_0\sim\pi_Y$, that is, the $d$-dimensional additive TMCMC chain is started at stationarity
(equivalently, $X^d_0\sim\pi_X$), and
let the transition be given by $(y_1,\ldots,y_d)\rightarrow (y_1+b_1\epsilon,\ldots,y_d+b_d\epsilon)$, where
for $i=1,\ldots,d$, $b_i=\pm 1$ with equal probability and 
$\epsilon\equiv\frac{\ell}{\sqrt{d}}\epsilon^*$, where $\epsilon^*\sim q(\cdot)I_{\{\epsilon^*>0\}}$. 
We then have
\[
\{U^d_t;~t\geq 0\}\Rightarrow \{U_t;~t\geq 0\},
\]
where $U_0\sim f_Y$ and $\{U_t;~t\geq 0\}$ satisfies the Langevin stochastic differential equation (SDE)
\begin{equation}
dU_t=g(\ell)^{1/2}dB_t+\frac{1}{2}g(\ell)\left(\log f_Y(U_t)\right)'dt,
\label{eq:sde2}
\end{equation}
with $B_t$ denoting standard Brownian motion at time $t$,
\begin{equation}
g(\ell)=4\ell^2\int_0^{\infty} u^2\Phi\left (- \frac{u\ell\sqrt{\mathbb{I_Y}}}{2}\right)q(u)du;
\label{eq:diff_speed_iid2}
\end{equation}
$\Phi(\cdot)$ being the standard normal cumulative distribution function (cdf), and
\begin{align}
\mathbb I_Y&=E_{f_Y}\left(\frac{f_Y'(Y)}{f_Y(Y)}\right)^2\notag\\
&=E_{f_Y}\left[1-\frac{2e^Y}{1+e^Y}+\frac{f'_X\left(\frac{a+be^Y}{1+e^Y}\right)}{f_X\left(\frac{a+be^Y}{1+e^Y}\right)}
\times\frac{(b-a)e^{Y}}{(1+e^Y)^2}\right]^2\notag\\
&=\bigint_{-\infty}^{\infty}
\left[1-\frac{2e^y}{1+e^y}+\frac{f'_X\left(\frac{a+be^y}{1+e^y}\right)}{f_X\left(\frac{a+be^y}{1+e^y}\right)}
\times\frac{(b-a)e^{y}}{(1+e^y)^2}\right]^2
\frac{(b-a)e^{y}}{(1+e^y)^2}f_X\left(\frac{a+be^y}{1+e^y}\right)dy.
\label{eq:information}
\end{align}
\end{theorem}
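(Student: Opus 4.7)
The plan is to reduce Theorem~\ref{theorem:theorem2} to a direct application of Theorem~\ref{theorem:theorem1} with $f_Y$ in place of $f_X$, followed by an explicit computation of the Fisher information $\mathbb{I}_Y$. The key observation is that the prescribed transition $(y_1,\ldots,y_d)\mapsto(y_1+b_1\epsilon,\ldots,y_d+b_d\epsilon)$ is precisely an additive TMCMC update on $\mathbb{R}^d$ whose invariant density is the product $iid$ form $\pi_Y$. Since $Y^d_0\sim\pi_Y$ is equivalent to $X^d_0\sim\pi_X$ via the bijection (\ref{eq:logit}), the chain under study is exactly the type of chain handled by Theorem~\ref{theorem:theorem1}---provided $f_Y$ satisfies the regularity hypotheses of that theorem on $\mathbb{R}$.

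The first step is therefore to verify those hypotheses for $f_Y$. This is exactly what Lemma~\ref{lemma:lemma1} delivers: part (a) gives positivity and the required smoothness of $f_Y$ on $\mathbb{R}$, part (b) gives the moment conditions (\ref{eq:assump1})--(\ref{eq:assump4}) with $Y$ replacing $X$, and part (c) gives Lipschitz continuity of $(\log f_Y)'$ on $\mathbb{R}$. With these in hand, Theorem~\ref{theorem:theorem1} applied to $f_Y$ immediately yields $\{U^d_t\}\Rightarrow\{U_t\}$, the Langevin SDE (\ref{eq:sde2}), and the diffusion speed $g(\ell)$ in the form (\ref{eq:diff_speed_iid2}) with $\mathbb{I}_Y=E_{f_Y}(f'_Y(Y)/f_Y(Y))^2$.

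All that then remains is the explicit evaluation of $\mathbb{I}_Y$. Set $z(y)=(a+be^y)/(1+e^y)$; the chain rule yields $dz/dy=(b-a)e^y/(1+e^y)^2$, and differentiating
\[
\log f_Y(y)=\log(b-a)+y-2\log(1+e^y)+\log f_X(z(y))
\]
gives
\[
\frac{f'_Y(y)}{f_Y(y)}=1-\frac{2e^y}{1+e^y}+\frac{f'_X(z(y))}{f_X(z(y))}\cdot\frac{(b-a)e^y}{(1+e^y)^2}.
\]
Squaring and integrating against $f_Y$ produces the middle equality of (\ref{eq:information}); substituting the explicit form (\ref{eq:f_Y}) for the density in the expectation rewrites it as the integral over $\mathbb{R}$ displayed in the last equality.

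I do not expect any substantive obstacle: the mathematical content is already packaged in Lemma~\ref{lemma:lemma1} (which ensures the transformed density inherits every regularity property needed) and in Theorem~\ref{theorem:theorem1} itself. The only points worth checking carefully are that Lemma~\ref{lemma:lemma1}(b) applied to $f_Y$ guarantees finiteness of $\mathbb{I}_Y$ (which follows immediately from (\ref{eq:assump1}) with $Y$ in place of $X$), and that the chain-rule manipulation of $\log f_Y$ is justified on all of $\mathbb{R}$, which is ensured by Lemma~\ref{lemma:lemma1}(a). Neither step requires an idea beyond those already developed in the preceding lemma.
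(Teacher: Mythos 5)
Your proposal is correct and matches the paper's own (implicit) argument exactly: the paper proves Theorem~\ref{theorem:theorem2} precisely by invoking Lemma~\ref{lemma:lemma1} to transfer the regularity conditions from $f_X$ on $(a,b)$ to $f_Y$ on $\mathbb R$ and then applying Theorem~\ref{theorem:theorem1} to the transformed density, with the expression for $\mathbb I_Y$ obtained by the same chain-rule computation of $(\log f_Y)'$ that you carry out. Nothing is missing.
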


\subsection{SDE associated with the original bounded random variables $X$}
\label{subsec:sde_x}
Theorem \ref{theorem:theorem2} gives the SDE and the diffusion speed $g(\ell)$ associated with $U^d_t=Y_{[dt],1}$.
However, we are interested in the SDE and the diffusion speed associated with 
\begin{equation}
V^d_t=X_{[dt],1}=\frac{a+b e^{U^d_t}}{1+e^{U^d_t}}.
\label{eq:v}
\end{equation}

In this regard, we have the following theorem:

\begin{theorem}
\label{theorem:theorem3}
Under the assumptions of Theorem \ref{theorem:theorem2} it holds that
\[
\{V^d_t;~t\geq 0\}\Rightarrow \{V_t;~t\geq 0\},
\]
where $V_0\sim f_X$ and $\{V_t;~t\geq 0\}$ satisfies the SDE
\begin{equation}
\frac{(b-a)dV_t}{(V_t-a)(b-V_t)}=g(\ell)^{1/2}dB_t+
\frac{1}{2}g(\ell)\left\{\left(\log f_Y\left(\log \left(\frac{V_t-a}{b-V_t}\right)\right)\right)'
+\left(\frac{b+a-2V_t}{b-a}\right)\right\}dt.
\label{eq:sde2_V}
\end{equation}
\end{theorem}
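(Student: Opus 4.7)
The plan is to obtain Theorem \ref{theorem:theorem3} as an application of It\^o's formula to the SDE of Theorem \ref{theorem:theorem2}, combined with the continuous mapping theorem for weak convergence in the Skorohod topology. Writing $h(u)=(a+be^u)/(1+e^u)$, note that $h:\mathbb R\to(a,b)$ is smooth and strictly increasing, with $h'(u)=(b-a)e^u/(1+e^u)^2$; a short computation gives the two identities I expect to use repeatedly, namely
\begin{equation*}
h'(u)=\frac{(h(u)-a)(b-h(u))}{b-a},\qquad \frac{h''(u)}{h'(u)}=\frac{1-e^u}{1+e^u}=\frac{b+a-2h(u)}{b-a}.
\end{equation*}
Since $V^d_t=h(U^d_t)$ by (\ref{eq:v}) and $h$ is continuous and bounded, the map it induces on the Skorohod space $D([0,\infty),\mathbb R)$ is continuous, so the continuous mapping theorem applied to Theorem \ref{theorem:theorem2} immediately yields $\{V^d_t\}\Rightarrow\{h(U_t)\}$. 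In particular, since $U_0\sim f_Y$ and the change-of-variables formula (\ref{eq:f_Y}) was set up precisely so that the law of $h(U_0)$ is $f_X$, we get $V_0\sim f_X$.

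Next, I would identify the limiting SDE by applying It\^o's formula to $V_t=h(U_t)$. From (\ref{eq:sde2}) the quadratic variation is $d\langle U\rangle_t=g(\ell)\,dt$, so
\begin{equation*}
dV_t=h'(U_t)\,dU_t+\tfrac{1}{2}h''(U_t)g(\ell)\,dt
     =h'(U_t)g(\ell)^{1/2}dB_t+\tfrac{1}{2}g(\ell)\Bigl[h'(U_t)(\log f_Y(U_t))'+h''(U_t)\Bigr]dt.
\end{equation*}
Dividing through by $h'(U_t)=(V_t-a)(b-V_t)/(b-a)$ and using $U_t=\log((V_t-a)/(b-V_t))$, the diffusion coefficient becomes the identity and the drift turns into $\tfrac{1}{2}g(\ell)\{(\log f_Y(U_t))'+h''(U_t)/h'(U_t)\}$. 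Substituting the second identity above for $h''/h'$ gives exactly the form stated in (\ref{eq:sde2_V}).

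Finally, to make the It\^o argument rigorous at the level of processes rather than just the limiting law, I would note that the regularity of $h$ together with the Lipschitz continuity of $(\log f_Y)'$ on $\mathbb R$ (established in Lemma \ref{lemma:lemma1}(c)) guarantees strong existence and pathwise uniqueness of (\ref{eq:sde2_V}) for the given initial law, so that the process identified above as $h(U_t)$ is indeed the unique solution of the SDE claimed in the theorem. The substantive work is entirely algebraic and lies in verifying the two identities for $h'$ and $h''/h'$; once these are in hand the rest is a routine It\^o calculation, and the weak-convergence half of the statement is essentially for free via the continuous mapping theorem.
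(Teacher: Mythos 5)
Your proof is correct and follows essentially the same route as the paper, which simply asserts that the weak convergence follows from the continuity of the map (\ref{eq:v}) and that the SDE follows by applying the It\^{o} formula to (\ref{eq:sde2}); you have supplied the explicit computations of $h'$ and $h''/h'$ that the paper leaves implicit, and these check out.
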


\begin{proof}
Since $\{U^d_t;~t\geq 0\}\Rightarrow \{U_t;~t\geq 0\}$, it follows from (\ref{eq:v}) that 
$\{V^d_t;~t\geq 0\}\Rightarrow \{V_t;~t\geq 0\}$.
SDE (\ref{eq:sde2_V}) follows from (\ref{eq:sde2}) by using transformation (\ref{eq:v}) 
and applying the It\^{o} formula. 

\end{proof}

\subsection{Notion of diffusion speed associated with the original bounded random variables $X$}
\label{subsec:diffusion_speed_x}

Since the SDE (\ref{eq:sde2_V}) is not of the same form as (\ref{eq:sde2}) where a measure of
diffusion speed, $g(\ell)$, is well-defined, one may enquire if such notion of diffusion speed
at all exists in the case of (\ref{eq:sde2_V}). Intuitively, SDE (\ref{eq:sde2_V}) must have exactly
the same diffusion speed as (\ref{eq:sde2}), because of the bijection (\ref{eq:v}). It follows
from Theorem \ref{theorem:theorem4} below that
this is indeed the case.

\begin{theorem}
\label{theorem:theorem4}
Assume that $\{Z_t;~t\geq 0\}$ satisfies the SDE
\begin{equation}
\frac{(b-a)dZ_t}{(Z_t-a)(b-Z_t)}=dB_t+
\frac{1}{2}\left\{\left(\log f_Y\left(\log \left(\frac{Z_t-a}{b-Z_t}\right)\right)\right)'
+\left(\frac{b+a-2Z_t}{b-a}\right)\right\}dt.
\label{eq:sde_Z}
\end{equation}
Then $\{V_t;~t\geq 0\}=\{Z_{g(\ell)t};~t\geq 0\}$ satisfies SDE (\ref{eq:sde2_V}).

\end{theorem}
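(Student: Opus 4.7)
The plan is to prove Theorem \ref{theorem:theorem4} by a standard time-change argument for It\^{o} SDEs, exploiting Brownian scaling. The nontrivial-looking feature is that the SDE (\ref{eq:sde_Z}) is not written in the standard form $dZ_t = \sigma(Z_t)dB_t+\mu(Z_t)dt$, because the $Z_t$-dependent factor $(b-a)/[(Z_t-a)(b-Z_t)]$ multiplies $dZ_t$ on the left. I would first rewrite (\ref{eq:sde_Z}) in standard form by multiplying both sides by $\sigma(Z_t):=(Z_t-a)(b-Z_t)/(b-a)$, obtaining
\[
dZ_t=\sigma(Z_t)\,dB_t+\tfrac{1}{2}\sigma(Z_t)\!\left\{\!\left(\log f_Y\!\left(\log\tfrac{Z_t-a}{b-Z_t}\right)\right)'+\tfrac{b+a-2Z_t}{b-a}\right\}dt.
\]
Observe that this is the same SDE one would obtain by applying the It\^{o} formula to $Z_t=(a+be^{U_t^\circ})/(1+e^{U_t^\circ})$ where $U_t^\circ$ satisfies a unit-speed Langevin equation with drift $\tfrac{1}{2}(\log f_Y)'$.

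Next I would invoke the Brownian scaling property: for any constant $c>0$, the process $\tilde B_t := c^{-1/2} B_{ct}$ is a standard Brownian motion. Setting $V_t := Z_{g(\ell)t}$ and using the substitution $s=g(\ell)u$ in the It\^{o} integral representation
\[
Z_{g(\ell)t}=Z_0+\int_0^{g(\ell)t}\sigma(Z_s)\,dB_s+\int_0^{g(\ell)t}\tfrac{1}{2}\sigma(Z_s)\{\cdots\}\,ds,
\]
I rewrite the drift integral directly as $g(\ell)\int_0^t\tfrac{1}{2}\sigma(V_u)\{\cdots\}\,du$, and the stochastic integral, by the time-change theorem for martingales (or equivalently by Brownian scaling), as $\sqrt{g(\ell)}\int_0^t\sigma(V_u)\,d\tilde B_u$ with $\tilde B$ a standard Brownian motion. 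This yields
\[
dV_t=\sqrt{g(\ell)}\,\sigma(V_t)\,d\tilde B_t+g(\ell)\cdot\tfrac{1}{2}\sigma(V_t)\!\left\{\!\left(\log f_Y\!\left(\log\tfrac{V_t-a}{b-V_t}\right)\right)'+\tfrac{b+a-2V_t}{b-a}\right\}dt.
\]

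Finally, dividing both sides by $\sigma(V_t)=(V_t-a)(b-V_t)/(b-a)$ restores the implicit form of the SDE, producing exactly (\ref{eq:sde2_V}). The initial condition $V_0\sim f_X$ holds because $V_0=Z_0$ and $Z_0$ is taken stationary.

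I do not expect any serious obstacle; the only care required is bookkeeping the square-root versus linear scaling of the diffusion and drift terms under the time change, and ensuring that $\sigma(V_t)\neq 0$ almost surely so that division back is legitimate -- this holds because $V_t\in(a,b)$ a.s., inherited from $U_t\in\mathbb R$ via the smooth bijection in (\ref{eq:v}). Everything else is a direct application of Brownian scaling together with the It\^{o} formula already invoked in Theorem \ref{theorem:theorem3}.
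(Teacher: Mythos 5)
Your proposal is correct and follows essentially the same route as the paper: a time change $s=g(\ell)t$ combined with Brownian scaling, applied to the SDE rewritten with the factor $\sigma(Z_t)=(Z_t-a)(b-Z_t)/(b-a)$ moved to the right-hand side. Your integral-representation formulation with the martingale time-change theorem is a slightly more careful rendering of the paper's informal differential manipulation $dB_s=\sqrt{g(\ell)\,dt}=\sqrt{g(\ell)}\,dB_t$, but the substance is identical.
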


\begin{proof}
The proof is analogous to the arguments of \ctn{Bedard06} who clarify the 
notion of diffusion speed in the case of Langevin SDE.

Let $s=g(\ell)t$, so that $ds=g(\ell)dt$.
Hence,
\begin{align}
dZ_s
&=\frac{(Z_s-a)(b-Z_s)}{(b-a)}\left\{dB_s+\frac{1}{2}\left\{\left(\log f_Y\left(\log \left(\frac{Z_s-a}{b-Z_s}\right)\right)\right)'
+\left(\frac{b+a-2Z_s}{b-a}\right)\right\}ds\right\}\notag\\
&=\frac{(Z_{g(\ell)t}-a)(b-Z_{g(\ell)t})}{(b-a)}\notag\\
&\quad\quad\times\left\{\sqrt{g(\ell)dt}
+\frac{1}{2}\left\{\left(\log f_Y\left(\log \left(\frac{Z_{g(\ell)t}-a}{b-Z_{g(\ell)t}}\right)\right)\right)'
+\left(\frac{b+a-2Z_{g(\ell)t}}{b-a}\right)\right\}g(\ell)dt\right\}\notag\\
&=\frac{(Z_{g(\ell)t}-a)(b-Z_{g(\ell)t})}{(b-a)}\notag\\
&\quad\quad\times\left\{\sqrt{g(\ell)}dB_t+
\frac{1}{2}\left\{\left(\log f_Y\left(\log \left(\frac{Z_{g(\ell)t}-a}{b-Z_{g(\ell)t}}\right)\right)\right)'
+\left(\frac{b+a-2Z_{g(\ell)t}}{b-a}\right)\right\}g(\ell)dt\right\}\notag\\
&=\frac{(V_t-a)(b-V_t)}{(b-a)}\notag\\
&\quad\quad\times\left\{\sqrt{g(\ell)}dB_t+\frac{1}{2}\left\{\left(\log f_Y\left(\log \left(\frac{V_t-a}{b-V_t}\right)\right)\right)'
+\left(\frac{b+a-2V_t}{b-a}\right)\right\}g(\ell)dt\right\}\notag\\
&=dV_t.\notag
\end{align}
\end{proof}

Theorem \ref{theorem:theorem4} shows that if $Z_t$ is interpreted as a process with unit speed measure, then the limiting process 
$V_t$ is a ``sped-up" version of $Z_t$ by the quantity $g(\ell)$. 
Hence, $g(\ell)$ can be interpreted as a measure of the diffusion speed of SDE (\ref{eq:sde2_V}).
Thus, it makes sense to maximize $g(\ell)$ with respect to $\ell$ to obtain optimal scaling even when
the original random variables $X$ are bounded.

It is clear that exactly the same ideas carry over to situations where the target is a product of 
independent but non-identical densities (assuming that the individual densities
have the same support), and for TMCMC within Gibbs algorithms, as considered in \ctn{Dey13}. We omit details for brevity.

\section{Optimal scalings and acceptance rates with respect to different proposal distributions and target densities
in our SDE based approach}
\label{sec:optimal}

From Theorem \ref{theorem:theorem2} the optimal scales and the optimal acceptance rates under different proposal distributions
can be obtained as follows. 
Let $\ell^*$ be the maximizer of 
\begin{equation}
g^*(\ell)=4\ell^2\int_0^{\infty} u^2\Phi\left (- \frac{u\ell}{2}\right)q(u)du.
\label{eq:diff_speed_iid3}
\end{equation}
Then the optimal scale is given by
\begin{equation}
\ell_{opt}=\frac{\ell^*}{\sqrt{\mathbb I_Y}},
\label{eq:optimal_scale}
\end{equation}
and the corresponding optimal acceptance rate is given by 
\begin{eqnarray}
\alpha_{opt}&=&4\int_0^{\infty} \Phi\left (- \frac{u\ell_{opt}\sqrt{\mathbb I_Y}}{2}\right)q(u)du\nonumber\\
&=&4\int_0^{\infty} \Phi\left (- \frac{u\ell^*}{2}\right)q(u)du.
\label{eq:optimal_acceptance_rate}
\end{eqnarray}

Thus, $\ell^*$ depends only upon the proposal density $q(\cdot)$, the optimal scale $\ell_{opt}$ depends upon
$q(\cdot)$ as well as Fisher's information $\mathbb I_Y$, and the optimal acceptance rate depends upon
$q(\cdot)$ only. 
Note that the optimal scale depends upon the chosen logit transformation $y_i=\log\left(\frac{x_i-a}{b-x_i}\right)$
only through $\mathbb I_Y$. Since the optimal acceptance rate is independent of $\mathbb I_Y$, it is clearly
independent of any bijective transformation used for mapping $x_i$ to $y_i$. 
As is also clear, the optimal acceptance rate does not depend upon the target density or its support.

Table \ref{table:table1} displays the optimal scales and optimal acceptance rates with respect to different
choices of the proposal density $q(\cdot)$ and target densities associated with truncated normal and
uniform distributions. As the degrees of freedom of the Student's $t$ proposal density increases from 1  to 5,
that is, as the proposal distribution approaches the $N(0,1)$ density beginning with the $Cauchy(0,1)$ density,
it is seen that optimal scales and optimal acceptance rates increase and approach those associated
with the $N(0,1)$ proposal in the TMCMC case; recall, in particular, that the optimal acceptance rate 
of additive TMCMC for the $N(0,1)$ proposal is 0.439. 

This increase in the optimal scales and the optimal acceptance rates are to be expected since
the successive proposal distributions for increasing degrees of freedom have progressively thinner tails
resulting in greater acceptance rates -- the optimal scales increase to compensate for the thin tails
so that the acceptance rates do not increase too fast.

Note that when the proposal distribution $q(\cdot)$ is $U(0,1)$, the optimal scale 
is much higher than those associated with the $t$-distributions. This is again
to be expected since unlike for $t$-distribution based proposals, here the proposed $\epsilon^*\sim U(0,1)$
must lie within $(0,1)$ with probability one, so that the resultant proposed values 
$x_i+b_i\frac{\ell}{\sqrt{d}}\epsilon^*$ are quite close to $x_i$, resulting in too high acceptance rate
unless the scale $\ell$ is quite large. It is also noteworthy that in this example this case of $U(0,1)$ proposal corresponds
to target distribution with bounded support as well as proposal with bounded support.

\begin{table}
\centering
\caption{Optimal scales ($\ell_{opt}$) and optimal acceptance rates ($\alpha_{opt}$)
under different proposal distributions when the target densities are $iid$ products of
$N(0,1)$ truncated on $(-1,1)$ and $U(-1,1)$, respectively.}

\begin{tabular}{|c|c|c|c|c|}
\hline
\multirow{2}{*}{\backslashbox{Proposal}{Target}} 
& \multicolumn{2}{|c|}{$f_X(x)=N(x;0,1)I_{(-1,1)}(x)$} & \multicolumn{2}{|c|}{$f_X(x)=U(x;(-1,1))$}\\  
\cline{2-5}
& $\ell_{opt}$ & $\alpha_{opt}$ & $\ell_{opt}$ & $\alpha_{opt}$\\ 
\hline
$q(\cdot)= t_1(0,1)~(Cauchy(0,1))$ & 2.934 & 0.380 & 3.358 & 0.380\\
$q(\cdot)= t_2(0,1)$ & 3.196 & 0.413 & 3.658 & 0.413\\
$q(\cdot)= t_3(0,1)$ & 3.319 & 0.423 & 3.799 & 0.423\\
$q(\cdot)= t_4(0,1)$ & 3.391 & 0.428 & 3.882 & 0.428\\
$q(\cdot)= t_5(0,1)$ & 3.439 & 0.431 & 3.936 & 0.431\\
$q(\cdot)= U([0,1])$ & 5.572 & 0.420  & 6.377 & 0.420\\
\hline
\end{tabular}
\label{table:table1}
\end{table}

\section{Comparison with the ESJD approach associated with RWM}
\label{sec:comparison}

\ctn{NealRoberts11} consider $X=(X_{1},X_{2},\ldots,X_{d})$ to be a random vector with $0<X_{i}<1$ for each $i$ 
and that the density $\pi$ for $X$ has the following form:
\begin{equation}
\pi(x)= \prod_{i=1}^{d} f_X(x_{i}) = \prod_{i=1}^{d} \exp(h(x_{i}));  
\hspace{0.5 cm} 0<x_{i}<1; \hspace{0.5 cm} \forall~ i=1,2,\ldots,d,
\label{eq:density}
\end{equation}
where $h$ is continuously differentiable on $[0,1]$.

Theorem 4.1 of \ctn{NealRoberts11} provides ESJD based optimal scaling of RWM with the $Cauchy(0,1)$ proposal
when the target distribution is of the form (\ref{eq:density}). The scaling they consider is
$\frac{\ell}{d\log d}$. In other words, \ctn{NealRoberts11} consider RWM of the form
$x_i+\frac{\ell}{d\log d}\epsilon^*_i$, where $\epsilon^*_i\stackrel{iid}{\sim}Cauchy(0,1)$.
The optimal acceptance rate in this case, provided in Theorem 4.1 of \ctn{NealRoberts11}, is given 
by 0.368.

Our result in this regard (which is actually a conjecture; see Section \ref{subsec:conjecture}) 
is quite significantly different from that of Theorem 4.1 of \ctn{NealRoberts11}.
Indeed, our optimal acceptance rate with $Cauchy(0,1)$ proposal for $\epsilon^*$ is 0.380 (see also the first row
of Table \ref{table:table1}), which is higher
than that obtained by \ctn{NealRoberts11}. But more significantly, while 
the scaling in the case of additive TMCMC is of the form $\ell/\sqrt{d}$, that of RWM based on ESJD is of the form $\ell/(d\log d)$.
Consequently, with $Cauchy(0,1)$ proposal, the former is expected to explore the target distribution in much less number of iterations compared to the latter. 
This seems to be a very significant advantage of our TMCMC approach compared with RWM. 

In order to assess the performance of additive TMCMC and RWM for Cauchy proposal, we conduct simulation studies,
assuming the target density to be a product of $N(0,1)$ densities truncated on $(-1,1)$. The additive TMCMC
considers moves of the type 
\[
(x_1,\ldots,x_d)\rightarrow \left(x_1+\frac{\ell_{TMCMC,opt}}{\sqrt{d}}b_1\epsilon^*,\ldots,
x_d+\frac{\ell_{TMCMC,opt}}{\sqrt{d}}b_d\epsilon^*\right),
\]
where $\epsilon^*\sim Cauchy(0,1)$ such that $\epsilon^*>0$, and $b_i=\pm 1$ with probability each, for $i=1,\ldots,d$. 
On the other hand, RWM considers moves of the type
\[
(x_1,\ldots,x_d)\rightarrow \left(x_1+\frac{\ell_{RWM,opt}}{d\log d}\epsilon^*_d,\ldots,
x_d+\frac{\ell_{RWM,opt}}{d\log d}\epsilon^*_d\right),
\]
with $\epsilon^*_i\stackrel{iid}{\sim}Cauchy(0,1)$, for $i=1,\ldots,d$.  

We conduct three experiments, with $d=10,50,100$, comparing the autocorrelations of TMCMC and RWM chains in each case.
In all the cases, we ran the two algorithms for $10^6$ iterations, starting with a draw from the target distribution.
For TMCMC, we set $\ell_{TMCMC,opt}=2.934$, as provided in Table \ref{table:table1}. The empirical acceptance rates, correct up to
three decimal places, turned out to be 0.381, 0.379 and 0.380, respectively, for dimensions $d=10,50$ and $100$.
Thus, the empirical acceptance rates turned out to be very accurate, even for dimension as small as $d=10$. These empirical results
also serve to strengthen our belief regarding the conjecture made in Section \ref{subsec:conjecture}.

For RWM we tuned 
$\ell_{RWM,opt}$ such that the empirical acceptance rate is approximately 0.368. For dimension $d=10,50,100$, we obtain 
$\ell_{RWM,opt}=1.6,2.06,2.26$, which yielded empirical acceptance rates 0.365, 0.374 and 0.368, respectively, correct up to
three decimal places.

As already mentioned in Section \ref{subsubsec:computational_gain}, RWM took around 43 minutes to perform $10^6$ iterations
for 100 dimensions,
while TMCMC required only around 28 minutes to perform the same number of iterations.

Figure \ref{fig:comparison2} compares the autocorrelations associated with TMCMC (thick, green vertical lines) and RWM (red vertical lines) chains for dimensions 10, 50 and 100.
In every case, the autocorrelations corresponding to TMCMC are uniformly lower than those based on RWM. This clearly appears to
be the consequence of lesser complexity of additive TMCMC with scaling $\ell/\sqrt{d}$ as opposed to that of RWM with scaling $\ell/(d\log d)$.

Apart from the autocorrelations, we have also compared TMCMC with RWM with respect to the KS distance. For
$d=10$, the TMCMC and RWM based KS distances, up to three decimal places, are 0.006 and 0.008, respectively; 
for $d=50$, the respective distances are 0.013 and 0.035, and for $d=100$, the TMCMC based KS distance is 0.014, 
while that based on RWM is 0.041. In other words, TMCMC significantly outperforms RWM with respect to the
$Cauchy(0,1)$ proposal in terms of the KS distance.




%

\begin{figure}
\subfigure [$d=10$.]{ \label{fig:comp1}
\includegraphics[width=5.1cm]{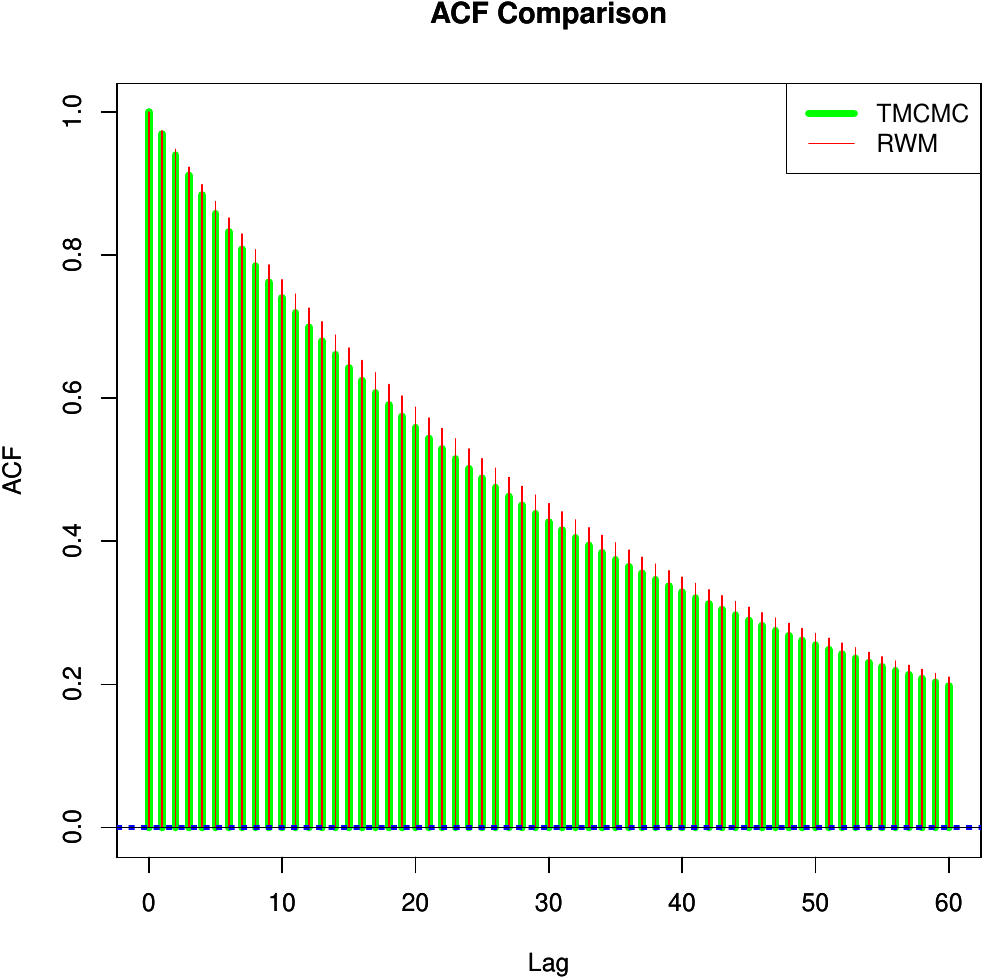}}
\hspace{-2mm}
\subfigure [$d=50$.]{ \label{fig:comp2}
\includegraphics[width=5.1cm]{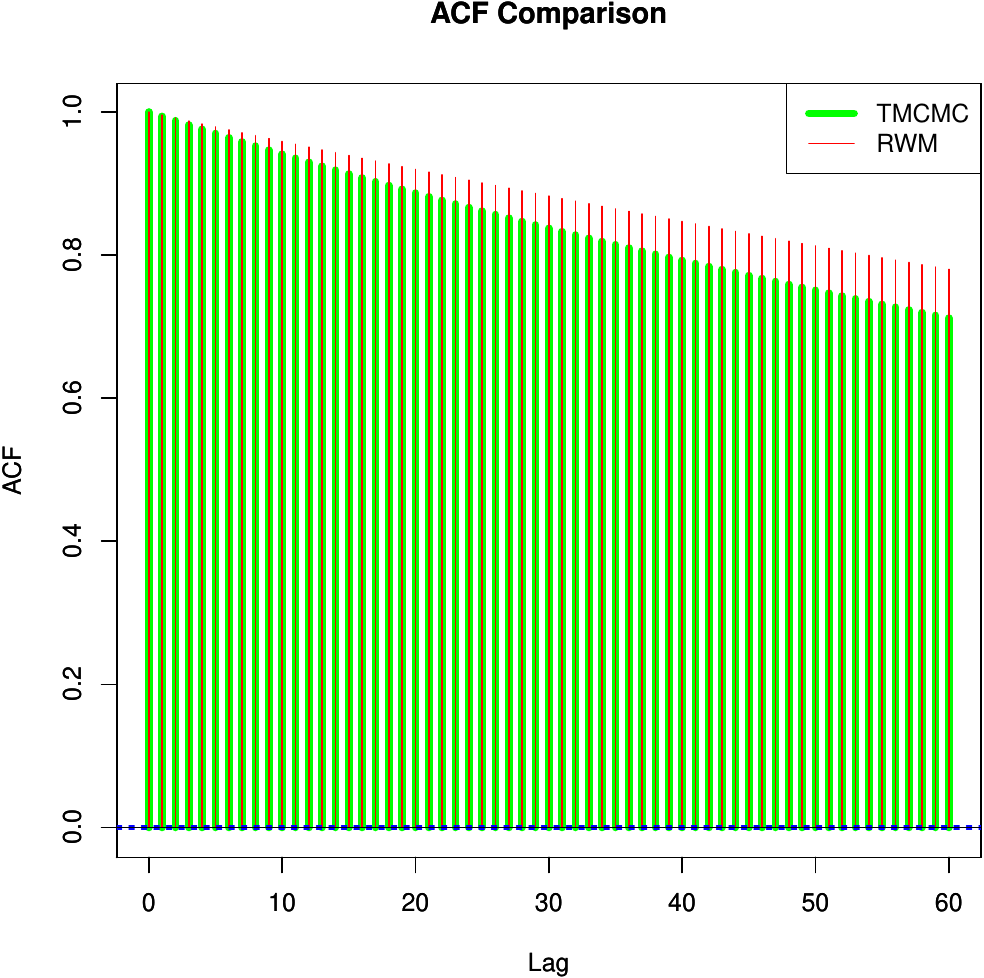}}
\hspace{-2mm}
\subfigure [$d=100$.]{ \label{fig:comp3}
\includegraphics[width=5.1cm]{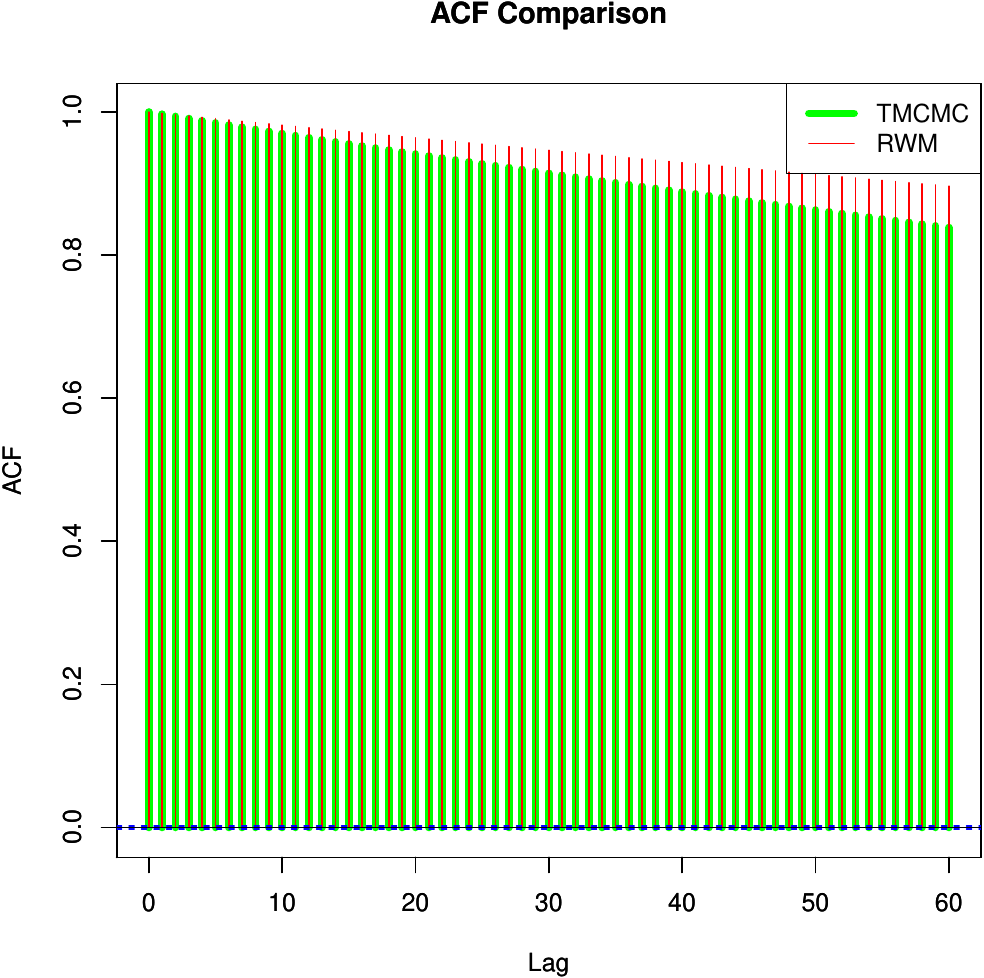}}
\caption{ Panels (a), (b) and (c) compare the autocorrelations based on $10^6$ iterations of additive TMCMC and RWM when 
the true target density is the product of $N(0,1)$ truncated
on $(-1,1)$, with dimensions $d=10,50,100$, respectively. 
}
\label{fig:comparison2}
\end{figure}

Figure \ref{fig:comparison} magnifies the issue related to the speed of exploration of the target density 
by additive TMCMC and RWM, by comparing the two algorithms for the first 10,000 iterations when $d=10$.
As seen in the figure, in the first 10,000 iterations 
TMCMC explored the target density more adequately than RWM, the traceplots indicate faster mixing of TMCMC compared to RWM, and
the autocorrelation of TMCMC decayed much faster than that of RWM. 
In this case, the TMCMC based KS distance is $0.046$ while that based on RWM is $0.077$, confirming the visual insight
offered by Figure \ref{fig:comparison}.

\begin{figure}
\subfigure [RWM histogram and true density.]{ \label{fig:rwm1}
\includegraphics[width=5cm]{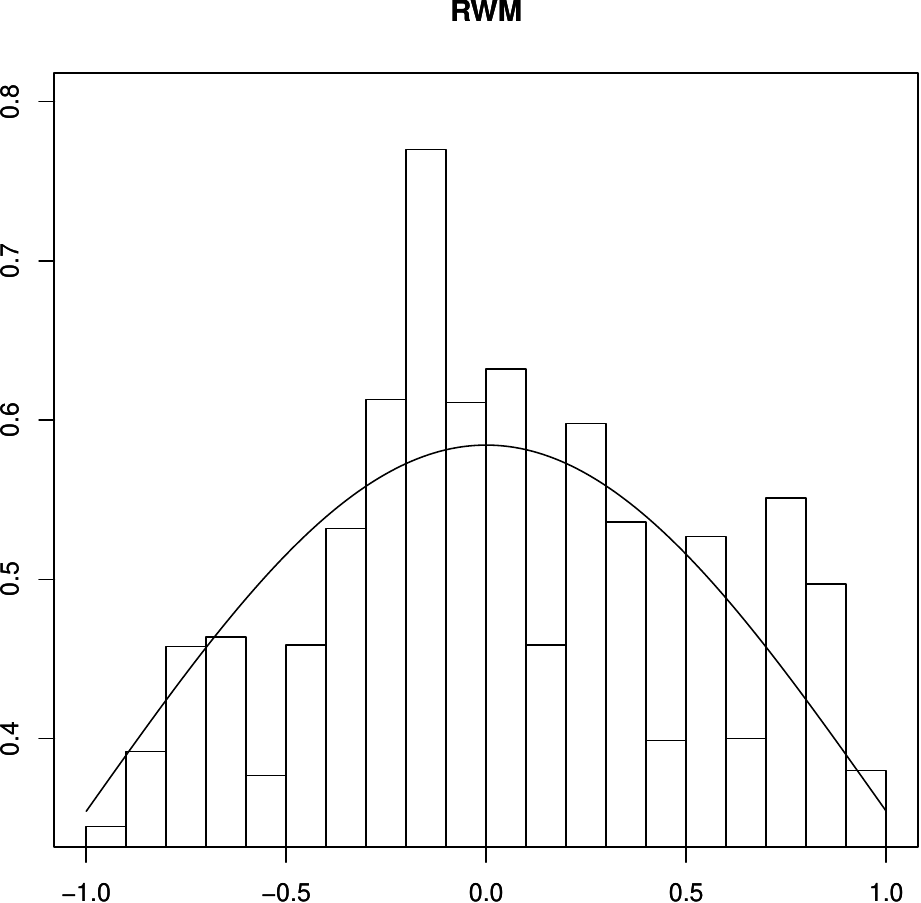}}
\hspace{-2mm}
\subfigure [RWM traceplot.]{ \label{fig:rwm2}
\includegraphics[width=5cm]{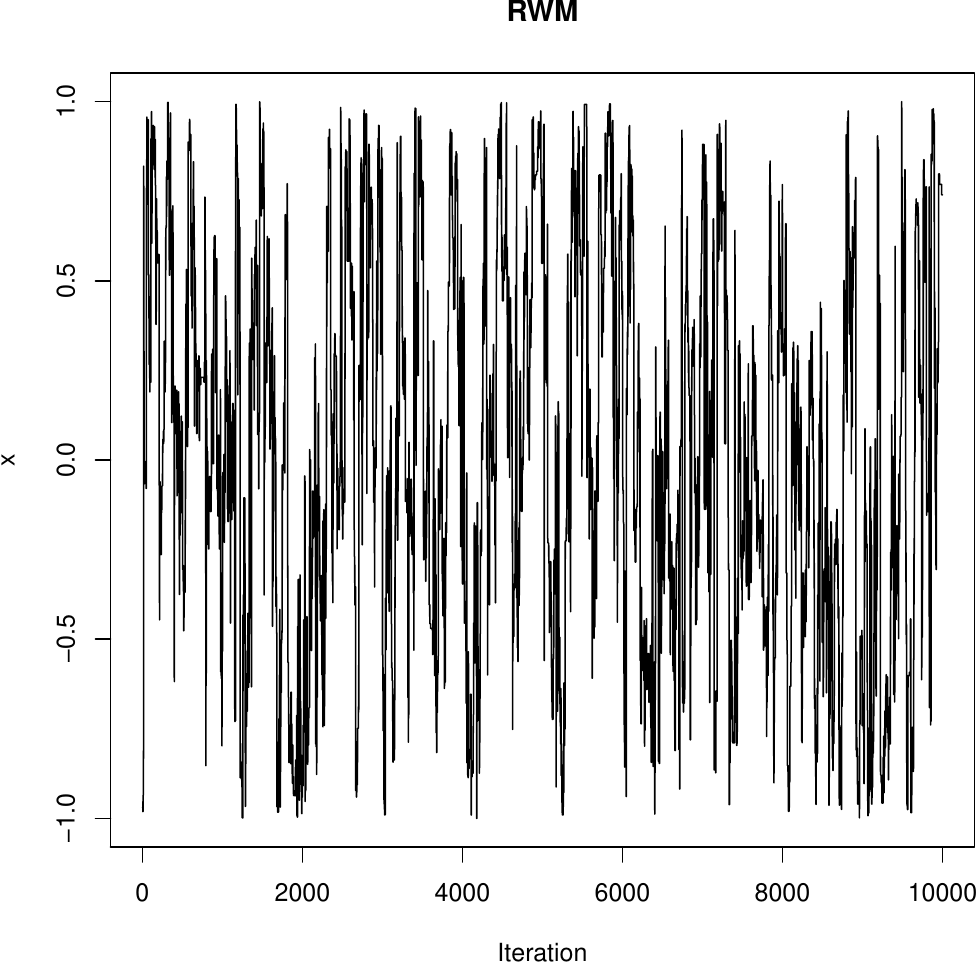}}
\hspace{-2mm}
\subfigure [RWM ACF.]{ \label{fig:rwm3}
\includegraphics[width=5cm]{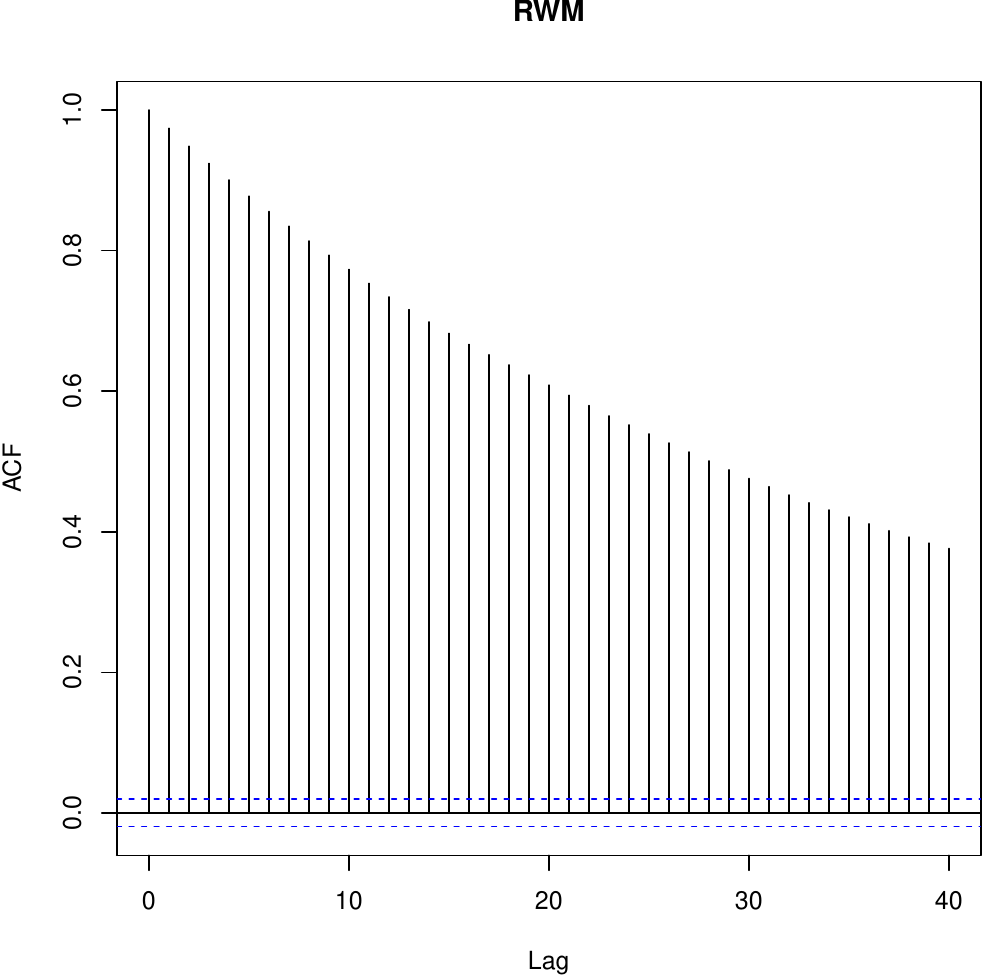}}\\
\subfigure [TMCMC histogram and true density.]{ \label{fig:tmcmc1}
\includegraphics[width=5cm]{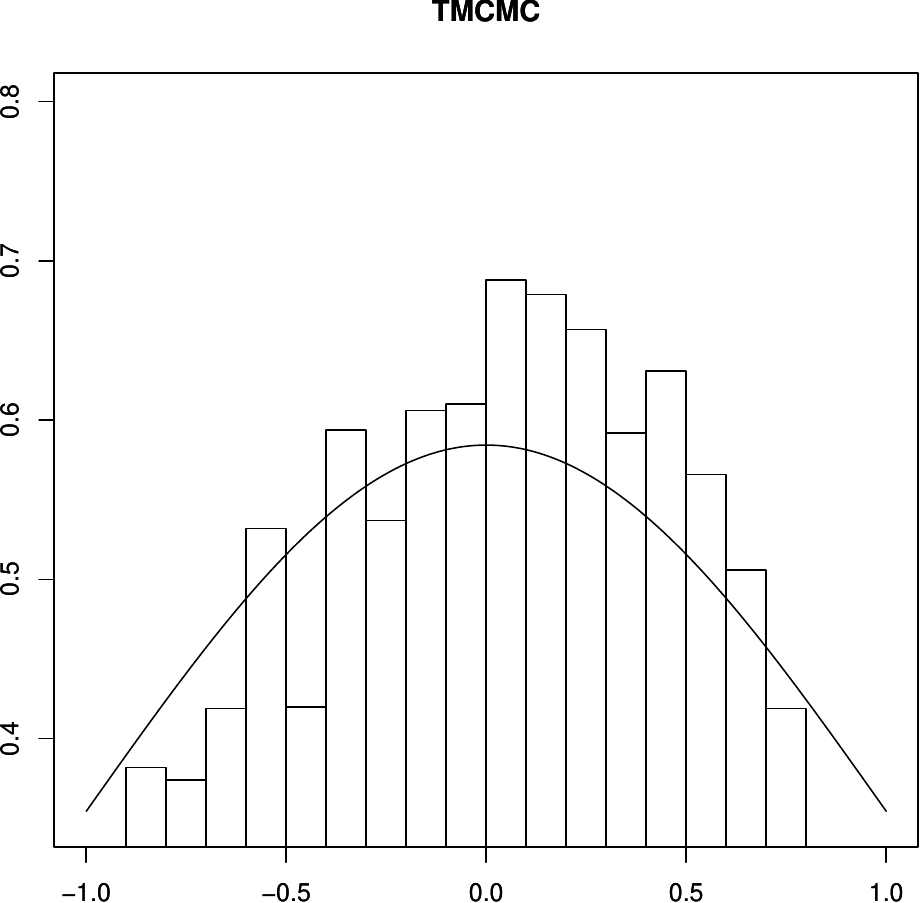}}
\subfigure [TMCMC traceplot.]{ \label{fig:tmcmc2}
\includegraphics[width=5cm]{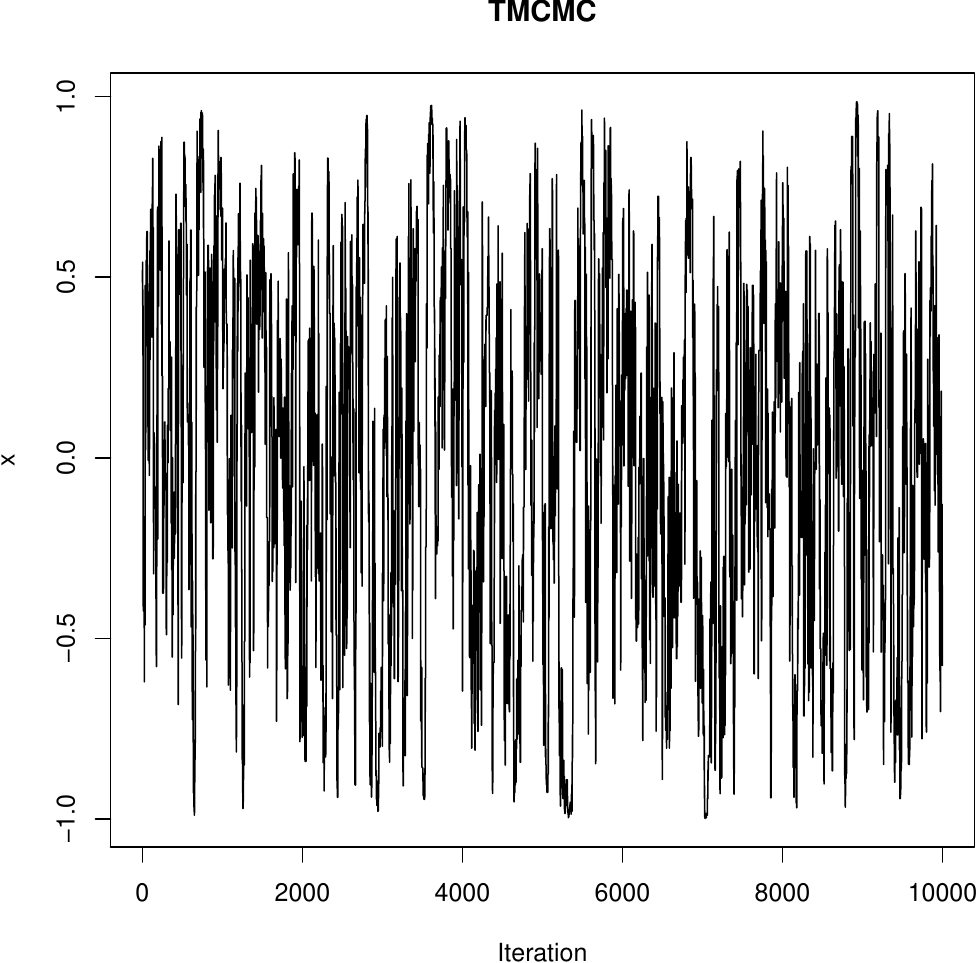}}
\subfigure [TMCMC ACF.]{ \label{fig:tmcmc3}
\includegraphics[width=5cm]{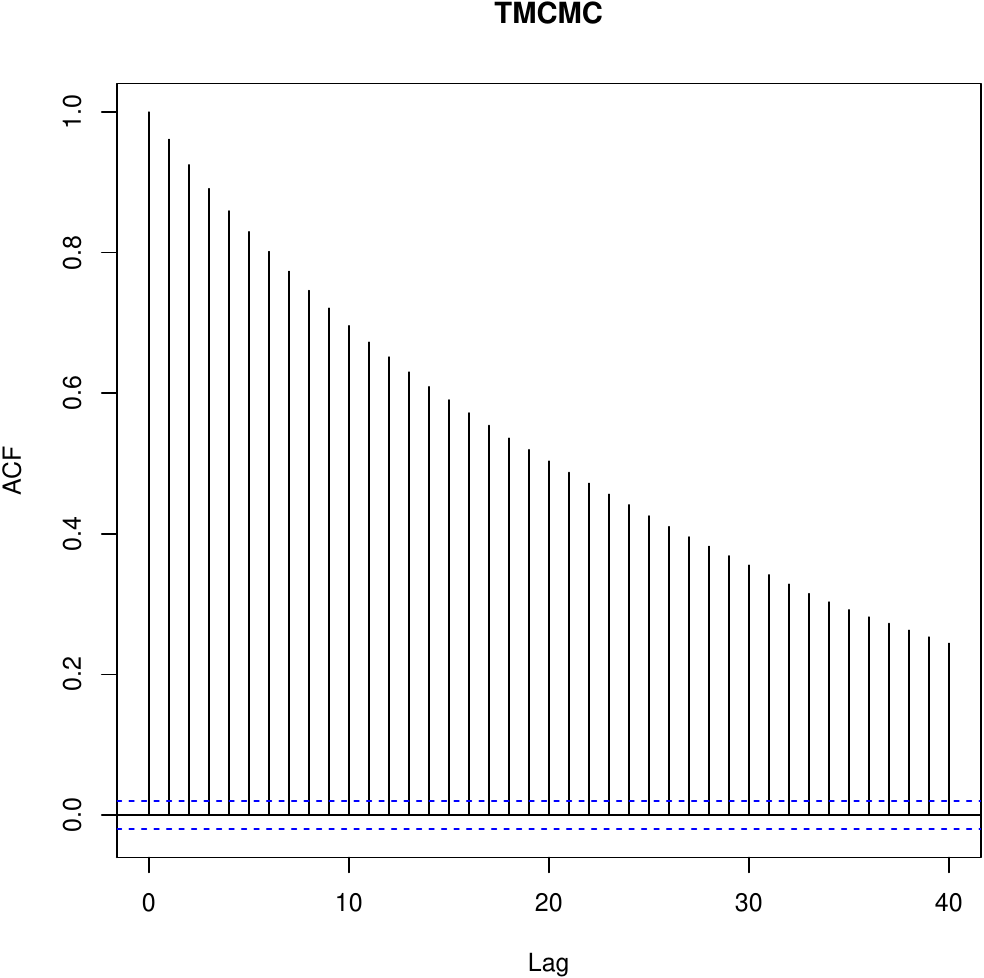}}
\caption{ The upper panels (a), (b) and (c) show the RWM based histogram and the true target density $N(0,1)$ truncated
on $(-1,1)$, traceplot and the autocorrelation
functions respectively for $d=10$, with scale $1.6/(d\log d)$, based on the first 10,000 samples.
The lower panels (d), (e) and (f) display the TMCMC based plots of the same 
for $d=10$, with scale $2.934/\sqrt{d}$.
}
\label{fig:comparison}
\end{figure}

\section{Comparison of our optimal scaling theory with slice sampling}
\label{sec:slice}
Slice sampling is a well-known methodology of introducing auxiliary variables that aid in Gibbs sampling.
The general algorithm is associated with the factorization of the density $f(x)$ as $f(x)\propto\prod_{i=1}^mf_i(x)$,
where $f_i(x)$ are positive functions that need not be densities. Since 
$f_i(x)=\int I_{\left\{0\leq z_i\leq f_i(x)\right\}}dz_i$, it follows that one may introduce the auxiliary variables
$z_1,\ldots,z_m$ such that the joint distribution of $(x,z_1,\ldots,z_m)$ is proportional to
$\prod_{i=1}^mI_{\left\{0\leq z_i\leq f_i(x)\right\}}$, so that the marginal distribution of $x$ is $f$.
For $i=1,\ldots,m$, the full conditonal distribution of $z_i$ given $x$ is the uniform distribution on $[0,f_i(x)]$
and that of $x$ given $z_1,\ldots,z_m$ is the uniform distribution on the slice 
$\left\{y:f_i(y)\geq z_i,~i=1,\ldots,m\right\}$. Thus, a Gibbs sampling strategy can be envisaged for
sampling from the joint distribution of $(x,z_1,\ldots,z_m)$, and then discarding the samples of $z_1,\ldots,z_m$
to store the samples of $x\sim f$. This is the so-called slice sampling strategy, which often
induces good mixing properties for distributions with truncated support. For details, see
\ctn{Neal03}, \ctn{Robert04} and the references therein.
It is thus important to compare TMCMC and RWM based methods with slice sampler.

It is however, to be borne in mind, that it is not in general straightfoward to sample from
the full conditional of $x$ given $z_1,\ldots,z_m$, particularly when $m$ is large.
\ctn{Neal03} attempts to create proposals to deal with this problem but those are very specialized
proposals and are not expected to handle general situations (\ctn{Robert04}). 
Furthermore, \ctn{Roberts03} (see also \ctn{Robert04}) provide an example of a distribution for which
slice sampling performs poorly. 
Indeed, letting $\pi(z)\propto\exp\left(-\|z\|\right)$, where $z\in\mathbb R^d$
and $\|z\|=\sqrt{\sum_{i=1}^dz^2_i}$, note that $x=\|z\|$ is itself a Markov chain and in fact, a slice sampler
Markov chain for the distribution $\pi_d(x)\propto x^{d-1}\exp\left(-x\right)$; $x>0$. Here the factorization
is given by $f_1(x)=x^{d-1}$ and $f_2(x)=\exp\left(-x\right)$.
This is an example where the performance of the slice sampler deteriorates as $d$ increases.
Indeed, as demonstrated in \ctn{Robert04} by simulations, for $d=1$ and $5$, the slice sampler mixes reasonably well
with fast decreasing autocorrelatons but for $d=10$ and particularly for $d=50$, the performance of the slice sampler sharply
deteriorates.

We compare the performances of Gaussian proposal based additive TMCMC and RWM with slice sampler in 
the case of $\pi_d(x)$. For comparability with the results reported in \ctn{Robert04}, in each case we 
consider a sample of size $1000$ for TMCMC and RWM; 
we consider a burn-in of size $1000$ in each case. We tune additive TMCMC and RWM with scales of the form
$\ell/\sqrt{d}$ such that the acceptance rates are approximately $0.439$ and $0.234$ respectively, for $d=1,5,10,50$.
Figures \ref{fig:slice_trace} and \ref{fig:slice_autocorrelation} shows the trace plots and the autocorrelation plots
associated with TMCMC and RWM. Observe that compared to Figure 8.5 of \ctn{Robert04}, the trace plots and the autocorrelation
plots with respect to both TMCMC and RWM indicate much superior performance compared to slice sampler, for each dimension
$d=1,5,10,50$. Moreover, the plots shown in  Figures \ref{fig:slice_trace} and \ref{fig:slice_autocorrelation}
show that, unlike the slice sampler, the performances of TMCMC and RWM do not deteriorate with increasing dimensionality.
We also take this opportunity to compare additive TMCMC and RWM in this example. As shown in Figure 
\ref{fig:slice_autocorrelation}, the autocorrelations based on additive TMCMC decrease faster than those of RWM, for
all the values of $d$ considered; this is in keeping with the visual information offered by the trace plots of
Figure \ref{fig:slice_trace}. We also consider the KS distances between the empirical distribution functions
associated with the first $500$ and the last $500$ iterations after the burn-in period for comparing additive TMCMC and RWM.
Table \ref{table:slice} shows that the KS distances associated with TMCMC are smaller than those of RWM for all the values
of $d$ considered. Thus, RWM is again outperformed by TMCMC, while slice sampling performs the worst in this example.
The numerical results, in conjunction with the difficulty of implementation of slice samplers in complex problems, 
certainly leads us to recommend TMCMC for superior performances in general situations.

\begin{figure}
\subfigure [TMCMC traceplot for $d=1$.]{ \label{fig:slice_tmcmc1}
\includegraphics[height=4cm,width=7cm]{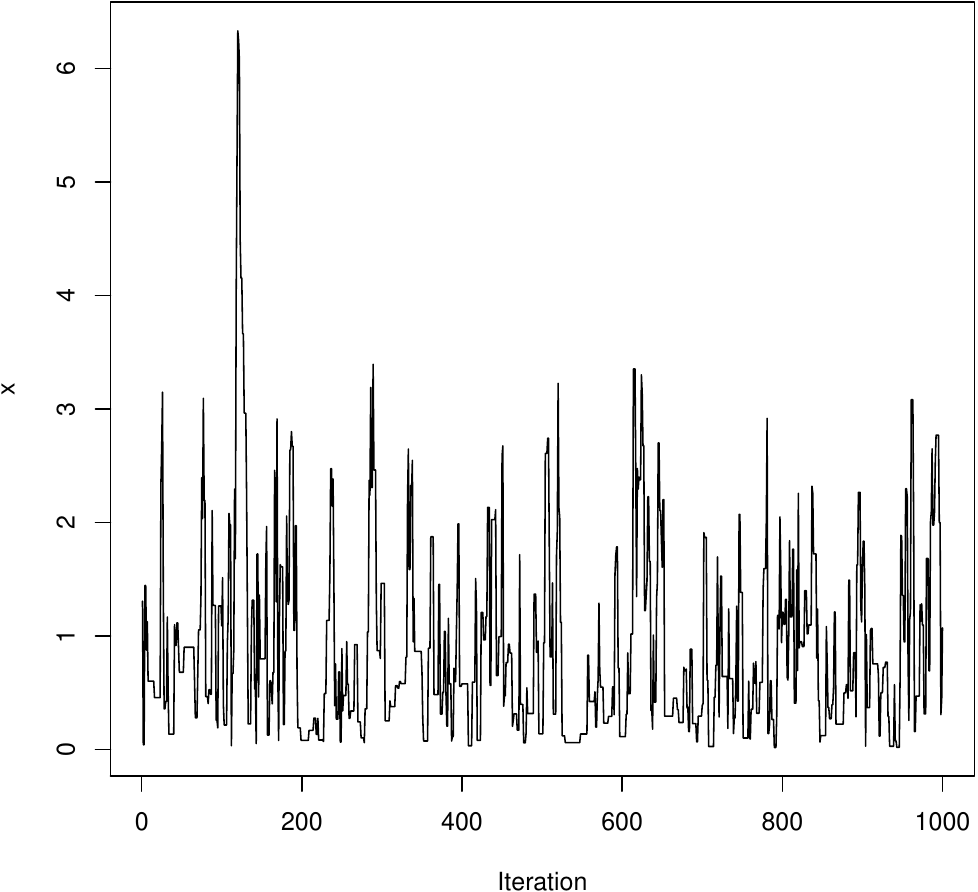}}
\hspace{2mm}
\subfigure [RWM traceplot for $d=1$.]{ \label{fig:slice_rwm1}
\includegraphics[height=4cm,width=7cm]{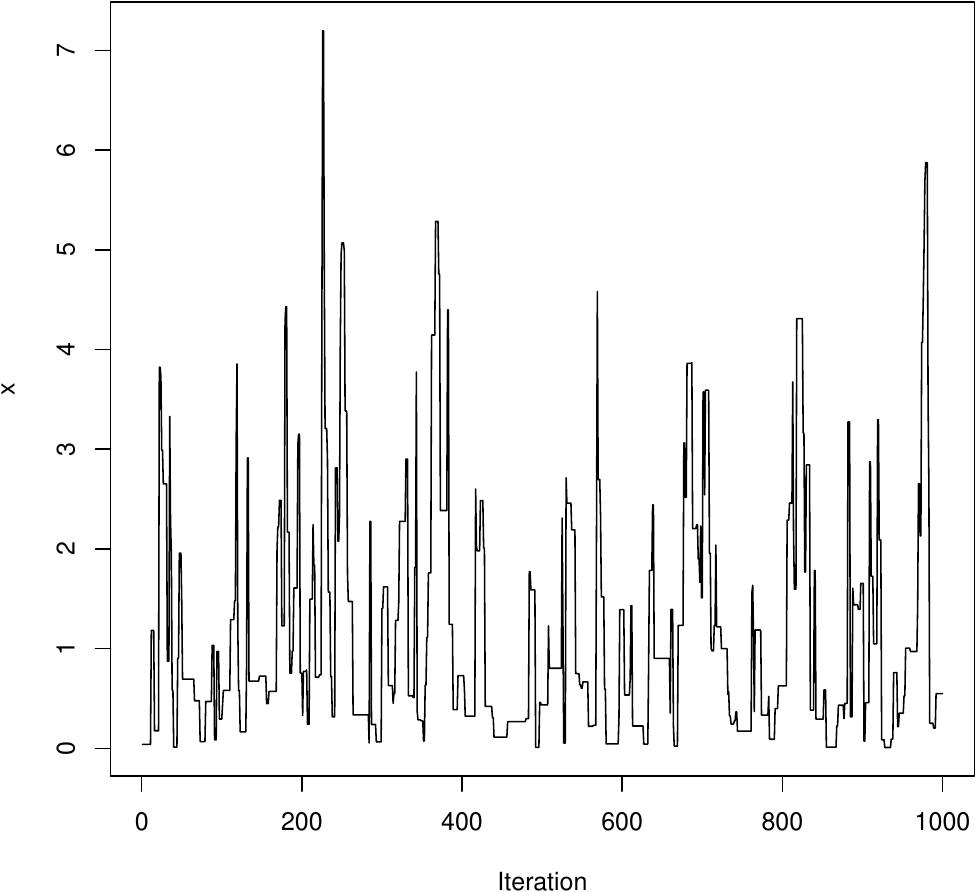}}\\
\vspace{2mm}
\subfigure [TMCMC traceplot for $d=5$.]{ \label{fig:slice_tmcmc5}
\includegraphics[height=4cm,width=7cm]{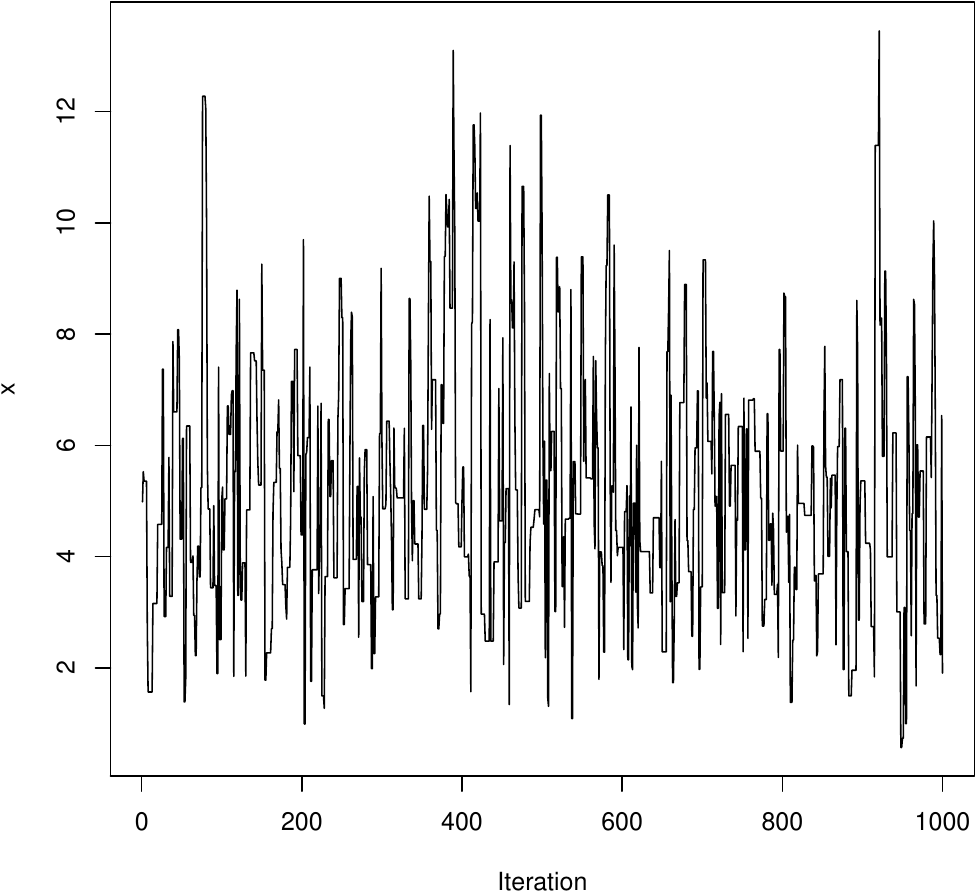}}
\hspace{2mm}
\subfigure [RWM traceplot for $d=5$.]{ \label{fig:slice_rwm5}
\includegraphics[height=4cm,width=7cm]{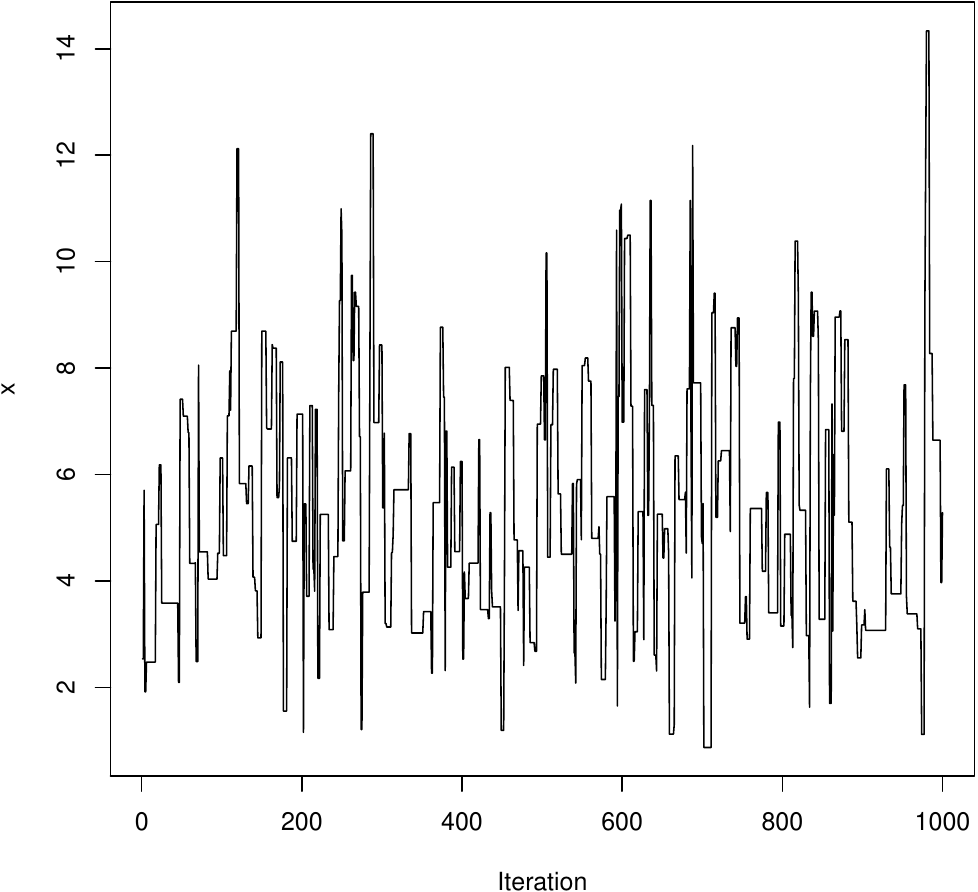}}\\
\vspace{2mm}
\subfigure [TMCMC traceplot for $d=10$.]{ \label{fig:slice_tmcmc10}
\includegraphics[height=4cm,width=7cm]{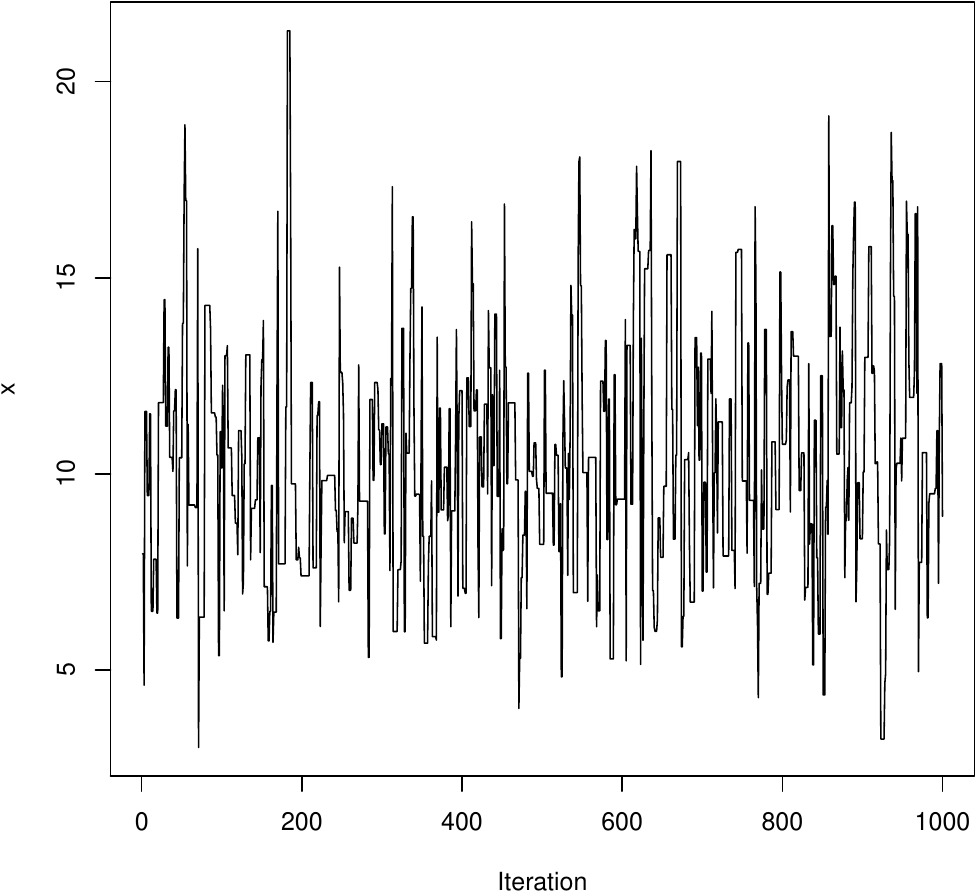}}
\hspace{2mm}
\subfigure [RWM traceplot for $d=10$.]{ \label{fig:slice_rwm10}
\includegraphics[height=4cm,width=7cm]{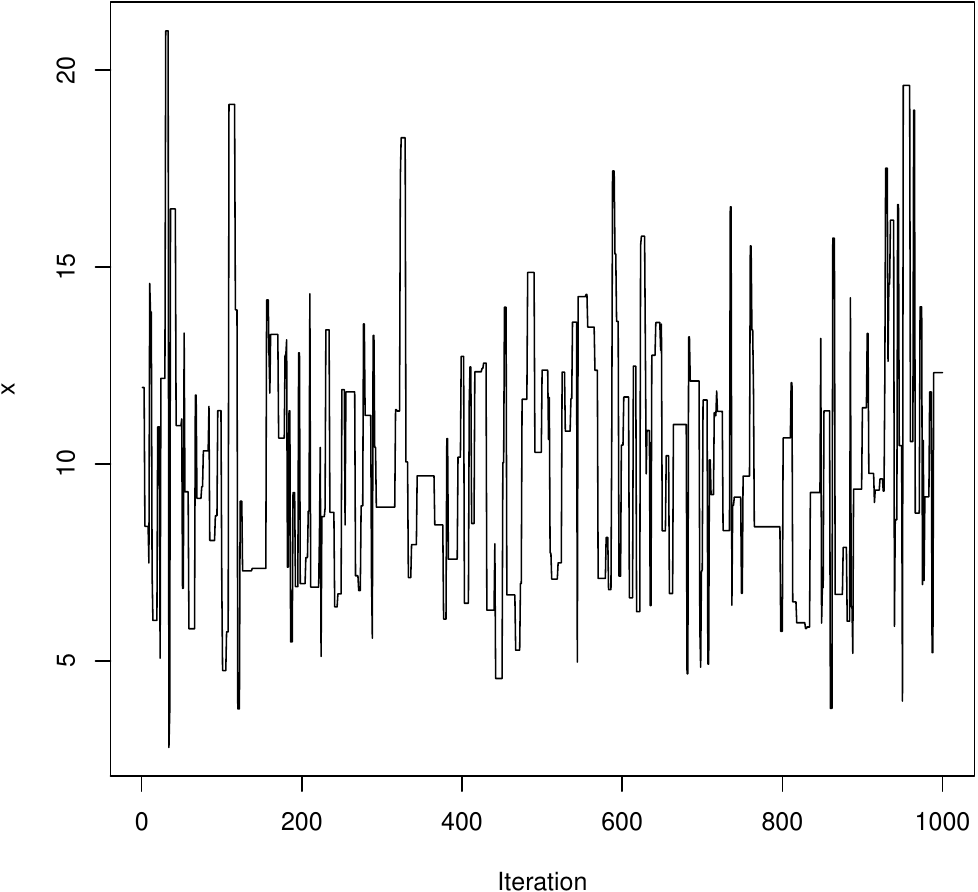}}\\
\vspace{2mm}
\subfigure [TMCMC traceplot for $d=50$.]{ \label{fig:slice_tmcmc50}
\includegraphics[height=4cm,width=7cm]{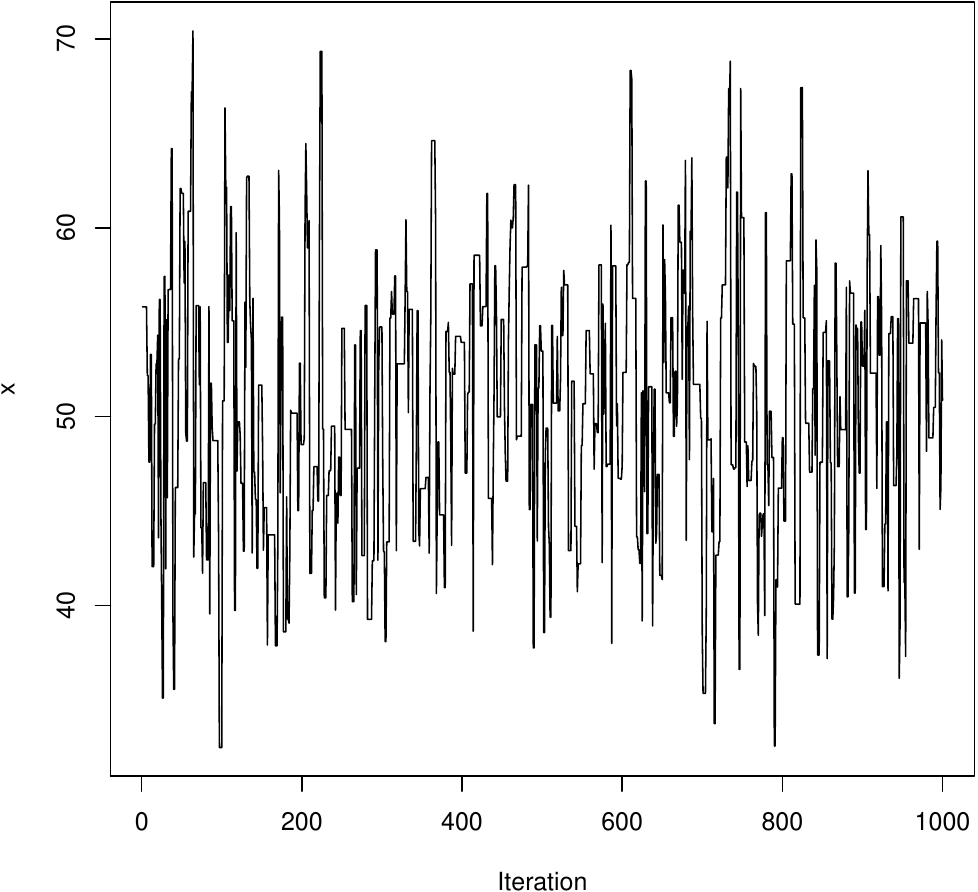}}
\hspace{2mm}
\subfigure [RWM traceplot for $d=50$.]{ \label{fig:slice_rwm50}
\includegraphics[height=4cm,width=7cm]{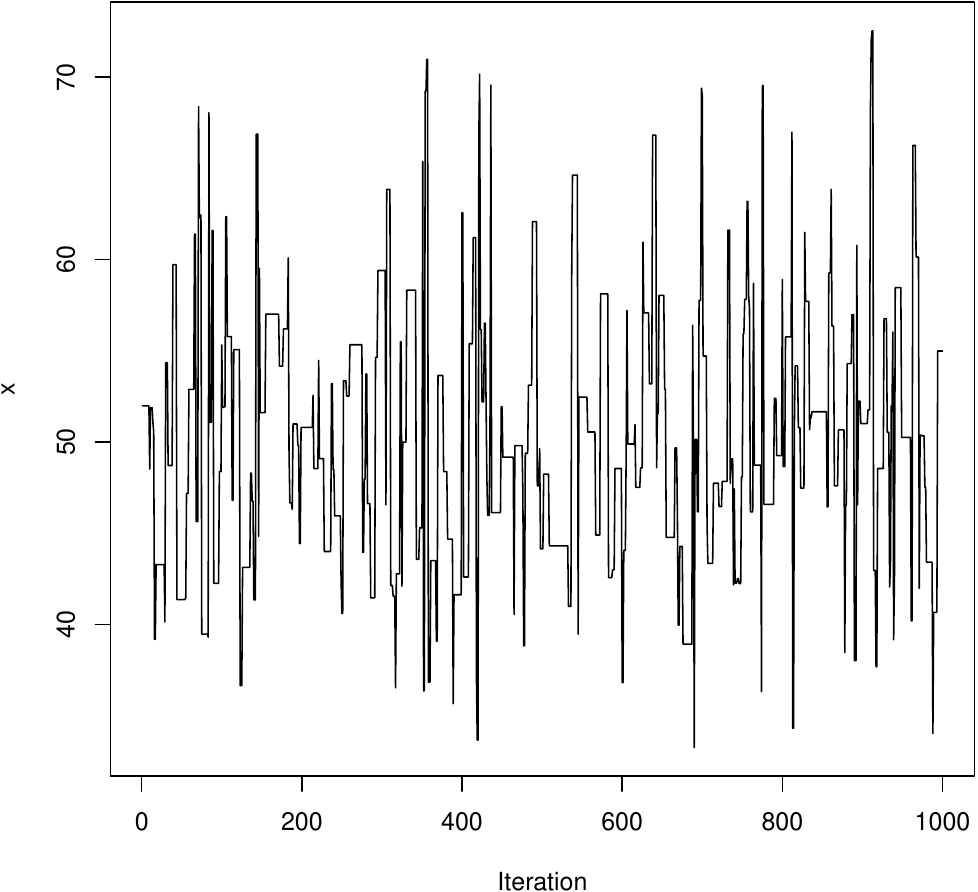}}\\
\caption{TMCMC and RWM based plots for $\pi_d(x)\propto x^{d-1}\exp\left(-x\right)$; $x>0$, 
with scales of the form $\ell/\sqrt{d}$, for $d=1,5,10,50$.
}
\label{fig:slice_trace}
\end{figure}

\begin{figure}
\subfigure [TMCMC vs RWM autocorrelations for $d=1$.]{ \label{fig:slice_autocorr_tmcmc1}
\includegraphics[height=6cm,width=7cm]{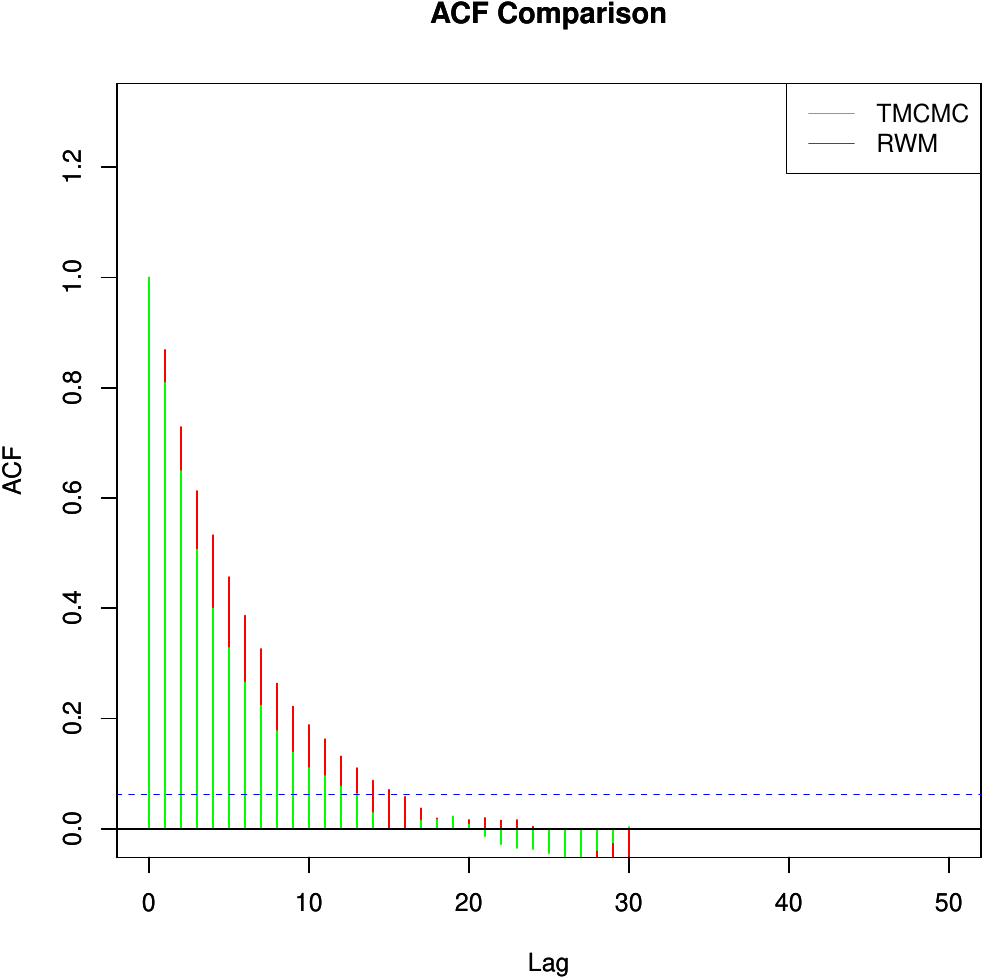}}
\hspace{2mm}
\subfigure [TMCMC vs RWM autocorrelations for $d=5$.]{ \label{fig:slice_autocorr_tmcmc5}
\includegraphics[height=6cm,width=7cm]{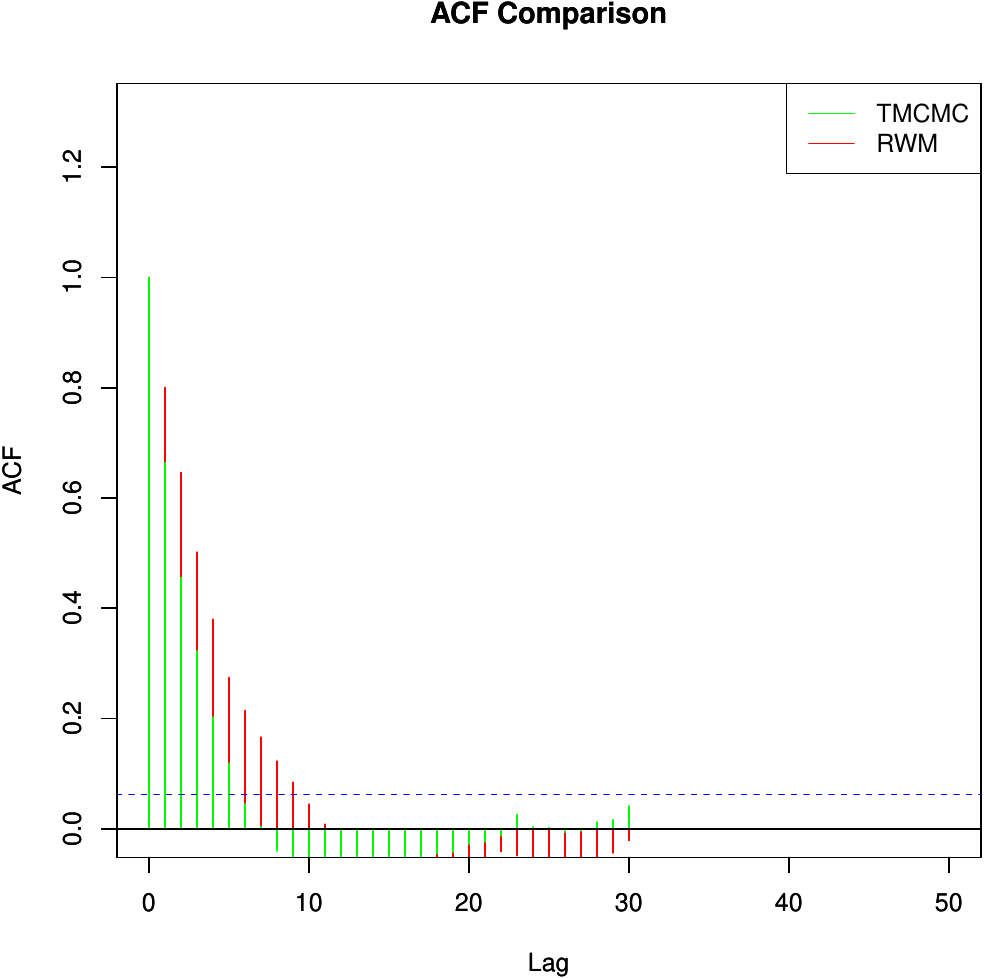}}\\
\vspace{2mm}
\subfigure [TMCMC vs RWM autocorrelations for $d=10$.]{ \label{fig:slice_autocorr_tmcmc10}
\includegraphics[height=6cm,width=7cm]{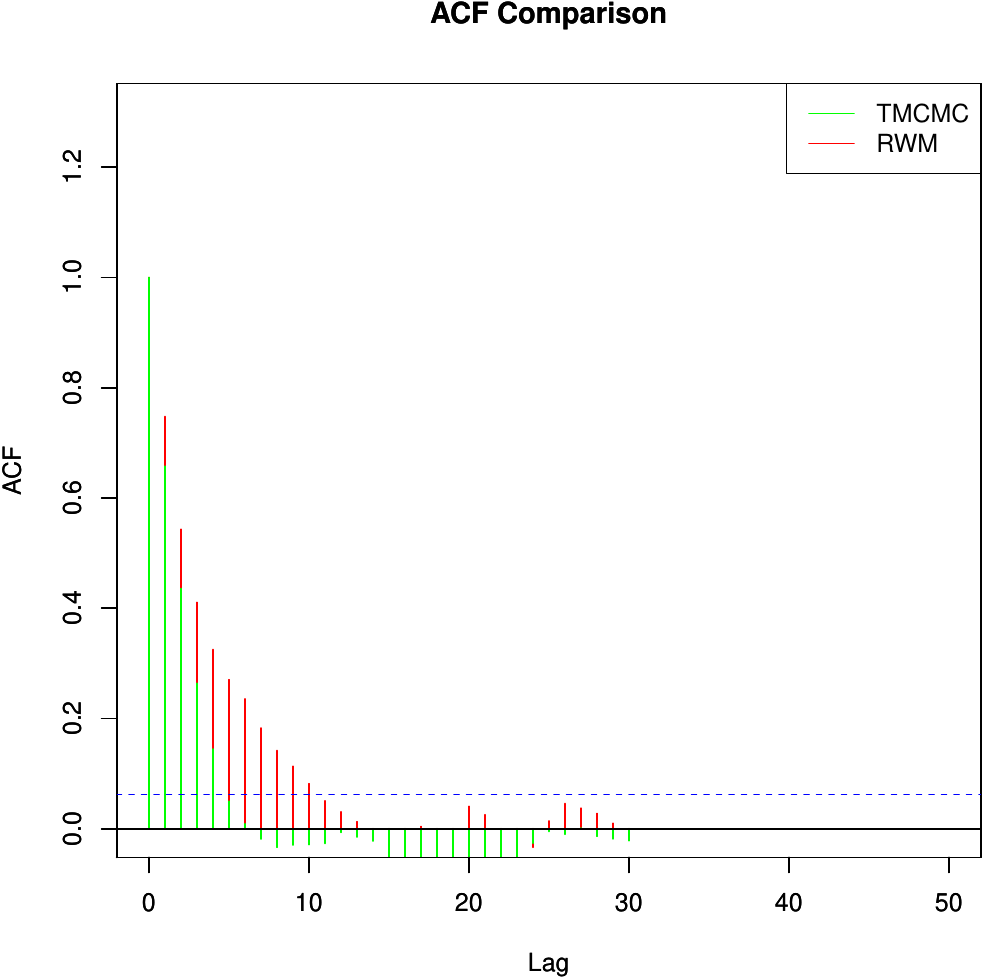}}
\hspace{2mm}
\subfigure [TMCMC vs RWM autocorrelations for $d=50$.]{ \label{fig:slice_autocorr_tmcmc50}
\includegraphics[height=6cm,width=7cm]{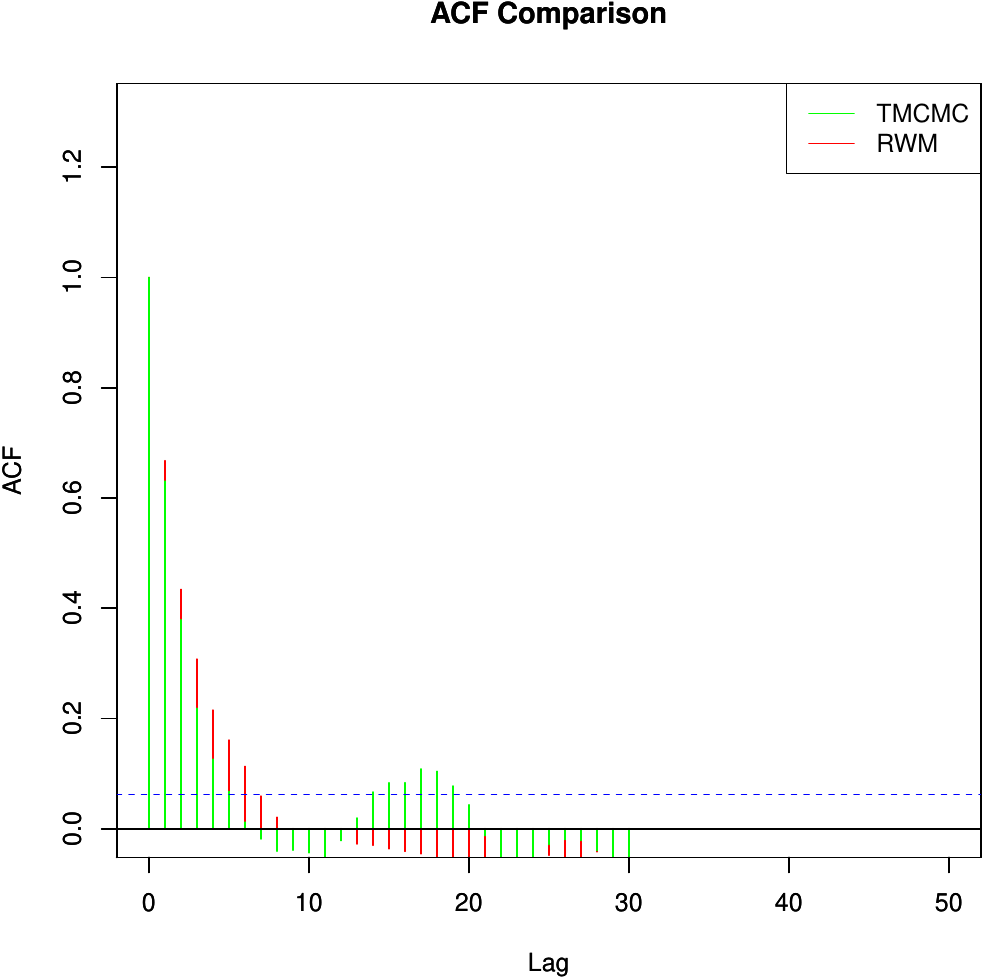}}\\
\caption{TMCMC and RWM based autocorrelation plots for $\pi_d(x)\propto x^{d-1}\exp\left(-x\right)$; $x>0$, 
with scales of the form $\ell/\sqrt{d}$, for $d=1,5,10,50$.
}
\label{fig:slice_autocorrelation}
\end{figure}

\begin{table}
\centering
\caption{KS distances: additive TMCMC vs RWM.} 
\begin{tabular}{|c|c|c|c|c|}
\hline
\multirow{2}{*}{\backslashbox{Proposal}{Target}} 
& \multicolumn{4}{|c|}{$\pi_d(x)\propto x^{d-1}\exp\left(-x\right);~x>0$}\\ 
\cline{2-5}
& $d=1$ & $d=5$ & $d=10$ & $d=50$ \\ 
\hline
TMCMC (Gaussian) & 0.102 & 0.078 & 0.142 & 0.086\\
RWM (Gaussian)  & 0.104 & 0.128 & 0.166 & 0.108\\
\hline
\end{tabular}
\label{table:slice}
\end{table}

\section{Summary and conclusion}
\label{sec:conclusion}

In this article, our contribution is two-fold. First, we have attempted to provide a comprehensive
review and discussion of the optimal scaling literature for various approaches of MCMC and contrasted
them with the corresponding versions of TMCMC. Second, and our main contribution, is a novel diffusion based approach 
to optimal scaling of additive TMCMC in non-regular cases, in contrast with the ESJD approach of \ctn{NealRoberts11}developed for RWM. 

Among the non-regular examples,  we have considered non-Gaussian proposal distributions and discontinuous target densities
with bounded support, and have proposed simple extensions of the results of \ctn{Dey13} for non-Gaussian proposals in 
conjunction with the logistic transformation of the random variables with bounded support to map them on the real line and apply
our diffusion results. We then used the It\^{o} formula to revert back to SDE associated with the original bounded
random variables, showing subsequently that the optimal scaling approach based on maximizing diffusion speed remains valid.
For the Cauchy proposal, even though we are still unable to prove the results explicitly, our simulation results
led us to conjecture that optimal scaling and optimal acceptance rate with the Cauchy proposal can be obtained using 
the same recipe discussed in Section \ref{sec:optimal}. 
Comparison with the ESJD
approach of \ctn{NealRoberts11} for RWM showed that the complexity of RWM with the Cauchy proposal is much higher 
than that of additive TMCMC. 
The effect of much lesser
complexity of additive TMCMC is reflected in our simulation based comparison between
RWM and additive TMCMC with respect to the Cauchy proposal in the case of truncated normal target, where TMCMC 
outperforms RWM.
Our other simulation studies with target distributions taken to be a $t$ distribution with $5$ degrees of freedom,
a distribution with exponential tails, the posterior distribution associated with mixture of Weibull distributions,
all demonstrate additive TMCMC to be a far superior algorithm compared to RWM. Comparison of additive TMCMC and RWM 
with a slice sampler in the case of a $d$-dimensional density not only demonstrated that the former two are much more
effective compared to the popular slice sampling method, but also re-established the superiority of additive TMCMC
over RWM.

Although our results are with respect to target distributions that are products of {\it iid} densities, we are hopeful that the ideas and the results will go through even in the case of target densities that are products of independent but non-identical densities, as considered in \ctn{Dey13} and \ctn{Bedard2007}, as long as the individual densities have the same support. The same ideas are also expected to carry over to TMCMC within Gibbs algorithms, as considered in \ctn{Dey13}.

\section*{Acknowledgment}
We are sincerely grateful to the two reviewers whose constructive comments have led to a much improved version
of our manuscript.

\section*{Appendix}
\subsection*{HMC is a special case of TMCMC}

Let us denote the $L$-step leap-frog transformation in the HMC algorithm \ref{algo:hmc} 
associated with $(x_2,p_2)$ be denoted by $T_L$. Then $(x_2,p_2)=T_L(x,p_1)$, and in the TMCMC notion,
is the forward transformation, given $p_1\sim N(0,M)$. For convenience, we further consider the step 
$(x_2,p_2)\rightarrow (x_2,-p_2)$.
Thus, slightly abusing notation, we define $T_L$ to be the $L$-step leap-frog transformation applied to $(x,p_1)$
yielding $(x_2,p_2)$; then negating $p_2$ to finally yield $(x_2,-p_2)$. In practice, this negation is unnecessary
due to symmetry of $N(0,M)$ (see, for example, \ctn{Neal11}), which is why we did not mention this step
in Algorithm \ref{algo:hmc}.
To reach $(x,p_1)$ from $(x_2,-p_2)$, we draw $-p_2\sim N(0,M)$, and then apply $T_L$ to $(x_2,-p_2)$ to first
obtain $(x,-p_1)$ by running $(x_2,-p_2)$ forward for $L$ leap-frog steps (see \ctn{Liu01}, \ctn{Neal11}), and
then negating the resulting momentum to get back $(x,p_1)$. The Jacobian of the transformation is 1, thanks
to its volume-preserving property (see \ctn{Liu01}, \ctn{Neal11}). It is easy to see that detailed balance holds for
this algorithm, and that irreducibility and aperiodicity also hold.


The above arguments show that only the forward move is necessary to move back and forth in the state space.
In fact, the forward move $T_L$ itself acts as the backward move given $-p_2\sim N(0,M)$. Moreover, $T_L$
acts simultaneously on the entire set of state variables, as both the forward and backward move. 
Recall that TMCMC makes use of random indicator variables that associate the forward transformation with $+1$ and
the backward transformation with $-1$.
However, since the backward move is also the forward move
here, such indicator is unnecessary for HMC.
Also note that the momentum variable acts as the vector $\be=(\epsilon_1,\ldots,\epsilon_d)'$ associated
with TMCMC. Note that the momentum variable can not be a singleton unlike general TMCMC algorithms and must be
of the same dimensionality as $x$, but this is certainly allowed by the general TMCMC theory; see \ctn{Dutta13}. 

Thus, the leap-frog based transformation $T_L$ simplifies several issues of the general TMCMC methodology
while subscribing to its basic philosophy. Hence HMC can be viewed as a special case of TMCMC.


\bibliography{irmcmc}

\end{document}